\newtheorem{theorem}{Theorem}[section]
\newtheorem{lemma}{Lemma}[section]
\newtheorem{corollary}{Corollary}[section]
\newtheorem{definition}{Definition}[section]
\def\angstrom{Å}
\newcommand{\prlt}{\xi}
\newcommand{\BL}{\mathcal{B}}
\newcommand{\Ls}{\mathscr{L}}
\newcommand{\conj}[1]{\overline{#1}}
\newcommand{\bw}{W}
\newcommand{\bws}{W_s}
\newcommand{\Hil}{\mathscr{H}}
\begin{document}

\title{Ground and excited-state energies with analytic errors and short time evolution on a quantum computer}

\author{Timothy Stroschein}
\affiliation{ETH Zurich, Department of Chemistry and Applied Biosciences, Vladimir-Prelog-Weg~2, 8093 Zurich, Switzerland}

\author{Davide Castaldo}
\affiliation{ETH Zurich, Department of Chemistry and Applied Biosciences, Vladimir-Prelog-Weg~2, 8093 Zurich, Switzerland}

\author{Markus Reiher}
\email{mreiher@ethz.ch}
\affiliation{ETH Zurich, Department of Chemistry and Applied Biosciences, Vladimir-Prelog-Weg~2, 8093 Zurich, Switzerland}

\date{April 19, 2025}

\begin{abstract}
Accurately solving the Schr\"odinger equation remains a central challenge in computational physics, chemistry, and materials science. Here, we propose an alternative eigenvalue problem based on a system's autocorrelation function, avoiding direct reference to a wave function. In particular, we develop a rigorous approximation framework that enables precise frequency estimation from a finite number of signal samples. Our analysis builds on new results involving prolate spheroidal wave functions and yields error bounds that reveal a sharp accuracy transition governed by the observation time and spectral density of the signal. These results are very general and thus carry far.  As one important example application we consider the quantum computation for molecular systems. By combining our spectral method with a quantum subroutine for signal generation, we define quantum prolate diagonalization (QPD) --- a hybrid classical-quantum algorithm. QPD  simultaneously estimates ground and excited state energies within chemical accuracy at the Heisenberg limit. An analysis of different input states demonstrates the robustness of the method, showing that high precision can be retained even under imperfect state preparation.
\end{abstract}

\maketitle

\section{Introduction}

The efficient solution of eigenvalue problems for large matrices\cite{langhoff1974configuration, van2003iterative, heller2008inflationary} is a foundational task in many areas of natural\cite{bathe1973solution,pederson2022large} and social sciences\cite{brin1998anatomy, ramachandran2005support}. In recent years, significant efforts have been directed toward mapping an abundance of eigenvalue problems onto quantum Hamiltonians\cite{stamatopoulos2024derivative, babbush2023exponential, li2018quantum, pravatto2021quantum}, 
enabling the exploitation of quantum computers for their solution. This surge of interest stems from the initial insight that a programmable quantum device could prevail over the exponential complexity that hampers classical simulations of quantum systems\cite{benioff1980computer, feynman1986quantum}.
Looking more closely at problems in (bio)chemistry\cite{baiardi2023quantum, mcardle2020quantum, goings2022reliably}, catalysis\cite{von2021quantum}, and condensed matter physics\cite{farajollahpour2025quantum} and materials science\cite{liu2022prospects, bauer2020quantum}, realizing these advantages\,\cite{reiher2017elucidating} 
remains challenging due to limitations in current quantum hardware. In particular, despite progress in architecture design\,\cite{vigneau2025quantum, bartolucci2023fusion, aghaee2025scaling, pino2021demonstration}, the necessity of error correction\,\cite{lidar2013quantum} introduces substantial overheads\,\cite{fowler2012surface, lee2024low, gidney2024magic}.

These constraints are especially evident in the implementation of the quantum phase estimation (QPE) algorithm\,\cite{cleve1998quantum}, which is known for achieving the Heisenberg limit in precision scaling\,\cite{giovannetti2006quantum}. Despite its theoretical appeal, even the most recent estimates obtained for the simulation of molecular systems show that QPE suffers from prohibitively high gate counts and qubit requirements\,\cite{dalzell2023quantum}. 
These estimates originate from three key factors: (i) high connectivity requirements due to the multicontrolled evolutions between two different qubit registers make the circuits deep and fragile with respect to noise\,\cite{nelson2024assessment}, (ii) it is necessary to prepare quantum states with sufficiently high overlap with the target state
\cite{erakovic2025high, fomichev2024initial, ollitrault2024enhancing, lee2023evaluating},
and (iii) despite their beneficial formal scaling with system size\,\cite{berry2014exponential, low2019hamiltonian, somma2025shadow}, leading time-evolution methods still demand excessively deep circuits.

These factors have motivated the exploration of alternative approaches that rely on shallower circuits. The first example of a quantum algorithm using a single-ancilla quantum circuit (i.e., the Hadamard test) was given by Kitaev\,\cite{kitaev1997quantum}. Since then, many advances have been made to develop faster and more accurate algorithms, which we refer to as \textit{post}-Kitaev 
approaches\,\cite{ding2023even,ding2023simultaneous,li2023adaptive,stroeks2022spectral,yi2024quantum,gunther2025phase,dutkiewicz2022heisenberg,dutkiewicz2024error}. In particular, the current state-of-the-art single-ancilla phase estimation algorithms are able to satisfy the Heisenberg limit (i.e., $T_{\text{runtime}} = \mathcal{O}(\epsilon^{-1})$) while requiring shorter maximal time evolutions $T_{\text{max}}$ compared to standard QPE. The latter exhibits a scaling of $T_{\text{max}} = \mathcal{O}(\epsilon^{-1})$ and satisfies $T_{\text{max}} \epsilon \approx \pi$, whereas the very recent result of Ref.\,\cite{ding2024quantum} demonstrates a scaling of $T_{\text{max}} = \mathcal{O}(\epsilon^{-1.5})$ with $T_{\text{max}} \epsilon \ll \pi$. 

The methods mentioned above\,\cite{ding2023even, ding2023simultaneous, li2023adaptive, stroeks2022spectral, yi2024quantum, gunther2025phase, dutkiewicz2022heisenberg, dutkiewicz2024error, ding2024quantum} 
are particularly intriguing because they often rely on a formal link between QPE and the signal recovery problem. This connection is also the first key ingredient of the present work and will be discussed in depth in the next section.
A second key ingredient is a thorough exploration of the relationship between signal recovery and subspace methods. A clear equivalence between the two has been established in Refs.\,\cite{neuhauser_bound_1990, mandelshtam_fdm_2001}, where filter diagonalization has been introduced. In the context of quantum algorithms development the connection between QPE and subspace methods has been first made in  Refs.\,\cite{klymko2022real, shen2023real, cortes2022quantum} while Refs.\,\cite{parrish2019quantum,cohn2021quantum} first developed a quantum subspace method inspired by filter diagonalization. 

In this work, we analyze subspace methods through the lens of a post-Kitaev algorithm, characterizing $T_{\text{max}}$, the number of samples and shots per sample needed to achieve a given error estimate $\epsilon$; this perspective differs from previous works in the literature\,\cite{epperly2022theory, shen2023real, kirby2024analysis}. We make use of a new framework to characterize errors in the accuracy and dimensionality of subspace methods\,\cite{stroschein2024prolatespheroidalwavefunctions, stroschein2025approximationframeworksubspacebasedmethods} to derive tight error bounds for an improved version of the filter diagonalization method. 
Our framework builds upon and significantly expands the field of prolate Fourier theory\,\cite{ProI, ProII, ProIII, ProIV, ProV, OnBandwidth, landau_density_1993, SlepianComment}.

With regard to filter diagonalization, we highlight Ref.\,\cite{levitina_filter_2009}, which was the first to employ prolates as filter functions and which made significant progress in detecting the correct number of frequencies within a band.
We build upon their work by introducing new truncation estimates and providing several results that formalize the approach. On the numerical side, the implementation in Ref.\,\cite{levitina_filter_2009} did not account for catastrophic cancellation effects in the generation of prolate functions, a problem, which was addressed and resolved in Ref.\,\cite{Buren2002AccurateCO} and accounted for in our implementation.

All our contributions are enabled by key results in prolate Fourier theory, which we employed to provide a rigorous error analysis of the estimated eigenvalues. We note that these results are of general significance and will therefore have a broader impact, beyond their immediate application to quantum algorithms.
Eventually, we introduce sampled prolate filter diagonalization (PFD) for extracting spectral information from a finite number of signal samples. By generating these samples on a quantum computer via the Hadamard test, we arrive at a hybrid classical-quantum algorithm which we call quantum prolate diagonalization (QPD).

This manuscript is organized as follows: In Section\,\ref{theory}, we develop the theoretical framework underlying this work and summarize the main results. Specifically, the core idea is to shift focus from solving the time-independent Schr\"odinger equation to a related eigenvalue problem involving the autocorrelation function $C(t)= \langle \Psi(t)|\Psi(0) \rangle$.
This eigenvalue problem is the conceptual bridge between post-Kitaev QPE methods and subspace techniques. We solve this spectral problem using prolate filter diagonalization  (Section\,\ref{sec:PFD}), which we review within the general framework of subspace protocols (Section \,\ref{sec:SubspaceProtocols}). Notably, in Section\,\ref{sec:PFD} we unveil a phase transition-like behavior of the accuracy of the protocol crucially depending on the frequency density within the spectral region of interest. 
Subsequently, we discretize prolate filter diagonalization to enable efficient computations which only require a finite number of signal samples (Section\,\ref{sec:sampledPFD}). In Sections\,\ref{sec:sampledPFD}–\ref{sec:prolate_sampling}, we introduce improved truncation bounds for the prolate sampling formula, originally proposed by Walter and Shen\,\cite{walter_sampling_2003}.
Section\,\ref{numerics} presents the numerical results of our QPD implementation applied to the electronic Hamiltonian. We analyze the accuracy scaling behavior of the protocol for both weakly and strongly correlated molecules. We show that our method reaches chemical accuracy (i.e., the error $\epsilon \leq 1\,\mathrm{mHartree}$) and satisfies the Heisenberg limit (Sections\,\ref{sec:mee_estimate}–\ref{sec:heisenberg_limit}). Finally, we investigate the impact of the initial quantum state on performance of QPD to assess the potential quantum advantage offered by our approach (Section\,\ref{input_state_study}).

\section{Theory\label{theory}}

In this section, we introduce our framework for rigorous frequency approximation of signals with dense spectra requiring only a finite number of equidistant samples. Our method facilitates efficient digital computation and provides sharp guarantees on the frequency estimates. Notably, we uncover a phase-transition-like behavior in accuracy: as an \emph{effective spectral density} approaches a critical threshold $T/\pi$
the error increases sharply. As this critical threshold is proportional to the observation time, this transition resembles the Heisenberg uncertainty relation. 

Although the mathematical structure underlying our approach is closely related to the mathematical formulation of quantum mechanics \cite{von_neumann_1927, von_neumann_1927_wahrscheinlichkeit, von_neumann_1929},
we stress that we shift from a traditional wave packet description to a perspective that puts system-specific signals, 
\begin{equation}
C(t) = \int e^{i \lambda t} d \alpha(\lambda),   \label{eq:SigMeasRep}
\end{equation}
to the center of attention. Here, $\alpha$ is a probability measure on $\mathbb{R}$ \cite{bochner1932fourier}. In the standard formalism, such signals arise as quantum autocorrelation functions $C(t)= \langle \Psi | e^{iHt} |\Psi \rangle$.
Furthermore, we assume that the signal is only known within the time interval $[-2T, 2T]$. 

To determine frequencies of a system we do not solve the time-independent Schr\"odinger equation. Instead, we turn to the convolution-based eigenvalue problem
\begin{align}
 - i \partial_\tau (C \ast f) = E (C \ast f).
 \label{eq:SiganlSchrö1}
\end{align}
Here, $f$ is a function in $\mathscr{L}^2(\mathbb{R})$, $E \in \mathbb{R}$, and 
$C \ast f$ denotes the convolution of $C$ with $f$ evaluated in the time variable $\tau$.
This novel spectral equation finds a purely mathematical reason to justify its existence: it facilitates a spectral analysis of signals without assuming an infinite observation time. In this way, it overcomes the greatest limitation of the Fourier transform.
The main result of this work is a mature mathematical framework to approximate solutions of Eq.\ \eqref{eq:SiganlSchrö1}. This eigenvalue  problem is still defined on an infinite-dimensional Hilbert space. It therefore requires approximations for practical computation, which raises fundamental questions on the optimal relationship between observation time, spectral complexity, computational dimensionality and achievable accuracy. 

A seminal series of papers by Slepian, Landau, and Pollak has reached deep insights towards achieving such optimal approximation \cite{ProI, ProII, ProIII, ProIV, ProV, OnBandwidth, landau_density_1993, SlepianComment}. These authors discovered the remarkable properties of prolate spheroidal wave functions, which allowed them to prove that the space of $W$-band-limited and $T$-time concentrated functions is well approximated in essentially $2WT/\pi$ degrees of freedom. 
We refer to the field that their series of papers has initiated as prolate Fourier theory (PFT).

In this work we apply and extend the rich set of mathematical properties of prolate spheroidal wave functions to develop a method for frequency estimation that can access fundamental approximation bounds. We use the commutation relation of an integral operator and differential operator, to derive new concentration identities of prolates. The new identities enable sharp accuracy guarantees for prolate filter diagonalization (PFD). They further allow us to give a sampling theorem that puts mathematical truth to the long standing intuition that approximately $2WT/\pi$ samples suffice to reconstruct a signal of bandwidth $W$ and high concentration in the time interval $[-T,T]$ with high accuracy. The prolate sampling formula is applied to achieve a discretization of PFD, that only assumes a finite number of samples of the signal and guarantees high accuracy. 
To clarify the structure of our theory, Figure~\ref{fig:dependency_graph} shows the logical dependencies between key results and developments within our framework.

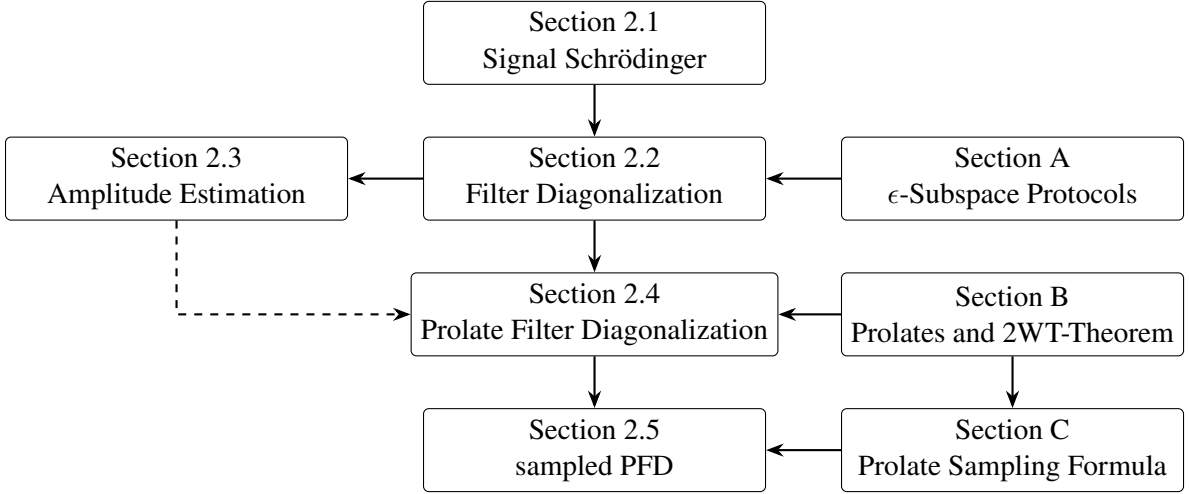
\begin{figure}[H]

\centering
\begin{tikzpicture}[
  node distance=1.6cm and 4.2cm,
  every node/.style={draw, rounded corners=2pt, minimum width=4.5cm, minimum height=1.1cm, align=center},
  arrow/.style={-{Stealth}, thick},
  dashedarrow/.style={-{Stealth}, thick, dashed}
]

\def\xleft{-5.5}
\def\xcenter{0}
\def\xright{5.5}

\def\yA{3.2}   
\def\yB{1.4}   
\def\yC{-0.4}  
\def\yD{-2.2}  

\node (Amp) at (\xleft,\yB) {Section \ref{sec:Amp}\\ Amplitude estimation};

\node (SigSchro) at (\xcenter,\yA) {Section \ref{sec:sig_schrö}\\ Signal Schrödinger};
\node (FD)       at (\xcenter,\yB) {Section \ref{sec:FilterDiag}\\ Filter diagonalization};
\node (PFD)      at (\xcenter,\yC) {Section \ref{sec:PFD}\\ Prolate filter diagonalization};
\node (sPFD)     at (\xcenter,\yD) {Section \ref{sec:sampledPFD}\\ Sampled PFD};

\node (Subspace)         at (\xright,\yB) {Appendix \ref{sec:SubspaceProtocols}\\ Subspace protocols};
\node (Prolates)         at (\xright,\yC) {Appendix \ref{sec:prolates}\\ Prolates and 2WT-Theorem};
\node (SampledProlates)  at (\xright,\yD) {Appendix \ref{sec:prolate_sampling}\\ Prolate sampling formula};

\draw[arrow] (SigSchro) -- (FD);
\draw[arrow] (FD) -- (PFD);
\draw[arrow] (PFD) -- (sPFD);

\draw[arrow] (Subspace) -- (FD);
\draw[arrow] (Prolates) -- (PFD);
\draw[arrow] (Prolates) -- (SampledProlates);
\draw[arrow] (SampledProlates) -- (sPFD);

\draw[arrow] (FD.west) -- (Amp.east);
\draw[dashedarrow] (Amp.south) |- (PFD.west);

\end{tikzpicture}

\caption{Logical dependency graph of the theory underlying our approach: for the main part of the paper we keep the essential building blocks of our approach, while we cover additional key parts of the theoretical framework in the appendix. The relation between these parts is highlighted in this figure to aid the understanding of our work. Section \ref{sec:Amp} has indirect implications for Section \ref{sec:PFD}.}
\label{fig:dependency_graph}
\end{figure}

\subsection{From signals to quantum dynamics: an inverse approach}
\label{sec:sig_schrö}
Within quantum mechanics, a system is described by a self-adjoint operator $H$ that acts on a Hilbert space $\mathscr{H}$. 
The states of the system are described by time-dependent wave functions $|\Psi(\cdot) \rangle : \mathbb{R} \rightarrow \mathscr{H} $, which evolve according to the time-dependent Schr\"odinger equation:
\begin{align}
    i \partial_t |\Psi(t) \rangle = H |\Psi(t) \rangle.
    \label{eq:Schrö}
\end{align}
For a time-independent Hamiltonian $H$, stationary states are solutions of the form $|\varphi_n(t) \rangle = e^{-iE_n t} |\varphi_n \rangle$, where $|\varphi_n \rangle$ is an eigenvector of $H$,
\begin{align}
    H |\varphi_n \rangle= E_n |\varphi_n \rangle.
    \label{eq:timeindep}
\end{align}
They satisfy in particular
\begin{align}
    i \partial_t |\varphi_n (t) \rangle= E_n |\varphi_n(t) \rangle
\label{eq:timedep}
\end{align}
and their probability distribution remains constant in time. The time evolution of a general wave function is described by its decomposition into stationary states
$|\Psi(t) \rangle = \sum a_n e^{-iE_n t} |\varphi_n\rangle$. 
While fundamental for encoding physical laws, this formalism relies on a high-dimensional 
(precisely, on an infinite-dimensional) Hilbert space and the numerical representation of $H$ and state vectors suffers from the curse of dimensionality. By contrast, experimental observations consist of low-dimensional quantities such as expectation values, low-lying eigenvalues, transition rates, and magnetization. This discrepancy in dimensionality between experimental observation and mathematical description causes a large overhead in computational applications.

To address this imbalance, we adopt an approach focusing directly on signals rather than wave packets.
Specifically, we consider a continuous positive definite function $C(t)$ with $C(0) =1 $ and aim to determine its frequencies while only assuming a finite observation time. In this framework, we do not postulate the wave packet formalism or the Schrödinger equation; instead, these emerge as mathematical auxiliaries to analyze the signal.

We start by justifying Eq.\,\eqref{eq:SiganlSchrö1}
to determine the frequencies of $C$. By Bochner's theorem \cite{bochner1932fourier}, any continuous positive definite function $C$ is characterized by a unique measure $\alpha: \mathcal{B}(\mathbb{R}) \to \mathbb{R}$ such that Eq.\,\eqref{eq:SigMeasRep} holds. We define the spectrum of $C$ as $\sigma(C) := \operatorname{supp}(\alpha)$.
Let $F$ denote the Fourier transform of a function $f\in \mathscr{L}^2(\mathbb{R})$. Inserting Eq.\,\eqref{eq:SigMeasRep} 
into Eq.\,\eqref{eq:SiganlSchrö1} 
and interchanging the integrals yields
\begin{align}
     \int  \lambda e^{i \lambda \tau } F(\lambda) d\alpha(\lambda) \stackrel{!}{=} E  \int  e^{i \lambda \tau } F(\lambda) d\alpha(\lambda).
\end{align}
This equality holds for all $\tau$ if and only if the complex measure $(\lambda - E)F(\lambda)d\alpha(\lambda)$ is zero. For a non-trivial solution, this implies that $E \in \sigma(C)$ and the $\alpha$-essential support of $F$ is the singleton $\{E\}$, i.e., $\operatorname{supp}_{\alpha}(F)=\{E\}$.
For a purely discrete signal $C(t) = \sum_k |a_k|^2 e^{i E_k t}$, this simplifies to the condition:
\begin{align}
    F(E_n) \neq 0 \qquad \text{ and } \qquad F(E_k) = 0  \text{ for all }  E_k \neq E_n.
    \label{eq:zero_conditions}
\end{align}
Hence, the eigenvalue problem in Eq.\,\eqref{eq:SiganlSchrö1} 
corresponds to finding a function with a Fourier transform whose roots coincide with all frequencies except one.

To perform a frequency analysis without assuming an infinite observation time, we restrict $f$ to the subspace $\mathscr{D}_T \subset \mathscr{L}^2(\mathbb{R})$ of functions supported on the interval $[-T,T]$. Consequently, the convolution in Eq.\,\eqref{eq:SiganlSchrö1} 
only requires access to $C(t)$ for a finite duration, allowing us to overcome a major limitation of the standard Fourier transformation.
A non-trivial solution to Eq.\,\eqref{eq:zero_conditions} 
exists in $\mathscr{D}_T$ if and only if the spectrum of $C$ is discrete and has no accumulation point. This follows from the  identity theorem and the fact that the Fourier transform of a time-limited function is entire.

While the restriction to $\mathscr{D}_T$ addresses the finite observation time, directly identifying a function $f$ with the requisite zeros in the Fourier transform remains a challenging continuous search problem. In order to formulate an efficient numerical eigensolver, it is advantageous to map this signal processing task onto the linear algebraic structure of quantum mechanics.

To this end, we note that any positive definite function $C$ with $C(0)=1$ can be interpreted as a correlation function arising from unitary time evolution on a Hilbert space. That is, we can construct a Hilbert space $\mathscr{H}$, a self-adjoint operator $H$ acting on $\mathscr{H}$, and a vector $\Psi \in \mathscr{H}$, such that the spectra of the signal and operator $H$ coincide, $\sigma(C) = \sigma(H)$, and
\begin{align}
C(t) = \langle \Psi, e^{i H t} \Psi \rangle = \int_{\mathbb{R}} e^{i \lambda t} \, d\langle \Psi, P^{(H)}(\lambda) \Psi \rangle. \label{eq:QMinterpret}
\end{align}
Here $P^{(H)}$ denotes the unique projection-valued spectral measure with $H = \int \lambda \, dP^{(H)}(\lambda)$. For a signal with a purely discrete spectrum $C(t) =\sum_k |a_k|^2 e^{i E_k t}$, each frequency $E_k$ is associated with an eigenvector $|\varphi_k \rangle$ of $H$ and a valid construction has $|\Psi\rangle = \sum_k a_k |\varphi_k\rangle$. In particular, $H, |\Psi\rangle$, and $\{ \varphi_k \}$ satisfy the Schr\"odinger equation \eqref{eq:Schrö}--\eqref{eq:timedep} by construction. Hence, we do not postulate the wave packet formalism, but instead construct it as a mathematical auxiliary.
Crucially, this construction remains implicit, effectively hiding the redundant degrees of freedom arising from unitary invariance in explicit basis expansions of $H$. This circumvents the curse of dimensionality of operator representations while preserving access to the relevant spectral data. In Section\,\ref{sec:FilterDiag}
, this perspective provides the rigorous foundation to formulate  filter diagonalization as a subspace protocol (appendix\,\ref{sec:SubspaceProtocols}).

\subsection{Filter diagonalization}
\label{sec:FilterDiag}

Filter diagonalization is a numerical procedure to approximate the frequencies of a signal,
\begin{equation}
    C(t) = \sum_n |a_n|^2 e^{iE_n t},
\end{equation}
within a target band $B_{f,\omega^*}=[-W_f + \omega^*,W_f + \omega^*]$. Without loss of generality, we assume the band is centered at zero (otherwise, we shift the signal $C(t) \mapsto C(t) e^{-i \omega^* t}$).

This method fits naturally into the framework of \emph{subspace protocols}~\cite{stroschein2025approximationframeworksubspacebasedmethods}. A subspace protocol approximates the eigenvalues of an operator $H$ within a specific spectral subspace $\mathcal{E}$ by constructing a set of trial vectors $\{v_l\}$ that span $\mathcal{E}$ (appendix~\ref{sec:SubspaceProtocols}
reviews the structure and algorithmic operation of subspace protocols and restates the master theorem on their accuracy). We construct these trial vectors within the implicit quantum mechanical structure established in Section \ref{sec:sig_schrö}, where the signal is assumed to be generated by a state $|\Psi\rangle = \sum_k a_k |\varphi_k\rangle$ evolving under a Hamiltonian $H$. However, the final protocol requires only the signal $C(t)$ and no explicit access to $H$ or $|\Psi\rangle$ (for a comprehensive derivation see Ref.~\cite[Section 5.2]{stroschein2024prolatespheroidalwavefunctions}).

To target the spectral subspace $\mathcal{E}$ corresponding to the frequency band $B_f$, we apply a set of filter functions $\{f_l(t)\}_{l=0}^{m-1}$ to the trajectory of the state $|\Psi(t)\rangle$. Explicitly, the $l$-th trial vector is generated by the integral transform:
\begin{align}
v_l := \int_{-\infty}^{\infty} f_l(\tau) |\Psi(\tau)\rangle d\tau. \label{eq:trial_vec_integral}
\end{align}
Substituting the eigen-decomposition $|\Psi(\tau)\rangle = \sum_k a_k e^{-i E_k \tau} |\varphi_k\rangle$ into Eq.\,\eqref{eq:trial_vec_integral} yields:
\begin{align}
v_l = \sum_k a_k \left( \int_{-\infty}^{\infty} f_l(\tau) e^{-i E_k \tau} d\tau \right) |\varphi_k\rangle = \sum_k a_k F_l(E_k) |\varphi_k\rangle, \label{eq:trial_vec_FD}
\end{align}
where $F_l(E)$ is the Fourier transform of the filter function $f_l(t)$. Finally, filter diagonalization can be formulated as a subspace protocol (Definition~\ref{def:SubspaceProtocol}; see appendix) with matrix elements
\begin{align}
    (V^\dagger H V)_{sl} &=  -i \int f_s(\tau)\, \partial_\tau \big(C \ast f_l\big)(\tau) \, d\tau,  \label{eq:FD_A_sl} \\ 
    (V^\dagger V)_{sl} &= \int f_s(\tau)\, \big(C \ast f_l\big)(\tau) \, d\tau,  \label{eq:FD_B_sl} \\
    \delta A &= \delta B = 0.
\end{align}
Given a signal $C$ and a target band $B_f$, Algorithm~\ref{alg:eps_dim_red} 
defines a numerical routine $\mathsf{P}_{(C, B_f)}$ to extract the frequencies within $B_f$ (cf. appendix \ref{sec:SubspaceProtocols}). The routine $\mathsf{P}_{(C, B_f)}$ takes the number $M$ of trial vectors to use and the assumed number $m$ of frequencies inside the band $B_{f}$ as input.
Step~2 of Algorithm~\ref{alg:eps_dim_red} can be used to estimate $m$ through the spectrum of the weight matrix $B_M$ and a noise threshold $\epsilon_{\text{th}}$. 
Ideally, the spectrum of $B_M$ exhibits a sharp drop, such that the number of significant frequencies inside the band $B_f$ is detected with high confidence.
Step~3 of Algorithm~\ref{alg:eps_dim_red} performs a refinement of the generalized eigenvalue problem $[A_M,B_M]$, to obtained a well-conditioned GEP $[A_M^m, B_M^m]$ of dimension $m$. The refinement allows us to use an initial trial dimension $M$ that exceeds $m$. A larger $M$ increases the expressiveness of the trial vector space and improves the conditioning of the refined GEP. This can also enable a more reliable detection of $m$.
However, there is a trade-off as contributions from noise typically also increase with $M$. 
So, the choice of initial trial dimension $M$ must be balanced between improved signal representation and increased distortions from noise (see Ref.\cite[Section~3.1]{stroschein2025approximationframeworksubspacebasedmethods}).

After the correct number of frequencies has been detected, Theorem~\ref{thm:subspace_master} (see appendix)
describes the accuracy of the frequency estimates computed by $\mathsf{P}_{(C, B_f)}(M, m)$. The approximation error is quantified through an error measure $\varepsilon_M: \mathcal{B}(\mathbb{R}) \to \mathbb{R}$, which, in the case of filter diagonalization, is given by
\begin{align}
    \varepsilon_M(I) &= \sum_{i=0}^{M-1} \int_{I} |F_i(E)|^2 \, d\alpha(E) \label{eq:filter_meas} \\
    &= \sum_{i=0}^{M-1} \sum_{E_k \in I} |F_i(E_k)|^2 |a_k|^2. \notag
\end{align}

\subsection{Amplitude estimation}
\label{sec:Amp}

Once the frequencies have been approximated, filter diagonalization can also recover the corresponding amplitudes. We introduce a novel approach that facilitates accuracy guarantees for its amplitude estimates. In addition, we give a bound on the minimal amplitude required for a frequency to distinguish itself from a noise threshold.
Early implementations of filter diagonalization extracted amplitudes from the eigenvectors of the GEP \cite{wall_extraction_1995, mandelshtam_harmonic_1997, mandelshtam_fdm_2001}. However, this approach remained heuristic and lacked a rigorous error or stability analysis.

Our approach is based on the observation that the filter diagonalization matrices can be decomposed into matrix multiplications of alternant matrices and diagonal matrices containing the amplitudes \cite[Section~5.2 \& 5.3]{stroschein2024prolatespheroidalwavefunctions}: We let $\vec{E} \in \mathbb{R}^{m}$ denote the vector of frequencies inside the target band $B_f$ and define the \emph{alternant matrix} $\mathfrak{F}_M(\vec{E}) \in \mathbb{C}^{m \times M}$ by
\begin{align}
    \mathfrak{F}_M(\vec{E}) := \big(F_l(E_k)\big)_{k l}, \label{eq:alternant_full}
\end{align}
where $F_l$ is the Fourier transform of the $l$-th filter function. 
Consider the refined GEP $[A_M^m, B_M^m]$ returned by filter diagonalization through Algorithm~\ref{alg:eps_dim_red}.
The weight matrix is found to have the form
\begin{align}
    B_M^m = \mathfrak{F}_M^m(\vec{E})^\dagger \Lambda(\vec{a}) \, \mathfrak{F}_M^m(\vec{E}) + (\mathcal{N}^{(B)})_M^m, \label{eq:refinedweight}
\end{align}
where $\Lambda(\vec{a}) = \mathrm{diag}(|a_k|^2)$ is the amplitude matrix, $\mathfrak{F}_M^m(\vec{E}) := \mathfrak{F}_M(\vec{E}) U_m \in \mathbb{C}^{m \times m}$ is the alternant matrix formed from the refined set of filter functions obtained after step~$3$ of Algorithm~\ref{alg:eps_dim_red}, and
$(\mathcal{N}^{(B)})_M^m := U_m^\dagger \big(N^\dagger_M N_M + \delta B_M\big) U_m$ is the projected noise weight matrix.
This representation of the weight matrix follows from rewriting sums over the frequency contributions as a matrix multiplication.

Eq.\ \eqref{eq:refinedweight} provides a practical method to approximate the amplitudes that also enables accuracy guarantees. Let 
$\tilde{E}$ denote the vector of eigenvalues obtained by filter diagonalization. Since $\tilde{E}$ approximates the true frequency vector $\vec{E}$, we estimate the amplitude matrix as
\begin{align}
    \Lambda(\vec{a}) \approx \mathfrak{F}_M^m(\tilde{E})^{-\dagger} \, B_M^m \, \mathfrak{F}_M^m(\tilde{E})^{-1}.  \label{eq:Amp_estimation}
\end{align}
Proposition~7 of Ref.\cite[Section~5.3.2]{stroschein2024prolatespheroidalwavefunctions} bounds the deviation of the diagonal elements of\\
$\mathfrak{F}_M^m(\tilde E )^{-\dagger} B_M^m \mathfrak{F}_M^m(\tilde{E})^{-1}$ to the true amplitudes. Moreover, one can assess the convergence of the method, by examining whether this matrix is indeed close to diagonal.

The alternant matrix also encodes meaningful information on the stability of the method. 
In particular, we establish a novel amplitude threshold that guarantees that all frequencies can be distinguished from a given noise level:
\begin{equation}
    |a|^2_{\min} > \lambda_{\min}\!\bigl(\mathfrak{F}_M^m(\vec{E})^\dagger \mathfrak{F}_M^m(\vec{E})\bigr)^{-1} \epsilon_{\mathrm{th}} . \label{eq:minimalamplitude}
\end{equation}
If all relevant amplitudes satisfy Eq.\ \eqref{eq:minimalamplitude}, Algorithm~\ref{alg:eps_dim_red} is indeed guaranteed to detect the correct number of frequencies within the band:
\begin{align}
    \lambda_m(B_M^m)
    &\;\ge\; \lambda_m\!\bigl(\mathfrak{F}_M^m(\vec{E})^\dagger \mathfrak{F}_M^m(\vec{E})\bigr)\,|a|^2_{\min} \;+\; \lambda_m\bigl((\mathcal{N}^{(B)})_M^m\bigr)  \label{eq:weight_lowerbound}  \\
    &\;>\; \epsilon_{\mathrm{th}}.
\end{align}
In the last inequality, we assume that $(\mathcal{N}^{(B)})_M^m$ is positive definite, as is the case for purely subspace-based noise $\mathcal{N}^{(B)}_M = N^\dagger_M N_M$.

\subsection{Prolate filter diagonalization}
\label{sec:PFD}

In this section we review prolate filter diagonalization (PFD) and its error bounds \cite[Chapter 5]{stroschein2024prolatespheroidalwavefunctions}. We highlight a
phase-transition-like behavior in the accuracy of PFD that resembles the Heisenberg uncertainty relation: the accuracy drastically decreases as the spectral density of the signal approaches a critical threshold given by the observation time $\sim T/\pi$.

PFD uses time-limited prolates $\mathcal{D}_T \prlt_n^{(W_f)}$ as filtering functions, due to their optimal concentration in the frequency band $B_f$ and favorable orthogonality relations. They are eigenfunctions of the time- and band-limiting integral operator $\mathcal{D}_T \mathcal{B}_{f}$ 
(see appendix \ref{sec:prolates}). 
In particular, there are
approximately $2W_fT/\pi$ such time-limited prolates with high concentration in the frequency band $B_{f}$. We deferred a review of the origin of prolates and their optimal approximation properties to appendix \ref{sec:prolates} and note that in
\ref{sec:newIneq}, we review new inequalities on prolates that enable rigorous accuracy guarantees for PFD.

We formalize PFD as a subspace protocol that can access the theory 
of appendix \ref{sec:SubspaceProtocols}:

\begin{definition}[Prolate Filter Diagonalization]
\label{def:PFD}
    Given a signal $C(t)$ for $t \in [-2T, 2T]$ and a frequency band $B_{f}=[-W_f, W_f]$ of interest. We define \emph{Prolate Filter Diagonalization} (PFD) as the protocol $\mathsf{P}$ that assumes infinite dimensional matrices $V^\dagger H V$ and  $V^\dagger V$ with matrix elements:
    \begin{align}
    ( V^\dagger H V)_{sl} & = - i \int_{-T}^{T} \int_{-T}^{T} \prlt_s^{(W_f)}(\tau) \partial_\tau C(\tau - t) \prlt_l^{(W_f)}(\tau) d t d\tau   \label{eq:VHV_sl}\\
    ( V^\dagger V)_{sl} & = \int_{-T}^{T} \int_{-T}^{T} \prlt_s^{(W_f)}(\tau) C(\tau - t) \prlt_l^{(W_f)}(\tau) d t d\tau \label{eq:VV_sl}
    \end{align}
      Here, $\{\prlt_n^{(W_f)} \}$ denotes the sequence of prolates of bandwidth $W_f$ and concentration in the time interval $[-T,T]$.
      At a given trial dimension $M$, PFD returns the $M$-th leading principal sub-matrices of $V^\dagger H V$ and $V^\dagger V$.
\end{definition}
PFD constitutes a subspace protocol as in Definition~\ref{def:SubspaceProtocol}, with  $\delta A = \delta B = 0$ and noise matrix $\mathcal{N}^{(B)} = N^\dagger N $,
\begin{align}
    (N^\dagger N)_{sl} = \int_{-T}^{T} \int_{-T}^{T} \int_{\sigma(C) \backslash B_{f}} \prlt_{s}^{(W_f)}(\tau) e^{i E (\tau - t)} \prlt_l^{(W_f)}(t) d\alpha(E) dt d\tau,
\end{align}
arising from frequency components outside the target band.
Moreover, PFD can be formulated as an $\epsilon$-subspace protocol $\mathsf{P}_{\{\epsilon_M\}}$ (see Definition~\ref{def:eps_SubspaceProtocol}) 
with bounds on the noise weight
\begin{align}
    \| \mathcal{N}^{(B)} \| & \leq \epsilon^{\text{Prlt}}_M \int_{ \sigma(C) \backslash B_{W_f}} d\alpha(\lambda) =  \epsilon^{\text{Prlt}}_M \sum_{E_n \notin B_f} |a_n| e^{iE_n t}\\
    & \leq  \epsilon^{\text{Prlt}}_M C(0),  \label{eq:noise_weigt_bound}
\end{align}
where $\epsilon^{\text{Prlt}}_M$ is defined in Eq.\ \eqref{eq:prlt_err_th} below.

Theorem \ref{thm:subspace_master} 
(cf.\ appendix) 
describes the accuracy of a subspace protocol through an error measure $\varepsilon_M$. For filter diagonalization, this error measure is determined by the Fourier transforms of the filtering functions, as seen in Eq.\ \eqref{eq:filter_meas}. The Fourier transform of a time-truncated prolate is
\begin{align}
    \mathcal{F} [ \mathcal{D}_T \prlt_n](\omega) = \conj{\mu}_n \tilde \prlt_n(\omega),
\end{align}
where $\tilde \prlt_n$ denotes \emph{dual prolates}, that is prolates with roles of bandwidth $W$ and time interval $T$ exchanged. The prefactors $\conj{\mu}_n \in \mathbb{C}$ are complex conjugates of the eigenvalues of the integral operator given in Eq. \eqref{eq:FiniteFT_op} in appendix \ref{sec:prolates}.

The bound in Eq.\ \eqref{eq:thm_prop_bound_extra} of Theorem~\ref{thm:prlt_bound} controls the maximal value that a prolate function can attain outside its concentration region.
We derived Theorem~\ref{thm:prlt_bound} in Ref.\ \cite[Chapter 2]{stroschein2024prolatespheroidalwavefunctions} to yield sharp accuracy bounds for PFD. The proof of Theorem~\ref{thm:prlt_bound} relies on a commutation relation between an integral operator and a differential operator. This commutation relation has played an important role in the development of PFT and continues to lead to remarkable connections \cite{SlepianComment, grunbaum2022matrixbispectralitynoncommutativealgebras, UVSpectrum}.
In appendix \ref{sec:dual_signature}, we highlight a spectral duality between the integral and differential operator at the critical dimension $n = \lfloor 2W T /\pi \rfloor$, which appears to have remained previously unnoticed.

By applying Eq.\ \eqref{eq:thm_prop_bound_extra} on the prolate duals $\tilde \prlt_n(\omega)$, we obtain an estimate on the error measure of PFD: 
\begin{corollary}
\label{cor:meas_estimate}
    Let $\varepsilon_M$ be the error measure that is obtained from the first $M$ time-limited prolates 
    $\mathcal{D}_T \prlt_n^{(W_f)}$. We have
    \begin{align}
        \varepsilon_M(I) \leq 2 \pi \sum_{l=0}^{M-1} \gamma_l (1-\gamma_l) C_{\text{extra},l}(T,W_f)   \ \alpha (I)
    \end{align}
    where $C_{\text{extra},l}(T,W_f)$ is defined as in Theorem \ref{thm:prlt_bound} with arguments $W_f$ and $T$ exchanged. 
\end{corollary}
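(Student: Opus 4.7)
The plan is to start from the explicit form \eqref{eq:filter_meas} of the filter-diagonalization error measure, substitute the specific PFD filters, and reduce the task to a uniform pointwise bound on dual prolates supplied by Theorem~\ref{thm:prlt_bound}.

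First I would use the identity $\mathcal{F}[\mathcal{D}_T \prlt_l^{(W_f)}](E) = \conj{\mu}_l\, \tprlt_l(E)$ stated in the paragraph preceding the corollary. Substituting it into Eq.~\eqref{eq:filter_meas} recasts the error measure as
\[
\varepsilon_M(I) \;=\; \sum_{l=0}^{M-1} |\mu_l|^2 \int_I |\tprlt_l(E)|^2 \, d\alpha(E).
\]
This isolates two separate ingredients: the prefactors $|\mu_l|^2$ coming from the finite Fourier transform of the time-limited prolates, and the pointwise amplitudes of the dual prolates $\tprlt_l$ on the region $I$ of interest.

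Second I would invoke Theorem~\ref{thm:prlt_bound} applied to the dual prolates. Since the duals are themselves prolates but with the roles of $T$ and $W_f$ exchanged, and since their concentration region is the frequency band $B_f$, the theorem (in its exchanged-argument form) yields the uniform pointwise estimate $|\tprlt_l(E)|^2 \leq C_{\text{extra},l}(T,W_f)$ for $E$ lying outside the concentration interval. Pulling this constant out of the integral gives
\[
\int_I |\tprlt_l(E)|^2 \, d\alpha(E) \;\leq\; C_{\text{extra},l}(T,W_f)\, \alpha(I).
\]

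Finally, I would identify the prefactors $|\mu_l|^2$ with $2\pi\, \gamma_l(1-\gamma_l)$ via the standard spectral link between the finite Fourier transform operator and the doubly-limiting operator $\mathcal{D}_T \mathcal{B}_{W_f}$ from prolate Fourier theory (as reviewed in appendix~\ref{sec:prolates}). Summing over $l = 0,\ldots,M-1$ then gives the stated estimate. The main obstacle is not the pointwise control of $\tprlt_l$ itself, which Theorem~\ref{thm:prlt_bound} delivers directly, but rather bookkeeping of the constants: one must check the symmetry of the theory under the exchange of time- and band-limiting roles so that the swapped-argument application to the duals is legitimate, and one must verify the normalization convention for $\mu_l$ so that the prefactor coalesces precisely into $2\pi\, \gamma_l(1-\gamma_l)$ rather than a cosmetically different constant.
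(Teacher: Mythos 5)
Your overall route is the same as the paper's: write $\varepsilon_M(I)=\sum_{l=0}^{M-1}|\mu_l|^2\int_I|\tprlt_l(E)|^2\,d\alpha(E)$ using $\mathcal{F}[\mathcal{D}_T\prlt_l^{(W_f)}]=\conj{\mu}_l\tprlt_l$, bound the dual prolates off their concentration band via Theorem~\ref{thm:prlt_bound} with the roles of $T$ and $W_f$ exchanged, and collect constants. However, your bookkeeping of the two ingredients is wrong in a way that only works because the two errors cancel. First, Eq.~\eqref{eq:thm_prop_bound_extra} does \emph{not} give $|\tprlt_l(E)|^2\le C_{\text{extra},l}(T,W_f)$ outside the concentration region; it gives $\sup_{|E|\ge W_f}\tprlt_l(E)^2\le (1-\gamma_l)\,C_{\text{extra},l}(T,W_f)$ --- the factor $(1-\gamma_l)$ lives here, in the pointwise off-band bound on the dual prolate. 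Second, the spectral relation between the finite Fourier transform and the time-and-band-limiting operator is $|\mu_l|^2/2\pi=\gamma_l$ (Theorem~\ref{thm:Prlt_simple_summary}, item~5), not $|\mu_l|^2=2\pi\gamma_l(1-\gamma_l)$; there is no identity of the latter form, and the verification you defer to the end would fail. The correct assembly is $|\mu_l|^2\cdot\sup_{|E|\ge W_f}\tprlt_l(E)^2\le 2\pi\gamma_l\,(1-\gamma_l)\,C_{\text{extra},l}(T,W_f)$, which, summed over $l$ and multiplied by $\alpha(I)$, is exactly the claimed bound (valid for $I$ disjoint from $B_f$, i.e.\ for the off-band contributions that enter the master theorem, which is where the corollary is used).

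With that correction your argument coincides with the paper's one-line derivation, so no genuinely different method is involved; the only substantive repair needed is to move the $(1-\gamma_l)$ from the $\mu_l$-prefactor, where you placed it, to the dual-prolate supremum bound of Eq.~\eqref{eq:thm_prop_bound_extra}, where it actually originates.
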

We abbreviate the right-hand side as
\begin{align}
    \epsilon^{\text{Prlt}}_M & := 2 \pi \sum_{l=0}^{M-1} \gamma_l (1-\gamma_l) C_{\text{extra},l}(T,W_f)  \label{eq:prlt_err_th}
\end{align}
since $\epsilon^{\text{Prlt}}_M$ characterizes the accuracy of PFD. 
The magnitude of $\epsilon^{\text{Prlt}}_M$ is strongly governed by the factor $1-\gamma_{M-1}(c_f)$, where  $c_f:= W_f T$.
We refer to $\tilde{c}_f := 2W_f T / \pi $ as the \emph{essential dimensionality}, since the 2WT-Theorem (Theorem~\ref{thm:2WT} in the appendix) states that the space of $ W_f $-bandlimited and $ T $-time-concentrated functions is well approximated by essentially \( 2W_f T / \pi \) degrees of freedom.
We can estimate 
\begin{align}
    \epsilon^{\text{Prlt}}_M & <  2 \pi M  T c_f ( 1 + \mathcal{O}(c^{-2}_f)) (1-\gamma_{M-1}(c_f))   \label{eq:eps_asymp_estimate_0} \\
    & <M T \frac{\pi^{\frac{3}{2}} 2^{3 M} c_f^{M+\frac{1}{2}} e^{-2 c_f}}{(M-1)!}\left[1+\mathcal{O}\left(c^{-1}_f\right)\right]. \label{eq:eps_asymp_estimate_1}
\end{align}
The first inequality applies bounds on $C_{\text{extra},l}(T,W)$ from appendix \ref{sec:newIneq}, and the second uses the asymptotic expansion
\begin{align}
    1-\gamma_n(c) =  \frac{4 \sqrt{\pi} 2^{3 n} c^{n+\frac{1}{2}} e^{-2 c}}{n!} (1 + \mathcal{O}(c^{-1}))
\end{align}
according to Refs.\cite{FUCHS1964317, SlepianAsymp}. In particular, $\epsilon^{\text{Prlt}}_M$ 
remains very small as long as the number of used prolate filters $M$ remains sufficiently lower than the essential dimension $2W_f T/\pi $. At a fixed target accuracy and trial dimension, $\epsilon^{\text{Prlt}}_M$  and Eq.\ \eqref{eq:eps_asymp_estimate_1} can also be used to estimate the required observation duration $T $ or phase space volume $ c_f = W_f T $.\\

Corollary \ref{cor:meas_estimate} and Theorem \ref{thm:subspace_master} 
together yield the following sharp accuracy guarantee for PFD: 

\begin{theorem}
    \label{thm:PFD_accuracy}
    Let $[A_M^m, B_M^m]$ be the GEP that is obtained from PFD through Algorithm \ref{alg:eps_dim_red} with a trial dimension $M$ and a detected dimension $m$.
    Assume that the detected dimension $m$ coincides with number of frequencies in $B_{f}$, that the GEP is well-conditioned 
    in the sense 
    \begin{align}
        \lambda_m(B_M^m) > \| \mathcal{N}^{(B)}_M \|,
    \end{align}
    and that the generalized eigenvalues lie within $[-W_f, W_f]$.
    Let $\tilde E_1 > \tilde E_2 > \cdots > \tilde E_m$ denote these eigenvalues, and $E_1 > E_2 > \cdots > E_m$ the frequencies of $C$ within $[-W_f, W_f]$.
    Then, we have error bounds on the approximated frequencies: 
    \begin{align}
  \epsilon^{\text{Prlt}}_M \frac{ \int_{\lambda < -W_f} (\lambda - \tilde E_i) d \alpha(\lambda) }{\lambda_m(B_M^m) - \| \mathcal{N}^{(B)}_M \| } \leq \tilde E_i -E_i \leq \epsilon^{\text{Prlt}}_M \frac{\int_{\lambda > W_f} (\lambda - \tilde E_i) d \alpha(\lambda)  }{\lambda_m(B_M^m) -  \| \mathcal{N}^{(B)}_M \|}.
\end{align}
\end{theorem}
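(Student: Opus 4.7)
The plan is to combine two ingredients already in place: the generic master theorem for $\epsilon$-subspace protocols (Theorem \ref{thm:subspace_master}) and the sharp bound on the PFD error measure from Corollary \ref{cor:meas_estimate}. The scheme is to first transport the generic accuracy bound of the master theorem to the PFD setting, and then dominate the resulting signed first-moment integrals against $\varepsilon_M$ by integrals against the physical spectral measure $\alpha$.

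First I would verify that PFD meets the hypotheses of the master theorem: PFD has been cast as an $\epsilon$-subspace protocol with $\delta A = \delta B = 0$ and noise matrix $\mathcal{N}^{(B)} = N^\dagger N$, the detected dimension $m$ equals the true number of in-band frequencies by assumption, and the well-conditioning inequality $\lambda_m(B_M^m) > \|\mathcal{N}^{(B)}_M\|$ holds. Theorem \ref{thm:subspace_master} then delivers two-sided bounds on $\tilde E_i - E_i$ whose numerators are signed first moments of $(\lambda - \tilde E_i)$ against $\varepsilon_M$ restricted to the out-of-band spectrum, and whose denominator is the positive gap $\lambda_m(B_M^m) - \|\mathcal{N}^{(B)}_M\|$. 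Frequency components above $W_f$ bias $\tilde E_i$ upward while those below $-W_f$ bias it downward, yielding the natural split $\int_{\lambda > W_f}$ for the upper bound and $\int_{\lambda < -W_f}$ for the lower bound.

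Second, I would invoke Corollary \ref{cor:meas_estimate}, which gives the dominating set-function inequality $\varepsilon_M(I) \leq \epsilon^{\text{Prlt}}_M\, \alpha(I)$ for every Borel set $I$, i.e.\ $d\varepsilon_M \leq \epsilon^{\text{Prlt}}_M\, d\alpha$ as positive measures. Since $\tilde E_i$ lies inside $B_f = [-W_f, W_f]$ (it approximates an in-band frequency $E_i$), the integrand $\lambda - \tilde E_i$ has constant sign on each out-of-band half-line: positive on $\{\lambda > W_f\}$ and negative on $\{\lambda < -W_f\}$. The measure bound therefore transfers directly — preserving direction for the positive integrand in the upper bound, and reversing it for the negative integrand in the lower bound — yielding exactly the asymmetric two-sided estimates of the theorem after dividing by the positive denominator $\lambda_m(B_M^m) - \|\mathcal{N}^{(B)}_M\|$.

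The main obstacle is the first step: unpacking the master theorem to see that its generic accuracy bound in fact specializes, in the PFD setting, to signed first moments of $(\lambda - \tilde E_i)$ against $\varepsilon_M$ on each half-line of the out-of-band spectrum. This directional bookkeeping is what produces the asymmetry between the upper and lower bounds and must be tracked with care, including the consistency between the operator-norm quantity $\|\mathcal{N}^{(B)}_M\|$ in the denominator and the noise-weight defined for PFD in Section \ref{sec:PFD}. Once this is in place, the substitution $d\varepsilon_M \mapsto \epsilon^{\text{Prlt}}_M\, d\alpha$ from Corollary \ref{cor:meas_estimate} is essentially mechanical, and the claimed bounds follow immediately.
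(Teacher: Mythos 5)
Your proposal matches the paper's own derivation: Theorem~\ref{thm:PFD_accuracy} is obtained there exactly by specializing the master theorem (Theorem~\ref{thm:subspace_master}) to PFD with $\delta A_M=\delta B_M=0$ and then replacing $d\varepsilon_M$ by $\epsilon^{\text{Prlt}}_M\,d\alpha$ via Corollary~\ref{cor:meas_estimate}, with the same sign bookkeeping on the two out-of-band half-lines. The argument and its ingredients are essentially identical to the paper's, so nothing further is needed.
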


Theorem~\ref{thm:PFD_accuracy} provides sharp error bounds for the estimated frequencies. These bounds are governed by the error parameter
$\epsilon^{\text{Prlt}}_M$, which essentially behaves like $1-\gamma_M(c_f)$. This leads to a \emph{phase transition} like behavior in the accuracy of the method:
\emph{Prolate Filter Diagonalization achieves high accuracy within a spectral region of interest, as long as the \emph{effective spectral density} $\delta_{\text{eff}}$ of the signal has 
\begin{align}
    \delta _{\text{eff}}   \lesssim  \frac{T}{\pi}. \label{eq:accuracy_condition_eff}
\end{align}
As $\delta_{\text{eff}}$ approaches $T / \pi$ the accuracy of PFD sharply decreases.}

This statement follows from the number of trial vectors necessary to represent a signal within $B_f$ and $\epsilon^{\text{Prlt}}_M \sim 1 - \gamma_M(c_f)$.
The effective spectral density is slightly more locally defined than the spectral density, as we explain below.

The first $\sim 2c/\pi $ eigenvalues $\gamma_n(c)$ are very close to one, 
followed by a transition region, that is logarithmically sharp in $c$. Beyond the transition region all remaining eigenvalues rapidly decay toward zero (Theorem \ref{thm:prlt_spectrum}).
The characterizing accuracy parameter $\epsilon^{\text{Prlt}}_M$ of PFD inherits this behavior; as $M$ approaches $2W_f T/\pi$ there is a sharp increase in $\epsilon^{\text{Prlt}}_M$.

The number of trial vectors necessary to use depends of the effective spectral density of the signal. A necessary condition to stay out of the transition region in terms of the spectral density is 
\begin{align}
    \frac{\# \{E_k \in B_{f, \omega^*} \}}{2 W_f}  \lesssim  \frac{T}{\pi}. \label{eq:accuracy_condition}
\end{align}
The condition is not sufficient, as the error bounds of Theorem \ref{thm:PFD_accuracy} also depend on the conditioning of $B_M^m$. In Section \ref{sec:Amp} we derived that the conditioning of the weight matrix does not only depend on 
the amplitudes, but also on the conditioning of the filter matrix evaluated at the in-band frequencies:
\begin{align}
        \lambda_m(B_M^m) &\;\ge\; \lambda_m\!\bigl(\mathfrak{F}_M^m(\vec{E})^\dagger \mathfrak{F}_M^m(\vec{E})\bigr)\,|a|^2_{\min}.
\end{align}
Hence we require, for $M$ large enough but lower than $2W_fT/\pi$, that a good conditioning of $B_M^m$ can be ensured. Despite the necessary condition in Eq.\ \eqref{eq:accuracy_condition} being satisfied, the contrary can occur if two frequencies are very close with respect to the analyzed frequency band $B_f$. This can cause near-linear dependencies in the rows of $ \mathfrak{F}_M^m(\vec{E})$. In this case, it is appropriate to "zoom in" on the approximate degeneracy through a more informed choice of $B_f$. 
This also suggest an intuitive definition of the \emph{effective spectral density}: it is defined through $\# \{E_k \in B_{f,\omega^*} \}/ {2 W_f}$ but with a choice of $B_{f,\omega^*}$ such that the spectral density within the band remains relatively homogeneous.

The accuracy transition we just uncovered through PFT is reminiscent of a generalization of the Heisenberg uncertainty principle.
As long as the observation time $T$ satisfies Eq. \ \eqref{eq:accuracy_condition_eff}, PFD achieves high accuracy despite dense spectra with many frequencies outside the target band $B_f$.

\subsection{Sampled prolate filter diagonalization}
\label{sec:sampledPFD}

In this section we discretize PFD through the \emph{prolate sampling formula} to obtain a method particularly suited for digital computation.
This discretization requires an additional assumption on the bandwidth of the signal,
but in return, only relies on a finite number of equidistant samples. \\
We emphasize that the sampling-induced error exhibits a phase-transition behavior, which is simultaneous with the phase transition of the subspace-based error derived in Section~\ref{sec:PFD}.

The prolate sampling formula was introduced in Ref.\ \cite{walter_sampling_2003}, and it improves upon the Whittaker-Shannon sampling formula, because it truncates better. However, the truncation estimates in Ref.\ \cite{walter_sampling_2003} were still relatively loose. Using the inequalities of Theorem~\ref{thm:prlt_bound}, we can derive sharper estimates. 
In appendix \ref{sec:prolate_sampling}, we formulate an approximation theorem that resembles the engineering folk theorem \cite{NoiseCommunication, SlepianComment}:
\emph{Approximately $2WT/\pi$ equidistant samples suffice to reconstruct a signal of bandwidth $W$ and essential duration $2T$ with high accuracy. }

This theorem was a long standing intuition that preceded  
the $2WT$ theorem (see Section \ref{sec:prolate_sampling} and references therein). 
Let $\mathscr{B}_s \subset \mathscr{L}^2(\mathbb{R})$ denote the subspace of functions with a Fourier transform that has support in $[-W_s,W_s]$.
Given a band-limited function $f\in\mathscr{B}_{s}$ that is highly concentrated in the interval $[-T,T]$,
we approximate $f$ on $[-T,T]$ through:
\begin{align}
    f(t) \approx \frac{\pi}{W_s} \sum_{n=0}^{\lfloor 2 W_s T/\pi \rfloor} \sum_{k = - \lfloor W_s T/\pi\rfloor }^{\lfloor W_s T/\pi\rfloor} f\left(\frac{k \pi}{\bw}\right) \prlt_n^{(W_s)}\left(\frac{k \pi}{\bw}\right) \prlt_n^{(W_s)}(t),\label{eq:trunc_prlt_sampling_2WT}
\end{align}
 where $\{\prlt_n^{(W_s)} \}$ denotes the sequence of prolates of bandwidth $W_s$ and concentration in $[-T,T]$. The accuracy of the approximation is described by Theorem~\ref{thm:SamplingEngFolk} in the appendix.

To apply this truncated sampling formula to PFD, we require the additional assumption that the signal $C$ is of finite bandwidth $W_C$.
Then, the integrands appearing in the matrix elements of PFD,
\begin{align}
 \label{eq:functiontosample}
f_l^{(\tau)}(t) := C(\tau - t) \prlt_l^{(W_s)}(t),
\end{align}
have bandwidth at most $W_C + W_f$. And note that $f_l^{(\tau)}(t)$ is concentrated in $[-T,T]$ provided $l \lesssim 2 W_f T/\pi$.  Therefore, Theorem~\ref{thm:SamplingEngFolk} applies for any sampling rate $W_s \geq W_C +W_f$. 
Applying the interpolation formula to each matrix element (twice) yields \emph{sampled prolate filter diagonalization}:

\begin{definition}[sampled prolate filter diagonalization]
\label{def:sampled_PFD}
    Consider a signal $C(t)$ that is of bandwidth $W_C$, a filter width $W_f$, a sample rate $W_s$ with $W_s\geq W_C+W_f$ and a duration $T$. Assume sample points $C(t_k)$ of the signal 
    at $t_k = \pi k / W_s$ for $k = - N_s , -N_s +1 ,  \cdots, 0, \cdots , N_s -1 , N_s$ where $N_s = 2 \lceil W_s T / \pi \rceil$.\\

    \emph{Sampled prolate filter diagonalization} is the protocol, that generates infinite dimensional matrices $A$ and $B$
    with matrix elements:
\begin{align}
A_{sl} =\;& 
-i \pi^2 \frac{T}{\bws^3 } \sum_{n_1,n_2=0}^{\lfloor 2 W_s T/ \pi \rfloor}  
\mu_{n_2}(c_s) \mu_{n_1}(c_s)  \prlt_{n_1}^{(W_s)}(0)  \prlt_{n_2}^{(W_s)}(0) \notag\\
& \quad \cdot \sum_{k_1,k_2 =-\lceil \bws T / \pi \rceil}^ {\lceil \bws T / \pi \rceil} 
\prlt_{n_2}^{(W_s)}\left(\tfrac{\pi k_2}{\bws}\right)
\prlt_s^{(W_f)}\left(\tfrac{\pi k_2}{\bws}\right)
C\left(\tfrac{\pi k_2}{\bws} - \tfrac{\pi k_1}{\bws} \right)
\prlt_l^{(W_f)\scriptstyle\prime}\left(\tfrac{\pi k_1}{\bws}\right)
\prlt_{n_1}^{(W_s)}\left(\tfrac{\pi k_1}{\bws}\right) \notag\\[1ex]
& + i \pi \sqrt{\frac{T}{W_s^3}} \sum_{n=0}^{\lfloor 2 W_s T/ \pi \rfloor} \mu_n(c_s) \prlt_n^{(W_s)}(0)
\sum_{k =-\lceil \bws T / \pi\rceil }^{\lceil \bws  T / \pi \rceil} \notag\\
& \quad \cdot \prlt_{n}^{(W_s)}\left(\tfrac{\pi k}{\bws}\right) 
\left[
  C\left(\tfrac{\pi k}{\bws} - T \right) {\prlt_l}^{(W_f)}(T)
- C\left(\tfrac{\pi k}{\bws} + T \right) {\prlt_l}^{(W_f)}(-T)
\right]
\label{eq:A_sl}
\end{align}

\begin{align}
B_{sl} =\;& 
\pi^2 \frac{T}{\bws^3} \sum_{n_1,n_2=0}^{\lfloor 2 W_s T / \pi \rfloor}
\mu_{n_2}(c_s) \mu_{n_1}(c_s)
\prlt^{(W_s)}_{n_1}(0) \prlt_{n_1}^{(W_s)}(0) \notag\\
& \quad \cdot \sum_{k_1,k_2 = -\lceil \bws T / \pi \rceil }^{\lceil \bws T / \pi \rceil }
\prlt_{n_2}^{(W_s)}\left(\tfrac{\pi k_2}{\bws}\right)
\prlt_s^{(W_f)}\left(\tfrac{\pi k_2}{\bws}\right)
C\left(\tfrac{\pi k_2}{\bws} - \tfrac{\pi k_1}{\bws} \right)
\prlt_l^{(W_f)}\left(\tfrac{\pi k_1}{\bws}\right)
\prlt_{n_1}^{(W_s)}\left(\tfrac{\pi k_1}{\bws}\right)
\label{eq:B_sl}
\end{align}
Here $c_s= W_s T$ and $\{\prlt_n^{(W_s)} \}$ denotes the sequence of prolates of bandwidth $W_s$ and concentration in the time interval $[-T,T]$.
At a given trial dimension $M$, sampled PFD returns the $M$-th leading principal sub-matrices of $A$ and $B$.
\end{definition}
Sampled PFD is a subspace protocol as in Definition \ref{def:SubspaceProtocol} with 
\begin{align}
    \delta A = A - V^\dagger H V,  \qquad \delta B = B - V^\dagger V, \qquad   \mathcal{N}^{(B)} = N^\dagger N + \delta B.
\end{align}
Here, $V^\dagger H V$ and $ V^\dagger V$ refer to matrices as in PFD (Definition \ref{def:PFD}).
Then, the Master Theorem \ref{thm:subspace_master} yields for sampled PFD the accuracy guarantee:
\begin{theorem}
    \label{thm:sampledPFD}
    Let $[A_M^m, B_M^m]$ be the GEP that is obtained from sampled PFD through Algorithm \ref{alg:eps_dim_red} with a trial dimension $M$ and a detected dimension $m$.
    Assume that the detected dimension $m$ coincides with number of frequencies in $B_{f}$, that the GEP is well-conditioned 
    in the sense 
    \begin{align}
        \lambda_m(B_M^m) > \| \mathcal{N}^{(B)}_M \|,
    \end{align}
    and that the generalized eigenvalues lie within $[-W_f, W_f]$.
    Let $\tilde E_1 > \tilde E_2 > \cdots > \tilde E_m$ denote these eigenvalues, and $E_1 > E_2 > \cdots > E_m$ the frequencies of $C$ within $[-W_f, W_f]$. Then we have
    \begin{align}
   \frac{ \epsilon^{\text{Prlt}}_M \int_{\lambda < -W_f} (\lambda - \tilde E_i) d \alpha(\lambda)  + \lambda_{\min}^*(\delta A_M^m - \tilde E_i \delta B_M^m )}{\lambda_m(B_M^m) - \| \mathcal{N}^{(B)}_M \| } \leq \tilde E_i -  E_i   \qquad \notag \\
   \qquad \leq  \frac{ \epsilon^{\text{Prlt}}_M \int_{\lambda > W_f} (\lambda - \tilde E_i) d \alpha(\lambda) + \lambda_{\max}^*(\delta A_M^m -\tilde E_i \delta B_M^m)  }{\lambda_m(B_M^m) -  \| \mathcal{N}^{(B)}_M \|}.
   \label{eq:sampled_pfd_bounds}
\end{align}
\end{theorem}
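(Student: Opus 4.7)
The theorem is the natural extension of Theorem~\ref{thm:PFD_accuracy} to the discretized protocol, with additional perturbation terms $\delta A_M^m$ and $\delta B_M^m$ that encode the sampling error introduced by replacing the exact integrals of PFD with the truncated prolate sampling formula. The plan is to realize sampled PFD as an $\epsilon$-subspace protocol in the sense of Definition~\ref{def:eps_SubspaceProtocol} with the same underlying guess vectors as PFD, and then to invoke the Master Theorem~\ref{thm:subspace_master} as a black box. The key observation is that the matrix elements in Eqs.\ \eqref{eq:A_sl}--\eqref{eq:B_sl} are obtained by inserting the prolate sampling approximation \eqref{eq:trunc_prlt_sampling_2WT} into the integrals defining $(V^\dagger HV)_{sl}$ and $(V^\dagger V)_{sl}$ of Definition~\ref{def:PFD} (the factor $\mu_n(c_s)\prlt_n^{(W_s)}(0)$ together with the sums over $n$ and $k$ is precisely the doubly-applied interpolation operator of Theorem~\ref{thm:SamplingEngFolk}). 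Hence the defining decomposition
\begin{align}
A = V^\dagger HV + \delta A, \qquad B = V^\dagger V + \delta B
\end{align}
holds by construction, and the noise component $\mathcal{N}^{(B)} = N^\dagger N + \delta B$ aggregates the out-of-band contribution from the exact protocol with the sampling perturbation, exactly as listed immediately after Definition~\ref{def:sampled_PFD}.

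First, I would verify that the hypotheses of the Master Theorem are satisfied: the detected dimension $m$ is assumed to match the number of in-band frequencies, and the well-conditioning assumption $\lambda_m(B_M^m) > \|\mathcal{N}^{(B)}_M\|$ ensures that the refined GEP $[A_M^m, B_M^m]$ is genuinely $m$-dimensional and non-degenerate, so Theorem~\ref{thm:subspace_master} applies. Plugging in, one obtains two-sided bounds of the generic form
\begin{align}
\tilde E_i - E_i \;\in\; \frac{1}{\lambda_m(B_M^m) - \|\mathcal{N}^{(B)}_M\|}\Bigl[\,\text{signal residual}(\tilde E_i) + \lambda_{\cdot}(\delta A_M^m - \tilde E_i\,\delta B_M^m)\,\Bigr],
\end{align}
where the signed extreme eigenvalues ($\lambda_m$ on the lower side, $\lambda_1$ on the upper side) appear because the perturbations enter additively through a Weyl-type inequality applied to the numerator of the Rayleigh quotient characterization of the refined GEP.

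Second, the signal-residual terms coming from the out-of-band frequency content are estimated using Corollary~\ref{cor:meas_estimate}: the error measure $\varepsilon_M$ is bounded by $\epsilon^{\text{Prlt}}_M\,\alpha(\,\cdot\,)$, so the quantity $\int (\lambda - \tilde E_i)\,d\alpha(\lambda)$ restricted to $\{\lambda<-W_f\}$ (respectively $\{\lambda>W_f\}$) appears multiplied by $\epsilon^{\text{Prlt}}_M$, producing exactly the first summands in the numerators of \eqref{eq:sampled_pfd_bounds}. The sign of these contributions (negative in the lower bound, positive in the upper bound) is dictated by the sign of $\lambda - \tilde E_i$ on the two half-lines; this is the same mechanism that already produced Theorem~\ref{thm:PFD_accuracy}, and the only thing to check is that the inclusion of $\delta B$ into $\mathcal{N}^{(B)}$ does not disturb the sign analysis, which it does not since it only shifts the denominator.

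The main obstacle, and the only genuinely new step compared with Theorem~\ref{thm:PFD_accuracy}, is the tracking of the sampling perturbation $\delta A_M^m - \tilde E_i\,\delta B_M^m$ as a single Hermitian matrix and the identification of its extremal eigenvalues as the sharp additive error in \eqref{eq:sampled_pfd_bounds}. Concretely, I would (i) write the refined-GEP residual $(A_M^m - \tilde E_i B_M^m)\tilde v_i = 0$ as $(V^\dagger(H - \tilde E_i)V)_M^m \tilde v_i = -(\delta A_M^m - \tilde E_i \delta B_M^m)\tilde v_i$, (ii) take the inner product with $\tilde v_i$ normalized in the metric $B_M^m$, and (iii) bound the right-hand side above and below by $\lambda_1$ and $\lambda_m$ of $\delta A_M^m - \tilde E_i\,\delta B_M^m$ using the min-max principle. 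Combining with the Master-Theorem bound on the left-hand side yields \eqref{eq:sampled_pfd_bounds}. The phase-transition remark made after Theorem~\ref{thm:PFD_accuracy} carries over verbatim, because the sampling truncation estimate of appendix~\ref{sec:prolate_sampling} shares the factor $1-\gamma_{\lfloor 2W_sT/\pi\rfloor}(c_s)$ with the subspace error, so $\delta A_M^m$ and $\delta B_M^m$ are small in exactly the same regime in which $\epsilon^{\text{Prlt}}_M$ is small.
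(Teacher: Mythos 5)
Your proposal is correct and follows essentially the paper's own route: sampled PFD is cast as a subspace protocol with $\delta A = A - V^\dagger H V$, $\delta B = B - V^\dagger V$ and noise weight $\mathcal{N}^{(B)} = N^\dagger N + \delta B$ (same guess vectors as PFD), the Master Theorem~\ref{thm:subspace_master} is invoked, and the out-of-band residual is bounded via Corollary~\ref{cor:meas_estimate} to produce the $\epsilon^{\text{Prlt}}_M \int (\lambda - \tilde E_i)\, d\alpha(\lambda)$ terms. The only caveat is that your ``genuinely new step'' (i)--(iii) re-derives the terms $\lambda_m(\delta A_M^m - \tilde E_i\, \delta B_M^m)$ and $\lambda_1(\delta A_M^m - \tilde E_i\, \delta B_M^m)$, which Theorem~\ref{thm:subspace_master} already supplies verbatim for arbitrary $\delta A_M, \delta B_M$, so that part is redundant rather than necessary.
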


The matrices $\delta A$ and $\delta B$ describe the additional error introduced by the prolate sampling formula. By Theorem~\ref{thm:SamplingEngFolk}, this sampling error remains small provided that the sampled functions and their derivatives are highly concentrated in the sampling interval. The functions $ f_l^{(\tau)}$ and their derivatives satisfy this condition as long as the number of used prolate filters is sufficiently lower than the essential dimension $ \tilde c_f = 2 W_f T / \pi$:
\begin{align}
    \|f_{l}^{(\tau)} \|_T^2 &\leq 1-\gamma_{l}(c_f),   \\
     \| f_{l}^{(\tau)\scriptstyle\prime} \|_T^2 &\leq (1-\gamma_{l}(c_f)) \left( 2 C_{\text{extra},l}(W_f, T)^{2}+ 2 W_C^2 \right).
\end{align}
The second inequality follows from Eq.\ \eqref{eq:thm_derivConcen_extra} of Theorem~\ref{thm:prlt_bound} and $|a+b|^2 \leq 2|a|^2 + 2|b|^2$ for $a,b \in \mathbb{C}$.
Inserting these estimates into Theorem~\ref{thm:SamplingEngFolk} shows that the sampling error is controlled by a prefactor $(1-\gamma_{l}(c_f))$. Therefore, the sampling error undergoes a sharp increase as the number of used trial vectors approaches $\tilde c_f$.
The number of trial vectors required depends on the effective spectral density of the signal within the spectral region of interest. In particular,  the analysis of Section \ref{sec:PFD} carries over: in sampled PFD
the discretization error exhibits a phase-transition-like increase, as the effective spectral density approaches $T/\pi$.

In summary, both the subspace-based error as well as the discretization error of our method exhibit a simultaneous transition in accuracy which resembles a generalization of the Heisenberg uncertainty relation. Our analysis relates 
observation time and spectral density to achieved accuracy. In particular, these information-theoretic resources can be balanced to achieve optimal efficiency in computations.

\subsection{Autocorrelation generation on a quantum computer}
\label{autocorrelation_sec}

With sampled PFD established, it remains to generate an autocorrelation function that encodes the spectral information of a physical system.
As discussed in Section~\ref{sec:sig_schrö}, any continuous positive-definite signal can be interpreted as an autocorrelation function arising from unitary time evolution. 
Since simulating unitary time evolution is widely believed to be exponentially hard on a classical computer \cite{feynman_simulating_1982, lloyd1996universal}, we offload the signal generation onto a quantum processor.

Specifically, we use the Hadamard test to estimate the signal samples
\begin{align}
    C(t_k) = \langle  \Psi | e^{i H t_k} | \Psi \rangle = \sum_n |a_n|^2 e^{i E_n t_k} \label{eq:autocorr_tk}
\end{align}
at the equidistant sampling points $t_k = \pi k / W_s$.
Here, $H$ is the self-adjoint operator that describes the system, $\Psi $ is a chosen initial state, and $|a_n|^2 = |\langle \varphi_n, \Psi\rangle|^2$ are the amplitudes of the initial state with respect to the eigenbasis of $H$. 

The Hadamard test separately estimates the real part ($\Re$) or imaginary part ($\Im$) of a signal sample $C(t_k)$ through the following quantum circuit: 
\begin{equation}
\label{scattering-circuit-2}
    \Re/ \Im [ \langle \Psi | U(t_k) | \Psi \rangle ]  = \Qcircuit @C=1em @R=1em {
 & \qw & \gate{H} & \ctrl{1} & \gate{I/S} & \gate{H} & \meter \\
 & {/}^{\otimes^n} \qw & \gate{P} & \multigate{0}{U(t_k)} & \qw & \qw & \qw
}
\end{equation}
Where $S$ is the phase gate and $P$ is the unitary preparing the system register into the state $|\Psi\rangle $.
Choosing $I$ yields a measurement outcome $c_x(t_k) \in \{-1, +1\}$ distributed as
\begin{align}
\mathbb{P}[c_x(t_k) = \pm 1] = \frac{1}{2} \big(1 \pm \Re [C(t_k)]\big),
\end{align}
while using $S$ gives
\begin{align}
\mathbb{P}[c_y(t_k) = \pm 1] = \frac{1}{2} \big(1 \pm \Im[C(t_k)]\big).
\end{align}
In particular,
\begin{align}
\mathbb{E}[c_x(t_k) + i c_y(t_k)] = C(t_k).
\end{align}
Taking the average of $\#_{\text{shots}}$ circuit measurements gives an unbiased estimator for $C(t_k)$:
\begin{align}
    \tilde C(t_k) = C(t_k) + n(t_k),
\end{align}
where $n(t_k)$ describes the statistical noise. 
By Hoeffding's inequality, the number of circuit measurements required to ensure $|n(t_k)| \leq \epsilon$ with probability at least $1 - \delta$ is
\begin{align}
\#_{\text{shots}} = \Omega \left( \frac{1}{\epsilon^2} \log \frac{1}{\delta} \right). 
\label{eq:ShotScalingPoint}
\end{align}
The required number of circuit measurements to ensure  $|n(t_k)|\leq \epsilon$ for all $k=1, \cdots N_s$ with probability of at least $(1-\delta)$ is estimated by a union bound as
\begin{align}
    \#_{\text{shots}} = \Omega \left( \frac{1}{\epsilon^2}\log \frac{N_s}{\delta} \right).  \label{eq:ShotScalingN_s}
\end{align}
In this work, we do not account for simulation error from time propagation on a quantum computer. This error can be made arbitrarily small by increasing circuit depth (or even completely vanish assuming access to a block encoding of $H$ \cite{low2019hamiltonian}).

We term the hybrid procedure consisting of signal-sample generation on a quantum computer followed by sampled PFD as a classical post-processing routine quantum prolate diagonalization (QPD).

\section{Numerical results}

In this section, we present the results of our implementation of QPD. We emulate the signal samples produced by the Hadamard test circuit by modeling shot noise arising from a finite number of circuit measurements (see Section II.F). We first demonstrate the ability of the protocol to recover excitation energies to chemical precision for the lowest $\pi\rightarrow\pi^*$ singlet excitation in benzene from a noisy signal. For this example, we also report results obtained from exact autocorrelation samples.

Subsequently, we study the features of QPD in more detail including shot noise in all experiments. 
We begin by showing that, for the specific task of ground state energy estimation, the protocol exhibits Heisenberg-limited scaling. In this example, we consider the $\mathrm{H}_8$ hydrogen chain as a prototypical strongly correlated system. Next, we move on to the more general task of multiple eigenvalue estimation. 
We anticipate that, in this case as well, the protocol achieves Heisenberg scaling for both weakly (LiH) and strongly ($\mathrm{H}_8$) correlated systems, with total runtime scaling as $T_{\text{runtime}} = \mathcal{O}(\epsilon^{-1})$, and with maximal evolution times shorter than those required by standard QPE, i.e., $T_{\text{max}} \ll \frac{\pi}{\epsilon}$.
Finally, we discuss another important aspect of interference-based quantum algorithms: the effect of the initial input state.

\subsection{Computational details}\label{computational_det}

\subsubsection{Efficient computation of prolates}
Prolates are eigenfunctions of the time- and band-limiting operator $\mathcal{B}_{W}\mathcal{D}_{T}$. However, the direct discretization of the integral equation is ill-conditioned, as the eigenvalues are highly clustered around one and zero.
Instead, we utilize the property---famously termed the ``lucky accident''~\cite{SlepianComment}---that prolates are also eigenfunctions of the prolate spheroidal wave equation. Specifically, $\xi_n^{(W)}(t)$ corresponds to the angular prolate spheroidal wave function $S_{0n}(c, \eta)$ with $c=WT$ and scaled time coordinate $\eta = t/T \in [-1,1]$. The corresponding integral operator eigenvalues $\gamma_n$ (required for the error bounds $\epsilon_M^{Prlt}$) are obtained from the radial prolate functions $R_{0n}^{(1)}(c, 1)$ using the relation $\gamma_n = \frac{2c}{\pi} [R_{0n}^{(1)}(c, 1)]^2$~\cite{ProI}.

We employ the open-source library \texttt{prolate\_swf}~\cite{VanBuren_prolate_swf}, which implements the efficient numerical scheme detailed by Van Buren and Boisvert~\cite{Buren2002AccurateCO}. Their approach overcomes catastrophic cancellation effects in the computation of the radial functions $R_{0n}^{(1)}(c, \zeta)$ that limited previous methods\,\footnote{We denote the radial coordinate as $\zeta$ to distinguish it from the prolate functions $\xi_n$. Standard literature~\cite{Buren2002AccurateCO} uses $\xi$ for this coordinate.}. The angular wave functions are calculated through an expansion in Legendre polynomials:
\begin{equation}
S_{0n}(c, \eta) = {\sum_{k=0,1}^{\infty}} d_k(c|0n) P_k(\eta)
\end{equation}
where the summation runs over even $k$ for even $n$ and odd $k$ for odd $n$. The same coefficients $d_k$ determine the radial functions through a series of spherical Bessel functions $j_k(c\zeta)$, which remains numerically stable even for large $c$ where traditional power series fail. These coefficients satisfy a three-term recurrence relation. The library determines them with Bouwkamp's algorithm~\cite{Bouwkamp1970} and fixes the overall scaling through the normalization scheme of Meixner and Sch\"afke~\cite{meixner_mathieusche_1954} to prevent subtraction errors.
As the prolate generation is part of the classical post-processing routine (sampled PFD), it is carried out entirely on a classical computer.

\subsubsection{Hamiltonian representation and state preparation}

The Jordan--Wigner mapping was employed in all calculations, as implemented in the PennyLane library~\cite{bergholm2018pennylane}. All calculations were performed employing the STO-3G basis set\,\cite{hehre1969quantum}. While we acknowledge that this basis set does not yield highly accurate results, we adopt it for the sake of simplicity, as the primary goal of the following numerical experiments is to evaluate the features of the QPD.
The initial input wave function was prepared and uploaded to the quantum register using a custom modification of the Overlapper library~\cite{Overlapper}, which called PySCF\,\cite{sun2020recent} (see below). All initial approximate wave functions applied in our calculations had the following structure:
\begin{align}
    |\Psi_{input}\rangle = \frac{1}{\sqrt{n_{\rm ts}}} \sum_i^{n_{ts}} |i\rangle
    \label{eq:input_state_expansion}
\end{align}
where $n_{\rm ts}$ is the number of target states of our calculation and $|i\rangle$ is a normalized approximation of the true eigenstate $|i_{\rm FCI}\rangle$ expressed as a linear combination of $K$ Slater determinants $|\Phi_j\rangle$,
\begin{align}
    |i\rangle = \mathcal{N}\sum_j^{K}c_j|\Phi_j\rangle \, ,
    \label{eq:reference_state_expansion}
\end{align}
with normalization constant $\mathcal{N}$.

\subsubsection{Emulation of QPD}

We emulate the Hadamard test circuit through a custom implementation based on JAX~\cite{jax2018github}.
This quantum subroutine of QPD generates signal samples $C(t_k= \pi k / W_s)$ for $k = 1 , \cdots , N_s$. We denote the maximal evolution time as $T_{\text{max}} = \pi N_s / W_s$ and set $T = T_{\text{max}} /2$. 
Exploiting the symmetry $C(-t) = C(t)^*$ gives $2 N_s +1 $ signal samples in $[-2T, 2T]$. Then, sampled PFD can be applied to compute the frequencies of the signal within the target frequency band $B_{f, \omega^*} = [\omega^*- W_f, \omega^* + W_f]$. 
Note that we require a sampling rate large enough such that $W_s \geq W_f + W_C$, where $W_C$ is the signal bandwidth. To obtain a pre-trial for the filter center $\omega^*$ when approximating the lowest lying energies, we used a configuration interaction singles and 
doubles (CISD)\,\cite{sherrill1999configuration} calculation with the PySCF program\,\cite{sun2020recent} (note that alternatives are also possible, such as Ref.\,\cite{vasiliev1999ab}). All reference wave functions of both full configuration 
interaction (FCI, representing an exact diagonalization result) and complete active space configuration interaction (CASCI, equivalent to FCI in a restricted orbital space) quality have been obtained with PySCF\,\cite{sun2020recent}.

As discussed in Section\,\ref{autocorrelation_sec}, finite circuit measurements introduce statistical noise to the signal samples $C(t_k)$. This stochastic noise constitutes an error source distinct from the deterministic subspace and discretization errors,
and is not captured by the approximation theorem in Section\,\ref{sec:sampledPFD}. To evaluate robustness of QPD under realistic conditions, we incorporate this shot noise into our simulations and analyze the average performance as a function of the total quantum runtime $T_{\text{runtime}}$~\cite{yi2024quantum}:
\begin{align}
    T_{\text{runtime}} = \#_{\text{shots}} \sum_{k=1}^{N_s} t_k.
    \label{runtime}
\end{align}
Evaluating Eq.\,\eqref{runtime} with the Gauss formula yields
\begin{align}
    T_{\text{runtime}} & \approx \#_{\text{shots}} \frac{\pi}{W_s}  \frac{N_s^2}{2} = \#_{\text{shots}}  \mathcal{O}\left(  W_s T^2 \right).  \label{eq:T_runtime_scaling}
\end{align}
Furthermore, we characterize the error scaling with respect to the maximum evolution time $T_{\text{max}}$. While our simulations employ exact time evolution, practical implementations on real quantum hardware incur evolution errors (such as Trotterization) that accumulate with $T_{\text{max}}$. Consequently, $T_{\text{max}}$ serves as a direct proxy for the required circuit depth and hardware coherence time. 
Demonstrating chemical accuracy at low $T_{\text{max}}$ is therefore necessary for the algorithm's applicability on early fault-tolerant devices.

For our scans in $T_{\text{max}} \sim N_s$ we set the number of shots per sample to $\text{\#}_{\text{shots}}  = F \sqrt{N_s \log (N_s)}$, where $F$ is some constant independent of $N_s$. We found that an appropriate choice of $F$ depends on the amplitude of the targeted frequencies. 
For weakly correlated systems, where the initial trial with few determinants already has a large overlap with the target state, a value of $F = 2$ was sufficient. For strongly correlated systems---characterized by a low overlap 
(i.e., small amplitudes for the target frequencies) with the target eigenstates and a high number of frequencies in the autocorrelation signal (see Section~\ref{input_state_study})---we set $F = 10$. We will rationalize this behavior 
when discussing the results of Figure\,\ref{mee_heisenberg}.

All numerical results were obtained with the \texttt{Quantum-Prolate-Diagonalization} package, which we developed for the purpose of this paper and make available in the GitHub repository \cite{QPDcode}.

\subsection{Multiple eigenvalue estimation at chemical accuracy}
\label{sec:mee_estimate}

We propose a new algorithm for the simultaneous estimation of multiple eigenvalues of a quantum Hamiltonian $H$. As an example, we focus on the electronic Hamiltonian $H_{el}$,
\begin{equation}
\label{molecular_hamiltonian}
    H_{el} = \sum_{p,q} h_{pq} a^{\dagger}_{p}a_q + \frac{1}{2}\sum_{p,q,r,s} g_{pqrs} a^{\dagger}_{p}a^{\dagger}_{r}a_{s}a_{q}
\end{equation}
where $a_p$ ($a^{\dagger}_p$) is  the annihilation (creation) operator acting on the spin-orbital $p$, and $h_{pq}$ and $g_{pqrs}$ are the usual one- and two-electron integrals\,\cite{helgaker2013molecular}
accounting for the kinetic and electron-nuclei and for the electron-electron interaction terms, respectively.

We demonstrate in the following that QPD can resolve multiple energy levels at chemical accuracy (that is, an error $\epsilon \leq 1\,\text{mHa}$ in the total electronic energy is obtained). As a prototypical test case, we estimate the excitation energy between the two lowest singlet states of benzene, corresponding to a $\pi \rightarrow \pi^*$ transition.
We restricted the calculation to an active space of six electrons in six orbitals, denoted as (6e,6o), which corresponds to the $\pi$-system of benzene.

First, we quantify the subspace error estimates provided by Theorem\,\ref{thm:PFD_accuracy}. 
For this purpose, we do not include shot noise on the signal samples. We prepare an initial state with multiple excitations outside of the band of interest. In more detail, our initial state contains the ground state and the first excited state --- representing 
the in-band frequencies --- and the 6th, 7th, 8th, and 9th excited states as out-of-band frequencies. 
All the eigenstates have the same squared overlap
with the initial state  $|\langle \Psi_{input} | i \rangle|^2 \approx 0.166$. We obtained the excited states $|i \rangle$ using the 35 most relevant configurations obtained from the corresponding CASCI wavefunction (see Eq.\,\eqref{eq:reference_state_expansion}). Table\,\ref{benzene_ee_table} lists all excitation energies of all the states included in the signal with respect to the ground state energy. These frequencies have been shifted by $\omega^* = (E_{GS}' - E_1')/2 = 6.31271$ with respect to the energies $E_i'$ of the CASCI calculation. By construction, the signal bandwidth is $W_C = 0.54191$.

\begin{table}[H]
    \small
    \caption{Energies (E) and excitation energies (EE) of the benzene electronic states included in the initial state used to obtain the results of Fig.\,\ref{benzene_noiseless_conditioning_plot} and Table\,\ref{numerical_details_bounds_results}. Values are reported in atomic units (Hartree) and shifted by the filter center $\omega^*$ and frozen core energy $E_{core} = -221.50532\,\,\text{Hartree}$.}
    \label{benzene_ee_table}
    \centering
    \begin{tabular*}{\textwidth}{@{\extracolsep{\fill}}lrrrrrr}
        \hline
        \hline
        & $E_{|\rm GS\rangle}$ & $E_{|\rm S_1\rangle}$ & $E_{|\rm S_6\rangle}$ &  $E_{|\rm S_7\rangle}$& $E_{|\rm S_8\rangle}$ & $E_{|\rm S_9\rangle}$ \\ \hline
        E & -0.13008 & 0.13008 & 0.29995 & 0.30997 & 0.31004 & 0.41183 \\
        EE & -- & 0.26015 & 0.43003 & 0.44005 & 0.44012 & 0.54191 \\ \hline\hline
    \end{tabular*}
\end{table}

We set for our QPD calculations $W_f = E_{|S_3\rangle} - \omega^* = 0.29170$, $T_{max} = 62.832$ and $W_s = 5$.
Once the signal is sampled, PFD generates a refined GEP according to Algorithm~\ref{alg:eps_dim_red}. The choice of the trial dimension $M$ can be optimized to minimize the subspace error. By Theorem\,\ref{thm:PFD_accuracy} this subspace error is bounded by,
\begin{align}
  \tilde E_i -E_i & \leq \epsilon^{\text{Prlt}}_M \frac{\int_{\lambda > W_f} (\lambda - \tilde E_i) d \alpha(\lambda)  }{\lambda_m(B_M^m) -  \| \mathcal{N}^{(B)}_M \|} \\
   & \leq \epsilon^{\text{Prlt}}_M \frac{\sum_{E_k > W_f} (E_k - \tilde E_i) |a_k|^2  }{\lambda_m(B_M^m) -   \epsilon^{\text{Prlt}}_M },
   \label{eq:sub_err_quantification}
\end{align} 
where, in the last inequality, we exploited Eq.\,\eqref{eq:noise_weigt_bound}. Hence, we neglect the discretization error in our analysis. 
Estimate \eqref{eq:sub_err_quantification} is characterized by three different contributions: (i) an error factor $\epsilon_M^{\text{Prlt}}$, (ii) conditioning $\lambda_m(B_M)^{-1}$, and (iii) integral terms that are specific to a given eigenvalue. As shown in Fig.~\ref{benzene_noiseless_conditioning_plot}, increasing the guess dimension leads to an increase in the subspace error due to the growth of $\epsilon_M^{\text{Prlt}}$. Conversely, the conditioning of the GEP improves as the refinement is performed with a larger number of vectors.

\begin{figure}[H]
    \centering
    \includegraphics[width=0.85\linewidth]{fig2.PNG}
    \caption{Effect of the guess dimension $M$ on the stability and accuracy of QPD. Conditioning $\lambda_2(B_M)^{-1}$ of the refined GEP (red bars) and the subspace-based error parameter $\epsilon_M^{\text{Prlt}}$ (blue bars) at different trial dimensions $M$. The error parameter is shown on a logarithmic scale.
    }
    \label{benzene_noiseless_conditioning_plot}
\end{figure}

However, Fig.\,\ref{benzene_noiseless_conditioning_plot} shows that the decrease of $\lambda_m(B_M)^{-1}$ in $M$ is not significant. Therefore, we choose $M=2$ in our computations to suppress out-of-band frequencies. 
Table~\ref{error_bounds_table} lists the errors of the energy approximations and the corresponding bounds. The results confirm that Eq.~\eqref{eq:sub_err_quantification} provides sharp estimates of the actual errors of QPD.

\begin{table}[H]
    \small
    \caption{Error quantification of QPD. We report the error obtained numerically, $\epsilon$, and the one estimated by Eq.\,\eqref{eq:sub_err_quantification}, $\epsilon_{bound}$ (given in Hartree atomic units).}
    \label{numerical_details_bounds_results}
    \centering
    \begin{tabular*}{\textwidth}{@{\extracolsep{\fill}}lrr}
        \hline
        \hline
        & $E_{|\rm GS\rangle}$ & $E_{|\rm S_1\rangle}$ \\ \hline
        $\epsilon$ & 0.000924 & 0.001077 \\ 
        $\epsilon_{bound}$& 0.002931 & 0.001286 \\\hline\hline
    \end{tabular*}
    \label{error_bounds_table}
\end{table}

We now move to our second experiment where we include statistical noise due to finite measurements on the signal samples and demonstrate that QPD, nevertheless, resolves multiple energy estimates at chemical accuracy. In this example, we construct the input state from the Hartree–Fock determinant for the ground state and the five dominant configurations of the first excited singlet, 
including the $\text{HOMO}-1 \rightarrow \text{LUMO}$ and $\text{HOMO} \rightarrow \text{LUMO}+1$ transitions.
Owing to the weakly correlated character of the system, this initial state suffices to generate an autocorrelation function dominated by the ground and first excited state.

For this calculation, we set $\omega^* = 6.3$, $W_f = 1$, $W_s = 3$, and $T_{\text{max}} = 104.72$ (in Hartree atomic units, a.u.). The autocorrelation samples $C(t_k)$ were estimated using only $\#_{\text{shots}} = 13$ circuit shots per signal sample, 
yielding particularly noisy input data $\tilde C(t_k)$.
The results following the workflow of PFD are presented in Figure\,\ref{benzene_calc}. 
The spectra of the weight matrices $B_M$ confirm that the signal has $m=2$ dominant frequencies in $B_{f,\omega^*}$ (Figure\,\ref{benzene_calc}a).
We set the trial dimension to half of the critical dimension $M = \lfloor W_f T / \pi \rfloor = 16$. Algorithm\,\ref{alg:eps_dim_red} generates a refined GEP $[A_M^m, B_M^m]$ whose eigenvalues approximate the in-band frequencies. The corresponding amplitudes are estimated through 
Eq.\,\eqref{eq:Amp_estimation}. Figure\,\ref{benzene_calc}c--d depict the reconstruction of the filtered signal (blue solid line) from the noisy samples (red stars). Note that we actually depict the shifted autocorrelation 
$C(t)e^{-i \omega^* t}$ such that relevant energies appear in the low-frequency range. The energy estimates of our computation and their accuracy with respect to a CASCI reference calculation are listed in Table~\ref{numerical_details}. In particular, we find that QPD surpasses chemical accuracy.

\begin{figure}[h!]
    \centering
    \includegraphics[width=\linewidth]{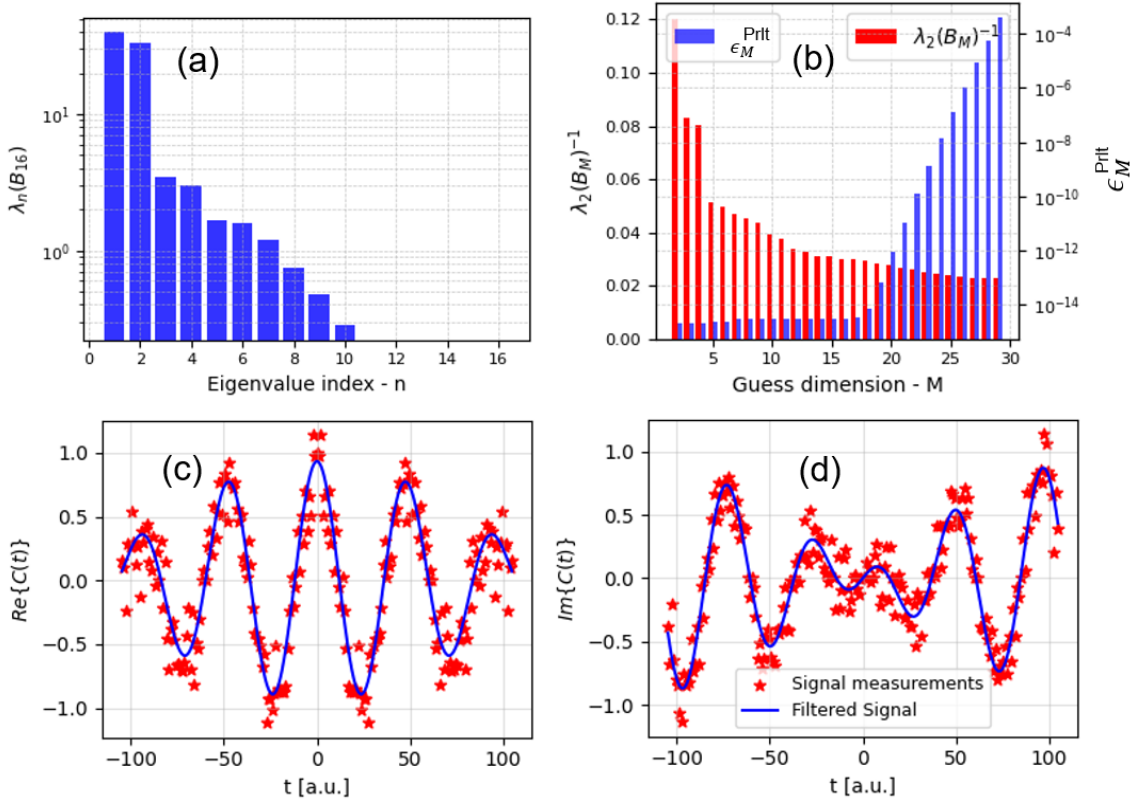}
    \caption{QPD applied to the benzene molecule. (a) Noisy samples of the electronic autocorrelation function were applied to generate the weight matrix $B_{16}$ (Eq.\,\ref{eq:B_sl}) at trial dimension $M=16$. Its spectrum was exploited to detect the number of dominant frequencies $m$ within $B_{f, \omega^*}$. 
    (c) and (d) Red stars denote noisy samples of the shifted signal $\tilde C(t_k)e^{-i\omega^* t}$; the blue solid line shows the reconstructed filtered signal. (c) shows the real part, and (d) the imaginary part.
    (b) Conditioning $\lambda_2(B_M)^{-1}$ of the refined GEP (red bars) and the subspace-based error parameter $\epsilon_M^{\text{Prlt}}$ (blue bars) at different trial dimensions $M$.The error parameter is shown on a logarithmic scale.
    }
    \label{benzene_calc}
\end{figure}

\begin{table}[H]
    \small
    \caption{Absolute energies of the ground (GS) and first excited (S$_1$) states (middle columns) and the lowest $\pi\rightarrow\pi^*$ singlet energy gap (last column) of the benzene molecule. Results obtained with QPD and with CASCI are reported in Hartree.}
    \label{numerical_details}
    \centering
    \begin{tabular*}{\textwidth}{@{\extracolsep{\fill}}lrrr}
        \hline
        \hline
        & $E_{|\rm GS\rangle}$ & $E_{|\rm S_1\rangle}$ & $\Delta E_{|\rm GS\rangle \rightarrow |\rm S_1\rangle} $ \\ \hline
        QPD (6e,6o) & -227.948210 & -227.688626 & 0.259584 \\ 
        CASCI (6e,6o)& -227.948115 & -227.687960 & 0.260154 \\
        $\epsilon$ & 0.000095 & 0.000666 & 0.000570 \\\hline\hline
    \end{tabular*}
\end{table}

Our choice to set the trial dimension $M$ to half the essential dimension is not necessarily optimal. However,  we found that this setting consistently achieves low errors due to 
off-band frequencies and prolate sampling, while maintaining good conditioning of the weight matrix $B_M^m$. For noisy input data $\tilde C(t_k)$, it is especially beneficial to choose a larger trial dimension $M \gtrsim m$, as it improves the conditioning $\lambda_m(B_M^m)^{-1}$ (see Appendix~\ref{sec:SubspaceProtocols} and \cite[Section~4]{stroschein2025approximationframeworksubspacebasedmethods}). Better conditioning, in turn, improves the stability of the method with respect to additional error sources (Theorem~\ref{thm:subspace_master}). 

Regarding this aspect, the stability achieved by QPD is quite remarkable, as we are able to estimate the two eigenvalues within chemical accuracy using only 13 shots per signal sample. We compare this result with other examples in the literature~\cite{klymko2022real, lee2024sampling, shen2025estimating}.

We begin with the in-depth analysis presented in Ref.~\cite{lee2024sampling}, where the authors prove that, in order to maintain a constant conditioning error, the number of measurements per matrix element of a GEP must scale as $\mathcal{O}(n^3)$, with $n$ denoting the dimension of the GEP. For comparison, all numerical experiments reported in their work require a number of shots on the order of at least $10^6$ measurements per sample. Refs.~\cite{klymko2022real, shen2025estimating} propose an approach very similar to the one considered here, in which a subspace problem constructed by sampling snapshots of the autocorrelation function is used to determine a subset of the eigenvalues of a Hamiltonian $H$. In both cases, to mitigate the ill-conditioning of the noisy GEP, the authors adopt a thresholding strategy\,\footnote{When using singular value thresholding, the GEP is regularized truncating the subspace with a threshold parameter $\delta$. The original GEP $[A, B]$ is then transformed into $[\tilde{A}, \tilde{B}]$ where $\tilde{B} = \sum_{l, \sigma_l \geq \delta} \sigma_l \textbf{u}_l\textbf{v}_l^{\dagger}$ and $\tilde{A}= \tilde{\textbf{U}}^{\dagger} A \tilde{\textbf{V}}$ where $\textbf{U, V}$  are the matrices of the left $\textbf{u}$ and right $\textbf{v}$ singular vectors after truncation (e.g. $\tilde{\textbf{V}}_{ij} = \textbf{V}_{ij} \,, \sigma_j \geq \delta $).}. Their results are obtained by adding Gaussian noise $\mathcal{N}(0,10^{-2})$, which approximately corresponds to a number of measurements on the order of $10^4$.
By contrast, our results demonstrate that our Algorithm~1 is able to reduce the required number of measurements by several orders of magnitude with respect to the results of \cite{klymko2022real, lee2024sampling, shen2025estimating}. Given the importance of this aspect for the entire class of quantum subspace algorithms~\cite{lee2025efficient}, we will provide a more detailed discussion of this topic in future work.

We recall that sampled PFD requires a sufficiently large sampling rate $W_s\geq W_f + W_C$, where $W_C$ is the bandwidth of the shifted signal $C(t)e^{-i\omega^* t}$. A simple way to rigorously estimate $W_C$ is provided by spectral bounds on the electronic Hamiltonian as shown in Ref.\,\cite{cortes2024assessing}. As this probably overestimates the resources needed, a possible alternative might be a cheap classical calculation to obtain a more informed trial of $W_C$. 

Finally, we conclude this section remarking that, as stated in Eq.\,\eqref{eq:accuracy_condition}, the frequency density is a critical parameter at fixed $T_{\max}$. Of course, in routine applications the spectral density of the signal is not known $\textit{a priori}$, but we note that cheap and quick classical calculations\cite{stein2016automated} can provide a useful guide to not overestimate the minimum $T_{\text{max}}$ needed.

\subsection{Numerical evidence for heisenberg-limited scaling}
\label{sec:heisenberg_limit}

In this section, we provide numerical evidence that QPD achieves Heisenberg scaling in both the weak and strong correlation limit for ground state energy estimation and for multiple eigenvalue estimation. This has been previously achieved by only a few other algorithms relying on a single-ancilla phase estimation circuit\,\cite{ding2024quantum, ding2023simultaneous}.
We recall that Heisenberg scaling is the optimal bound achievable according to quantum information theory\,\cite{giovannetti2006quantum}. From an algorithmic point of view, it implies that the quantum computational resources employed are inversely proportional to the error estimate of the protocol, that is, $T_{\text{runtime}} = \mathcal{O}(\epsilon^{-1})$. Opposed to this regime, we have the shot noise limit, where the accuracy of our eigenvalue estimate is bounded by the accuracy of the single sample estimate (see Section\,\ref{autocorrelation_sec}), for which we have $T_{\text{runtime}} = \mathcal{O}(\epsilon^{-2})$.
Moreover, we also monitored how the error estimate depends on $T_{\text{max}}$, which is another important parameter for evaluating the performance of QPD. 
Algorithms with a better accuracy scaling in $T_{\text{max}}$ are particularly valuable for quantum circuits, because they lead to shallower circuits and shorter run times (see Eq.~\eqref{eq:T_runtime_scaling}).

\subsubsection{Ground state energy estimation}
\label{gsee}

We consider ground state energy estimation for the linear, $D_{\infty h}$-symmetric $H_8$ system 
with internucelar distance $r_{\rm HH} = 2 \text{ \angstrom }$. Although experimentally unstable\,\cite{stella2011strong}, hydrogen chains have been extensively characterized as the simplest systems revealing strong electronic correlation phenomena\,\cite{motta2017towards, motta2020ground}. 
Besides this system having a strong multireference character, 
it also features a dense spectrum in the low-energy section. 

Although dense spectra are generally challenging to describe, our results demonstrate that QPD is capable of efficiently and reliably addressing them. The initial input state used for these calculations was obtained by uploading the first five most 
relevant determinants from the exact FCI wave function. Of course, in an actual application where QPD is applied to larger systems (specifically, too larger orbital bases) no FCI reference (i.e., reference from exact diagonalization) wave function will be available. Instead, one may exploit approximate CI methods or approximate matrix product state calculations. The latter have been successfully exploited for the automated construction of active orbital spaces \cite{stein2016automated} and for the analysis and preparation of initial states for QPE \cite{morchen2024classification}.
With such an input state we obtain an initial overlap $|\langle \Psi | GS_{\text{FCI}}\rangle | ^2 \approx 0.33$. We set the sampling rate to $W_s = 3$ \,\,a.u.

\begin{figure}[h!]
    \centering
    \includegraphics[width=\linewidth]{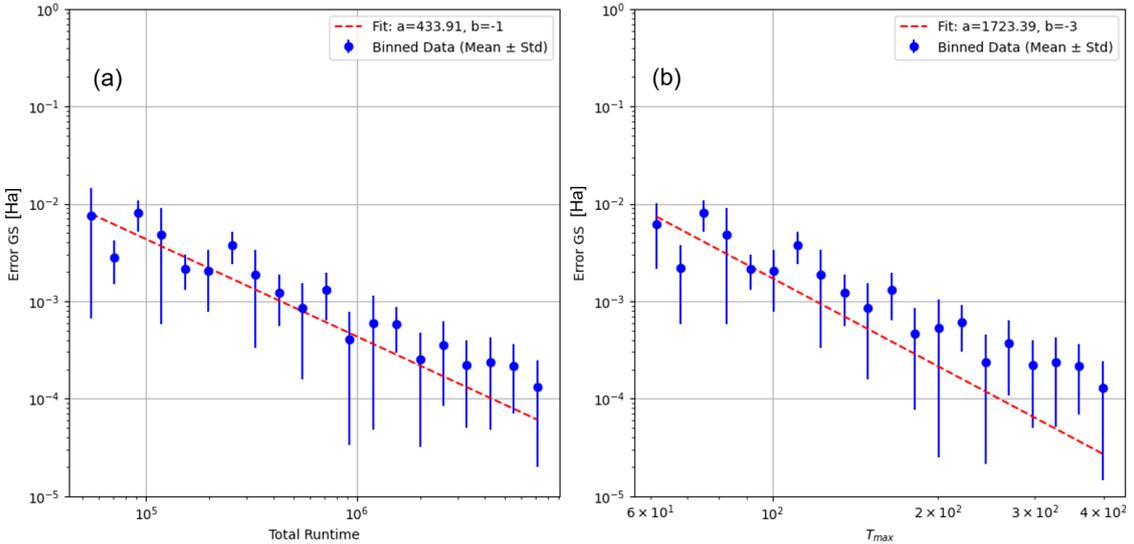}
    \caption{Empirical scaling of QPD 
    for the task of ground state energy estimation for H$_8$.      
    Left (a): average error on the ground state energy as a function of the total runtime (see Eq.\,\eqref{runtime}). 
    Right (b): average error on the ground state energy as a function of the maximal evolution time $T_{\text{max}}$. Blue dots and corresponding error bars are obtained by averaging different runs of QPD where we scan over the maximal evolution time $T_{\text{max}}$. All other parameters have been set according to Section\,\ref{computational_det}. Red dashed lines correspond to the fit of the function $y = a \, e^{b}$; $b$ values are kept fixed and set to $b=-1 \, (-3)$ in the left (right) plot.}
    \label{heisenb-scaling1}
\end{figure}
Figure~\ref{heisenb-scaling1} collects the results of multiple calculations obtained by scanning the total length of the sampled signal. Final results were computed by binning the runtime axis (or the $T_{\text{max}}$ axis) into 20 bins and calculating the mean and standard deviation within each bin. 

In Figure\,\ref{heisenb-scaling1}a, which reports the error on the ground state energy estimate (in Hartree) as a function of the runtime $T_{\text{runtime}}$ (Eq.\,\eqref{runtime}), we notice a very good agreement with the inverse scaling law dictated by the Heisenberg limit.
Figure\,\ref{heisenb-scaling1}b shows that QPD exhibits a scaling of $\epsilon = \mathcal{O}(T_{\text{max}}^{-3})$. 

This result represents a cubic speedup compared to the standard QPE scaling which is of $\epsilon = \mathcal{O}(T_{\text{max}}^{-1})$ and a quadratic improvement with respect to other state-of-the-art approaches \cite{ding2024esprit, ding2024quantum}. 
We point out that Eq.\,\eqref{eq:T_runtime_scaling} and our choice of $\#_{\text{shots}}$ imply a scaling $T_{\text{runtime}} = \mathcal{O}(T_{\text{max}}^{2.5}\sqrt{\text{log}(T_{\text{max}})})$, which is almost cubical and in agreement with our numerical results.
Finally, we note that a cubic improvement in circuit depth (maximal evolution time) has already been shown in the literature when a cosine window was inserted into the standard QPE circuit\cite{rendon2022effects}.

\subsubsection{Multiple eigenvalue estimation}
\label{mee}

We now consider the more general task of multiple eigenvalue estimation. Thus, we aim to determine more than one frequency within a given band $B_{f, \omega^*}$.
We will examine two systems representing opposite limits of weak and strong correlation. As an example for the weak correlation case, we study the LiH molecule at its equilibrium internuclear distance; for the strong correlation regime we again exploit the H$_8$ system. In the first case, all frequencies present in the signal also lie in band $B_{f, \omega^*}$. Therefore, there are no out-of-band contributions. By contrast, the second example corresponds to signals with many frequencies outside of $B_{f, \omega^*}$. In this case, we only recover a small fraction of all the frequencies present in $C(t)$.

Although quantum computing approaches are typically considered to offer the greatest advantages for strongly correlated systems\cite{motta2022emerging}, we believe that a utility-scale quantum algorithm should also be able to address efficiently, that is, with provable advantages over classical computations, a wider span of the electron-correlation (and hence, chemical) space. While electronic correlation has been thoroughly classified~\cite{morchen2024classification,cioslowski2012robust,materia2024quantum}, we here focus specifically on its impact on correlation functions $C(t)$. In Ref.~\cite{zurek1982environment}, Zurek showed that correlation function decay rates and fluctuation amplitudes depend on the number of dynamically active states. The higher the number of states involved in the dynamics the faster the decay of autocorrelations. This is consistent with the orthogonality catastrophe~\cite{anderson1967infrared}: the system's evolution tends towards states with exponentially shrinking overlaps with a given eigenstate. Increasing the number of states also suppresses long-time fluctuations. In weakly correlated systems, $C(t)$ preserves long-time oscillations, even with few initial determinants in the initial input state. By contrast, for strong correlations, we observe rapid damping and suppression of recurrence beatings. Noise sensitivity increases as oscillations fade, masking signal information. 

\begin{figure}[h!]
    \centering
    \includegraphics[width=\linewidth]{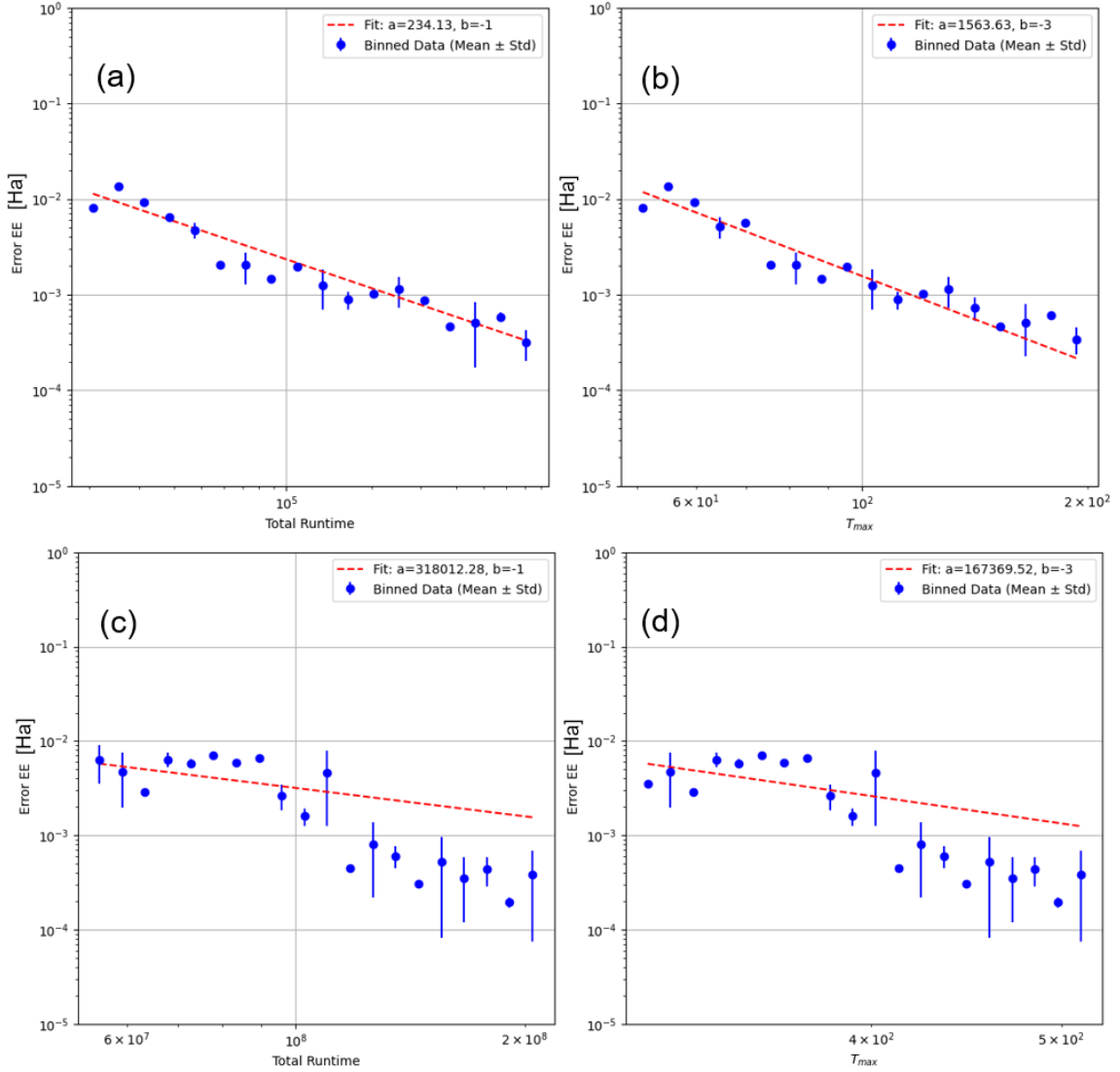}
    \caption{Empirical scaling of QPD   
    in the weak ($LiH$ molecule, a-b) and strong (H$_8$ system, c-d) correlation limit for the task of simultaneous multiple eigenvalue estimation.  
    Left: average error on the excitation energies as a function of the total runtime (see Eq.\,\eqref{runtime}). Right: average error on the excitation energies as a function of the maximal evolution time $T_{\max}$. Blue dots and corresponding error bars were obtained by averaging different runs of QPD 
    where we scan over the maximal evolution time $T_{\max}$.     
    All other parameters have been set according to Section\,\ref{computational_det}. Red dashed lines correspond to the fit of the function $y = a \, e^{b}$; $b$ values were kept fixed and set to $b=-1 \, (-3)$ in the left (right) plot.}
    \label{mee_heisenberg}
\end{figure}

Analogously to the previous section, Figure\ \ref{mee_heisenberg} collects the results of multiple calculations obtained by scanning the total length of the sampled signal. Final results were computed by binning the runtime axis (or the $T_{\text{max}}$ axis) into 20 bins and calculating the mean and standard deviation within each bin. 
For the weak correlation example, we estimated simultaneously the eigenvalues of the four lowest eigenstates of LiH at equilibrium distance ($r_{\rm LiH}=1.6\,\, \text{\angstrom} $). These are: $|\rm GS\rangle, |\rm S_1\rangle, |\rm S_2\rangle$, and $|\rm S_{4}\rangle$. We built the input state by selecting, for each eigenstate, the four most important determinants. Again, we point out that scaling up the initial state preparation problem to larger systems can be done via approximate matrix product states calculations\,\cite{stein2016automated, morchen2024classification}. 
Because of the weakly correlated nature of the system, the overlap of this few-determinant approximation with the target state is about 1.0. In all LiH calculations, we set a value of $W_s = 3$\,\,a.u. As we can understand from the results in Figure \ref{mee_heisenberg}, all considerations and observations made for the ground state energy estimation problem in the previous section also apply to this case, with the average error on the excitation energies being inversely proportional to the quantum runtime and scaling as $\mathcal{O}(T_{\text{max}}^{-3})$ with the maximal evolution time (see Figure~\ref{mee_heisenberg}a–b).
Notably, the multiplicative prefactor (that is, the $a$ parameter in the legend of Figure\,\ref{mee_heisenberg}a-b) resulting from the power-law fit (red dashed line) is slightly smaller than the ones obtained in the previous section (the $a$ parameter in Figure\,\ref{heisenb-scaling1}). 
This observation may be related to the facts that (i) the overall input wave function shows a high fidelity with the target state and that (ii) the energy spectrum is much less dense for LiH than for H$_8$. 
For the multiple eigenvalue estimation on H$_8$ (Figure\,\ref{mee_heisenberg}c-d), the choice of the initial input state requires more attention. Figure~\ref{initial_input_state_appendix} presents a characterization of the input state used to obtain the results of Figure\,\ref{mee_heisenberg}c-d. The initial input state is defined as noted above in Eq.\,\eqref{eq:input_state_expansion} and thus reads 
\begin{align}
    |\Psi\rangle = \frac{1}{2}
\left[
|\rm GS\rangle + |\rm S_1\rangle + |\rm S_2\rangle + |\rm S_{4}\rangle
\right].
\end{align}
The left panel of Figure~\ref{initial_input_state_appendix}a presents the total squared overlap $|\langle \Psi | \Psi_{\text{FCI}} \rangle|^2 = \frac{1}{4}\sum_i|\langle \Psi |i_{\rm FCI} \rangle|^2$ of the state $|\Psi\rangle$ and each corresponding FCI reference eigenstate wave function 
as a function of the number of determinants included in each approximation $|i\rangle$. With increasing number of determinants, the total squared overlap improves monotonically, approaching unity, which shows the systematic convergence of the prepared state toward the target state. 
After having characterized the total squared overlap as a function of the number of determinants included in our approximation for the target eigenstates, we selected for the calculations shown in Figure\,\ref{mee_heisenberg}c-d the 90 most relevant determinants per eigenstate. 

With this setting, the total squared overlap between the initial input state and the target eigenstates is approximately 0.55. 
This value implies for the excited states (whose wavefunction is less peaked than the ground state wavefunction, Fig.\,\ref{initial_input_state_appendix}) a squared overlap $|\langle \Psi|i_{\text{FCI}}\rangle|^2 \approx 0.1$. In all calculations on H$_8$, we set $W_s = 6\,\,\text{a.u.} \,, W_f \approx 0.47 $ and the number of frequencies within $B_{f, \omega^*}$ was set to $m = 7$.

\begin{figure}[h!]
    \centering
    \includegraphics[width=\linewidth]{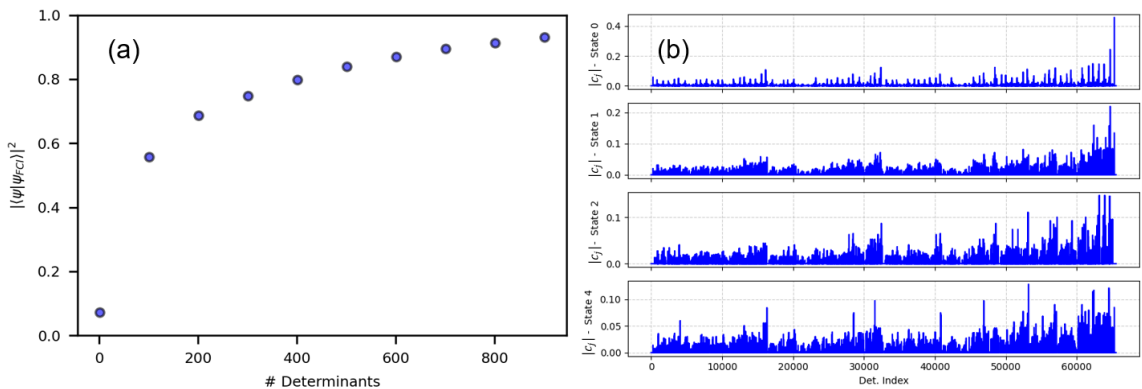}
    \caption{Initial input state characterization for the calculations on H$_8$ reported in Figure\ \ref{mee_heisenberg}c-d. (a) Initial input state total squared overlap as a function of the number of Slater determinants considered. (b) Absolute values of the expansion coefficients $c_j$ 
    for the targeted states: $|GS\rangle$, $|S_1\rangle$, $|S_2\rangle$, and $|S_4\rangle$ (see Eq.\,\eqref{eq:reference_state_expansion}).}
    \label{initial_input_state_appendix}
\end{figure}

For the sake of simplicity, we chose to include the same number of electronic configurations for each target eigenstate. This choice is not optimal and implies that the frequencies we aim to recover are not equally represented in $C(t)$. The reason for this can be seen in Figure~\ref{initial_input_state_appendix}b, which shows the distribution of the absolute value of the expansion coefficients, $|c_j|$, (cf.\ Eq.\,\eqref{eq:reference_state_expansion}) for each determinant contributing to the target eigenstates. While the ground state still shows a slight predominance of the $|\rm HF\rangle$ determinant, we observe that the higher the energy of a state is, the less compact its representation becomes. This is not only due to the intrinsic electronic structure of the system, but also reflects the nature of the CI expansion built upon the mean-field determinant $|\rm HF\rangle$, with orbitals optimized for the ground state by virtue of the variational principle.

We also observe that the average error in the excitation energies (plotted as a function of both the total runtime and $T_{\max}$) follows two distinct regimes, with a transition occurring around $T_{\text{runtime}} \approx 10^8$ and $T_{\max} \approx 400\,\,\text{a.u.}$ This behavior can be explained in the light of Eq.\,\eqref{eq:accuracy_condition}: data points at lower values of $T_{\text{runtime}}$ (and $T_{\max}$) do not satisfy this condition. This is due to the high density of the spectrum of H$_8$, where the eigenvalues of interest are very close in energy. As a result, the accuracy exhibits a marked change in behavior once $T_{\max}$ (and therefore also the total runtime) exceeds a certain threshold.
If we, however, focus on the region $T_{\text{runtime}}\geq 10^8$ and $T_{\max} \geq 400$ \,\,a.u. we find again that the average error on the excitation energies behaves as per Fig.\,\ref{mee_heisenberg}a-b and Fig.\,\ref{heisenb-scaling1}.

Finally, to conclude this section, we emphasize that the spectral density relevant to the accuracy of the sampled PFD is dictated by the choice of $B_{f, \omega^*}$, since QPD is tasked with recovering all frequencies within this bandwidth. For this reason, we study the effect of the initial input state overlap in the next section, which can deeply affect the frequency distribution determining $C(t)$.

\subsection{Effect of the initial input state}
\label{input_state_study}

Selecting and initializing well-suited quantum states is deeply intertwined with the computational difficulty of solving electronic structure problems. In the general case, computing ground state energies of local Hamiltonians is proven to be QMA-hard \cite{kempe2006complexity, o2022intractability}. Nonetheless, from a total resource perspective, encoding a classically obtained state on a quantum device is generally far less resource-intensive than executing a complete quantum energy estimation algorithm, regardless of the specific technique employed \cite{fomichev2024initial, kiss2025early}. In more detail, it has been shown\,\cite{fomichev2024initial} that preparing an initial input state consisting of about a thousand determinants (e.g., expansion terms in Eq.\, (57)) for a system with one hundred orbitals would require $10^5-10^6$ Toffoli gates, while the total circuit cost for a standard QPE circuit for such a system would be on the order of $10^{10}$ Toffoli gates. This means that the cost of preparing an initial input state with sufficient support to run efficiently a phase estimation procedure should be considered as relatively negligible with respect to the gains in terms of total runtime as compared to a poor initialization. Furthermore, if one considers embedding methods\,\cite{erakovic2025high} or recent results regarding MPS preparation\,\cite{berry2024rapid, smith2024constant}, 
this cost is reduced even further.

This fact, along with the consequences of the orthogonality catastrophe \cite{anderson1967infrared}, emphasizes the need to understand how the initial state selection affects the efficiency and accuracy of a quantum algorithm. Assessing the utility of a quantum algorithm hinges on the capacity of the latter to extract eigenvalues at chemically relevant accuracy despite an initial state that can be generated with limited amount of classical resources.
In Figure\,\ref{initial_input_state}, we report the results of the effect of the initial input state quality on the accuracy of the energy estimate of QPD. All calculations were carried out with a sampling rate equal to $W_s = 6$\,\,a.u. For each number of determinants included in the initial input state (ranging from 1 up to 901 determinants), 
we ran 60 QPD calculations scanning the value of $T_{\max}$ from $T_{\max} = 400$ \,\,a.u. to $T_{\max} = 1000$ \,\,a.u. In all linear combinations constructed for the different initial input states, we selected the determinants with the highest weight in the exact FCI reference wave function.

In Figure\,\ref{initial_input_state}, we report the results of in total 1140 different calculations with average results per each fixed number of determinants included in the initial input state (dots) and the corresponding standard deviation (bars).

\begin{figure}[h!]
    \centering
    \includegraphics[width=\linewidth]{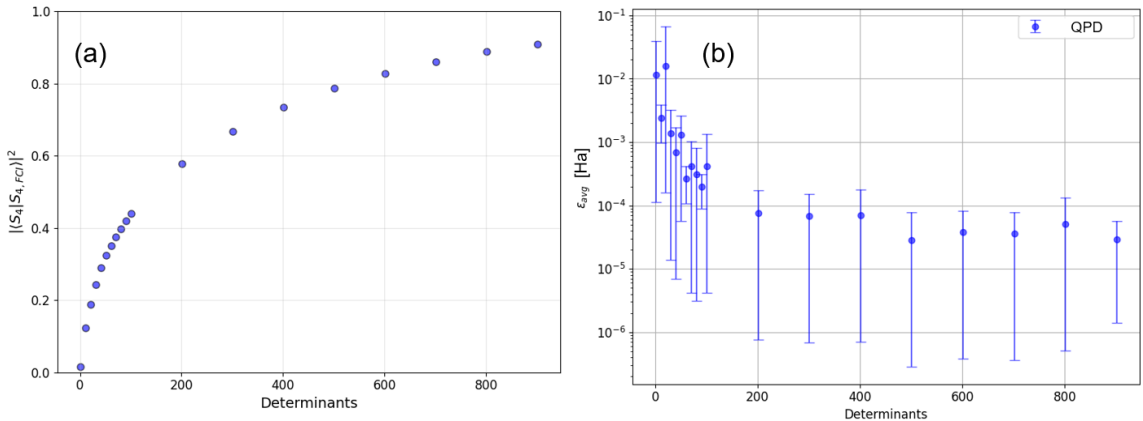}
    \caption{Effect of the initial input state on the QPD algorithm for the H$_8$ system. (a) Initial input state squared overlap as a function of the number of Slater determinants included in the initial input state for the $|S_4\rangle$ state. (b) Average error on the energy estimate as a function of the number of Slater determinants included in the initial input state. Averages were taken over different QPD runs spanning a range of $T_{\text{max}}$ for the sampled autocorrelation function.}
    \label{initial_input_state}
\end{figure}

Figure \ref{initial_input_state}a shows the squared overlap of an approximate state and the exact FCI state as a function of the number of determinants employed in the approximation.  Figure\,\ref{initial_input_state}b shows the average error $\epsilon_{avg}$ (over the multiple runs) for the fourth excited singlet state $|S_4\rangle$ of the $\text{H}_{8}$ system, again as a function of the number of determinants. A logarithmic scale is used on the vertical axis to better visualize the wide range of errors.

In Figure\,\ref{initial_input_state}b, we observe three distinct regimes in the error scaling with respect to the initial state quality. First, the error decreases rapidly until approximately 60 determinants are included, corresponding to an initial squared overlap of $|\langle \Psi|S_{4,\rm FCI}\rangle|^2 \approx 0.35$. In the second regime, adding further determinants continues to reduce the error below the chemical accuracy threshold ($\epsilon_{avg} \leq 10^{-3}$ Hartree), although the rate of improvement is reduced. Finally, beyond 200 determinants, the average error plateaus in the range of $10^{-4} - 10^{-5}$ Hartree.

Our analytical framework (particularly Section \ref{sec:Amp}) provides a rigorous explanation for this behavior. In the first regime, the initial state reaches a sufficient overlap such that the in-band amplitude $|a|_{\text{min}}=|\langle \Psi|S_{4,\rm FCI}\rangle|$ satisfies the detectability condition of Eq.\,\eqref{eq:weight_lowerbound}. Below this threshold, the signal is indistinguishable from the noise background. Beyond this threshold (second regime), increasing the overlap further improves the conditioning of the weight matrix, $\lambda_m(B_M^m)^{-1}$. As established in the general Theorem\,\ref{thm:subspace_master}, a better-conditioned weight matrix linearly suppresses perturbations from arbitrary error sources --- including stochastic shot noise --- leading to the observed accuracy gains.
However, in this regime, Eq.\,\eqref{eq:weight_lowerbound} implies $\lambda_m(B_M^m)^{-1} \sim |a|_{\text{min}}^{-2}$, and the advantage of increasing $|a|_{\text{min}}$ eventually saturates (third regime). Beyond sufficiently large overlaps ($|a|_{\text{min}}^2 \gtrsim  0.5$), further increases yield diminishing returns, as QPD already efficiently suppresses out-of-band frequencies. Consequently, the overall error plateaus at the dominant error level set by the shot noise.

These results demonstrate the robustness of QPD to imperfect state preparation: the method achieves chemical accuracy with initial states accessible via classical approximations, provided they satisfy a modest overlap threshold. This threshold depends on the magnitude of the other error sources.

\section{Conclusions}
We presented a novel eigenvalue problem that makes the duality between spectral analysis, as in eigen decompositions of operators, and frequency decompositions of functions explicit.
To solve this spectral equation, we developed a general subspace-based approximation framework that enables a non-asymptotic error analysis of multiple error sources. We rigorously studied the constraint imposed by
finite observation/simulation time (deriving a non-asymptotic error bound) and the effect of a finite number of signal samples. As a result, we obtained a method that extracts  the frequencies of an autocorrelation function $C(t)$ with high accuracy.
Additional error sources, such as shot noise, can be incorporated seamlessly into the same framework by quantifying their effect on the relevant matrices.

By starting from such a non-asymptotic error framework, our work deviates from previous approaches to rigorous error analysis, such as those in Refs.\,\cite{ding2023even,ding2023simultaneous,li2023adaptive,stroeks2022spectral,yi2024quantum,gunther2025phase,dutkiewicz2022heisenberg,dutkiewicz2024error, ding2024quantum}. 
The framework makes it possible to study multiple error sources in a non-asymptotic manner, which is necessary for error-balancing. This
paves the way toward optimal approximation that connects phase estimation with subspace-based approximation and draws on insights from prolate Fourier theory.

Our analysis yields a quantitative relationship between observation time, spectral density, and achievable precision.
In particular, we derived a sharp \emph{accuracy transition} that resembles the Heisenberg uncertainty relation (Sections~\ref{sec:PFD}-\ref{sec:sampledPFD}). Our accuracy guarantees distinguish between subspace-based error and discretization-based error. Remarkably, the transition appears simultaneously in both, which underscores the maturity of our protocol.

To provide a self-contained and complete picture of our theory, we deferred some of its key parts to the appendix in favor of a streamlined presentation in the main article. In particular,
our theory is built upon the framework of subspace protocols (Section~\ref{sec:SubspaceProtocols}) and insights on optimal time and band concentration (Section~\ref{sec:prolates}). We derived new identities on prolates to give error bounds for our routines (Section~\ref{sec:newIneq}). These new identities also enabled a refined error bound for the truncated prolate sampling formula that gives a rigorous grounding for the long-standing engineering folk theorem (Section~\ref{sec:prolate_sampling}).
Much of the mathematical depth in prolate Fourier theory lies in a commutation relation between an integral and a differential operator. In this work, we highlighted a spectral identity for the differential operator that mirrors Landau's classical threshold result for the integral operator (Section~\ref{sec:dual_signature}, \cite{landau_density_1993}). We provided a simplified proof of this inequality, which strengthens a previously known bound \cite{osipov_analysis_2013}.

For the study of molecular systems, we combined our classical signal processing protocol with the Hadamard test circuit to generate samples of the autocorrelation function.  
This new algorithm we call quantum prolate diagonalization.
Our numerical experiments show that QPD exhibits Heisenberg scaling, with the estimation error decreasing proportionally to $1/T_{\mathrm{runtime}}$ over several orders of magnitude in $T_{\mathrm{runtime}}$, for both single- and multiple-eigenvalue estimation.
Furthermore, our numerical experiments indicate that the average error of the eigenvalues decreases quadratically faster with the maximal evolution time compared to most recent results in the literature \cite{ding2024quantum}.
Our finding is similar to what is known in the literature for the standard QPE algorithm with a cosine filter applied on the readout register \cite{rendon2022effects}. 

Unlike most quantum subspace methods~\cite{lee2025efficient}, QPD is resilient to shot noise and does not suffer from conditioning issues, as demonstrated in our simulations. The theory of subspace protocols effectively resolves the conditioning issue through a trial space refinement and accurate dimension detection (Section A). This has been a long-standing issue for quantum subspace methods. Previous approaches addressed it with a thresholding strategy applied to the overlap matrix~\cite{klymko2022real, lee2024sampling, shen2025estimating}, but the number of measurements required to achieve chemical accuracy was on the order of $10000$ measurements per evaluated matrix element. In stark contrast, QPD achieves the same level of chemical accuracy with only $10$--$100$ shots per sample of the autocorrelation function.

We validated the performance of QPD on realistic quantum chemistry problems, with special attention paid to the role of the input state for the observed accuracy. These results served also as a test for QPD, because, as the recent work by Kiss et al.\,\cite{kiss2025early} highlights, practical implementations of early fault-tolerant algorithms can expose details that may be overlooked from a purely theoretical standpoint.

We have demonstrated that the concentration properties of prolates allow us to accurately estimate multiple eigenvalues with very short evolution times. 
As a future perspective to further reduce the computational cost of QPD we plan to implement sparse sampling approaches\,\cite{tang2013compressed,castaldo2025heisenberg} which may decrease the number of necessary signal samples. 

From a general perspective, not solely focusing on quantum computation, the results presented in this work mark progress toward a deeper understanding of optimal computation. Developing approximation routines to a given accuracy with minimal resources 
require a rigorous understanding of all contributing error sources. This, in turn, allows us to allocate computational resources across different subroutines to balance their respective error contributions. Here, we have taken a first step in this direction by providing a rigorous description of both subspace-based and discretization error.

\section*{Acknowledgments}

The authors acknowledge financial support from the Swiss National Science Foundation through Grant No. 200021\_219616 and from the Novo Nordisk Foundation (Grant No. NNF20OC0059939 'Quantum for Life'). This work has been presented as a contributed talk to the WATOC conference in Oslo in June 2025.

\begin{appendix}
\section{Subspace protocols}
\label{sec:SubspaceProtocols}

To facilitate the derivation of accuracy guarantees for such subspace-based methods, we established Theorem~\ref{thm:subspace_master} as a master theorem. In fact, the accuracy guarantees of PFD (Theorem \ref{thm:PFD_accuracy}) and sampled PFD (Theorem \ref{thm:sampledPFD}) follow as corollaries of Theorem~\ref{thm:subspace_master}. In this section, we briefly review subspace-based methods and present key results adapted from Ref.~\cite{stroschein2025approximationframeworksubspacebasedmethods}.

We assume a self-adjoint operator $H$ acting on a Hilbert space $\Hil$ and consider the eigenvalue problem 
\begin{align}
    H \varphi_i = \lambda_i \varphi_i
\end{align}
with $\varphi_i \in \Hil$ and $\lambda_i \in \mathbb{R}$. Subspace-based methods aim to approximate the eigenvalues of $H$ within a spectral subspace $\mathcal{E}[a,b]$ using a \emph{generalized eigenvalue problem} (GEP).
A numerical routine $\mathsf{P}$ generates a set of trial vectors $v_1 ,\cdots , v_m \in \Hil$ such that they \emph{approximately} span the spectral subspace $\mathcal{E}[a,b]$. Let $V\colon \mathbb{C}^m \to \Hil$ be the linear map defined by $ b \mapsto \sum_i v_i b_i$ and consider the subspace eigenvalue problem
\begin{align}
   V^\dagger H V b_i = \tilde \lambda_i V^\dagger V b_k. \label{eq:GEP_hidden}
\end{align}
Here, $b_k \in \mathbb{C}^m$ and $\tilde \lambda_i \in \mathbb{R}$. The generalized eigenvalues $\tilde \lambda_i$ serve as approximations to the eigenvalues of $H$ contained in the interval $[a,b]$.
Ref.~\cite{stroschein2025approximationframeworksubspacebasedmethods} develops a mathematical framework to quantify the accuracy of such subspace-based methods. To enable a rigorous analysis, the conceptual abstraction of \emph{subspace protocols} is introduced as a unifying formalism:

\begin{definition}[subspace protocols]
   \label{def:SubspaceProtocol}
    A numerical routine $\mathsf{P}$ is called a \emph{subspace protocol} if, for any self-adjoint operator
    $H$ on any Hilbert space $\Hil$ and any spectral subspace of $\mathcal{E}$ of $H$, the protocol $\mathsf{P}$ generates for any given trial dimension $M\in \mathbb{N}$ a generalized eigenvalue problem of the form
      \begin{align}
   A_M b_k = \tilde \lambda_k B_M b_k \label{eq:GEP_simple}.
      \end{align}
       Here  $A_M, B_M \in \mathbb{C}^{M \times M}$ are self-adjoint matrices, $b_k \in \mathbb{C}^M$ and $\tilde \lambda_k \in \mathbb{R}$. We denote the GEPs as \eqref{eq:GEP_simple} as $[A_M, B_M]$ and refer to $B_M$ as the \emph{weight matrix}.
    For a fixed pair $(H, \mathcal{E})$ we write $P_{(H, \mathcal{E})}(M)$ for the GEP generated by $\mathsf{P}$ at a trial dimension $M$.
       
Moreover, we assume the GEPs of a subspace protocol admit decompositions
      \begin{align}
   A_M & = V^\dagger_M H V_M + \delta A_M\\
   B_M & = V^\dagger_M V_M + \delta B_M,
      \end{align}
      where $V_M:\mathbb{C}^M \rightarrow \Hil $ is a liner map, called the \emph{trial vector map}, and $\delta A_M, \delta B_M \in \mathbb{C}^{M\times M}$ are self-adjoint matrices modeling numerical error. Let $P_\mathcal{E}$ denote the orthogonal projection operator onto $\mathcal{E}$ and define $N_M := P^\perp_{\mathcal{E}} V_M$. Then, the matrix
      \begin{align}
          \mathcal{N}^{(B)}_M  := N_M^\dagger N_M + \delta B_M
      \end{align}
      quantifies the deviation from the ideal projection onto $\mathcal{E}$ and is referred to as the \emph{noise weight}.
\end{definition}
The matrices $\delta A_M, \delta B_M$ can capture deviations in the computation of the matrix elements of 
 $V_M^\dagger H V_M$ and $V^\dagger_M V_M$. For example, in sampled PFD of Section \ref{sec:sampledPFD}, these matrix elements are approximated through the prolate sampling formula. Noise in processed data may introduce additional deviations, which can also be modeled by $(\delta A_M, \delta B_M)$.
It is advantageous to have analytical bounds on the noise weights --- particularly for a reliable detection of spectral subspace dimensions. This motivates a stricter class of subspace protocols:
\begin{definition}[$\epsilon$-subspace protocols]
\label{def:eps_SubspaceProtocol}
      Let $\mathsf{P}$ be a subspace protocol as in Definition \ref{def:SubspaceProtocol}, and let 
       $\{\epsilon_M\}_{M=1}^\infty$ be a sequence of non-negative real numbers that bound 
       the noise weights:
      \begin{align}
         \| \mathcal{N}^{(B)}_M \| \leq \epsilon_M \qquad \text{ for all } M \in \mathbb{N}  \label{eq:eps_M}
     \end{align}
     Then, the pair $(\mathsf{P},\{\epsilon_M\}_{M=1}^\infty )$ is called an \emph{$\epsilon$-subspace protocol} and we denote it by $\mathsf{P}^{\{\epsilon_M\}}$.
\end{definition}
The purpose of a subspace protocol is to approximate the eigenvalues of $H$ in the spectral subspace $\mathcal{E}$. However, the dimension of $\mathcal{E}$ is often unknown a priori. In such cases, the dimension can be inferred either from a significant drop in the spectrum of the weight matrix $B_M$ or by applying a noise threshold $\epsilon_{\text{th}} > 0$.

During the numerical operation, the protocol $\mathsf{P}$, the operator $H$, and the target subspace $\mathcal{E}$ remain fixed, while the trial dimension $M$ and the threshold $\epsilon_{\text{th}}$ are variable parameters. Accordingly, we define
\begin{align}
    \mathsf{P}_{(H, \mathcal{E})}(M, \epsilon_{\text{th}})
\end{align}
as an instance of Algorithm~\ref{alg:eps_dim_red}, specified by fixed choices of $\mathsf{P}$, $H$, and $\mathcal{E}$. The algorithm takes $(M, \epsilon_{\text{th}})$ as input, detects the dimension of $\mathcal{E}$, and returns approximations of the eigenvalues of $H$ in $\mathcal{E}$:

\begin{algorithm}[H]
\caption{Operation of a subspace protocol $\mathsf{P}_{(H, \mathcal{E})}(M, \epsilon_{\text{th}})$}
\label{alg:eps_dim_red}
\begin{enumerate}
    \item Use $\mathsf{P}$ to compute the matrices $A_M, B_M \in \mathbb{C}^{M \times M}$, where $M$ is the current trial dimension. Diagonalize $B_M$ and obtain its eigenvalues.
    \item Determine the \emph{detected dimensionality} $m$ as the number of eigenvalues of $B_M$ exceeding the threshold $\epsilon_{\text{th}}$. If $m = M$, increase the trial dimension $M$ and repeat the process.
    \item Let $U_m \in \mathbb{C}^{M \times m}$ be the matrix whose columns are the $m$ leading eigenvectors of $B_M$. Form the projected matrices $A_M^m := U_m^\dagger A_M U_m$ and $B_M^m := U_m^\dagger B_M U_m$. The resulting GEP $[A_M^m, B_M^m]$ is of dimension $m$ and satisfies $\lambda_m(B_M^m) > \epsilon_{\text{th}}$, ensuring it is well-conditioned.
    \item Compute the generalized eigenvalues of $[A_M^m, B_M^m]$. These serve as approximations to the eigenvalues of $H$ in the subspace $\mathcal{E}$.
\end{enumerate}
\end{algorithm}
If, in step~2, all eigenvalues of $B_M$ are smaller than $\epsilon_{\text{th}}$, a zero subspace is detected, and we set $m = 0$. This can occur either when the target spectral subspace $\mathcal{E}$ is trivial (i.e., $\mathcal{E} = {0}$), or when the weight matrix $B_M$ is too poor to distinguish signal from noise.
The refinement in step $3$ restricts the GEP to the dominant subspace of $B_M$. This projection improves numerical stability and allows to use an initial trial dimension $M$ that exceeds $\dim(\mathcal{E})$.

If $\mathsf{P}$ is an $\epsilon$-subspace protocol, the noise threshold $\epsilon_{\text{th}}$ is determined by the analytic bounds $\epsilon_M$ satisfying Eq.\ \eqref{eq:eps_M}. In particular, since the noise level generally increases with the number of trial vectors, an appropriate threshold $\epsilon_{\text{th}}$ is typically dimension-dependent:
\begin{align}
\epsilon_{th}(M) = \epsilon_M.
\end{align}
For $\epsilon$-subspace protocols, Theorem 1 of Ref.\cite{stroschein2025approximationframeworksubspacebasedmethods} guarantees that dimension detected by \\ $\mathsf{P}^{ \{ \epsilon_M \}}_{(H,\mathcal{E})}(M, \epsilon_M)$ does not exceed $\operatorname{dim} \mathcal{E}$.

Increasing $M$ beyond $\dim(\mathcal{E})$ enhances the expressiveness of the trial vector space, making it more likely to capture the signal subspace and to reveal a sharp spectral gap in $B_M$. This improves both the reliability of dimension detection and the conditioning of the refined 
GEP~\cite[Section~4.1]{stroschein2025approximationframeworksubspacebasedmethods}. 
However, this benefit comes at a cost: higher $M$ typically also increases the noise level $\epsilon_M$. Therefore, the choice of $M$ presents a trade-off between improved signal representation and increased contributions from noise.

If the dimension of $\mathcal{E}$ is known in advance or has been reliably detected using $\mathsf{P}_{(H,\mathcal{E})}(M, \epsilon_{\text{th}})$, we define a related protocol $\mathsf{P}_{(H,\mathcal{E})}(M, m)$, which takes the dimension $m$ directly as input instead of a noise threshold. The algorithm $\mathsf{P}_{(H,\mathcal{E})}(M, m)$ is identical to Algorithm~\ref{alg:eps_dim_red}, except that step~2 is omitted.

Once $\mathsf{P}_{(H,\mathcal{E})}(M, \epsilon_{\text{th}})$ has correctly identified the dimension $m$, Theorem~\ref{thm:subspace_master} provides an accuracy guarantee for the eigenvalues returned by $\mathsf{P}_{(H,\mathcal{E})}(M, m)$. The theorem distinguishes between two sources of error:
a \emph{subspace-based error}, arising from $ \operatorname{Range} V_M \neq \mathcal{E} $ and
arbitrary other errors, modeled by perturbations $\delta A_M$ and $\delta B_M$ in the matrix elements of the GEP.
The subspace-based error is quantified by the measure $ \varepsilon_M^m : \mathcal{B}(\mathbb{R}) \to \mathbb{R} $, defined as
\begin{align}
    \varepsilon (I) = \operatorname{Tr}[V^\dagger P_{\mathcal{E}^\perp} P^{(H)}(I) V],  \label{eq:error_meas}
\end{align}
where $P^{(H)} $ is the unique projection-valued measure associated with $ H $, such that $ H = \int \lambda \, dP^{(H)}(\lambda) $.
We denote by $\tilde \lambda_1 \geq \cdots \geq \tilde \lambda_m$ the generalized eigenvalues of the GEP $[A_M^m, B_M^m]$, and by $\lambda_1 \geq \cdots \geq \lambda_m$ the eigenvalues of $H$ restricted to $\mathcal{E}$, where $m = \dim \mathcal{E}$.
To compactly handle indefinite behavior in self-adjoint error matrices $E$ (such as $ \delta A_M^m -\tilde \lambda_i \delta B_M^m $) we adopt the notation
$\lambda_{\min}^*(E) := \min(\lambda_{\min}(E), 0)$ and $\lambda_{\max}^*(E) := \max(\lambda_{\max}(E), 0)$.

\begin{theorem}
    \label{thm:subspace_master}
    Let $\mathsf{P}$ be a subspace protocol as in Definition~\ref{def:SubspaceProtocol} and $m = \dim \mathcal{E}$. Consider a trial dimension $M \geq m$ and let $[A_M^m, B_M^m]$ be the GEP returned by $\mathsf{P}_{(H, \mathcal{E})}(M, m)$. Assume that the GEP is well-conditioned in the sense
    \begin{align}
        \lambda_m(B_M) > \| \mathcal{N}^{(B)}_M \|
    \end{align}
    and $\tilde \lambda_i \in [a,b]$ for all $i$. 
    Then, for each $i = 1, \dots, m$, the eigenvalues of $[A_M^m, B_M^m]$ satisfy
    \begin{align}
 \frac{\int_{-\infty}^a (\lambda - \tilde \lambda_i) d \varepsilon (\lambda) + \lambda_{\min}^*( \delta A_M^m -\tilde \lambda_i \delta B_M^m )}{\lambda_m(B_M) - \| \mathcal{N}^{(B)}_M \|}  \leq \tilde \lambda_i - \lambda_i \leq \frac{\int_b^\infty (\lambda - \tilde \lambda_i) d \varepsilon(\lambda) + \lambda_{\max}^*(\delta A_M^m -\tilde \lambda_i \delta B_M^m) }{\lambda_m(B_M) - \| \mathcal{N}^{(B)}_M \|}.
    \end{align} 
\end{theorem}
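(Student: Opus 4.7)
The plan is to derive the two-sided bound from the defining generalized eigenvalue equation $(A_M^m - \tilde\lambda_i B_M^m) b_i = 0$ (with $\|b_i\|=1$) in two stages: first reducing everything to spectral integrals for $H$ on $\mathcal{E}$ and its orthogonal complement, and then closing via a Courant--Fischer / inertia comparison between the $m$ generalized eigenvalues and the $m$ true eigenvalues.

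First I would substitute the subspace-protocol decompositions $A_M = V_M^\dagger H V_M + \delta A_M$ and $B_M = V_M^\dagger V_M + \delta B_M$ together with the orthogonal splitting $V_M = P_\mathcal{E} V_M + N_M$. Since $[H, P_\mathcal{E}]=0$, the cross-terms vanish and one obtains $V_M^\dagger H V_M = V_M^\dagger P_\mathcal{E} H P_\mathcal{E} V_M + N_M^\dagger H N_M$, and analogously $V_M^\dagger V_M = V_M^\dagger P_\mathcal{E} V_M + N_M^\dagger N_M$. Setting $w_i := P_\mathcal{E} V_M U_m b_i \in \mathcal{E}$, evaluating $b_i^\dagger (A_M^m - \tilde\lambda_i B_M^m) b_i = 0$ gives the central identity
\begin{align*}
\langle w_i,(H-\tilde\lambda_i) w_i\rangle
= -\langle N_M U_m b_i,(H-\tilde\lambda_i) N_M U_m b_i\rangle - b_i^\dagger (\delta A_M^m - \tilde\lambda_i \delta B_M^m) b_i.
\end{align*}

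Second, I would bound the spectral tail. Writing the first term on the right as $\int_{[a,b]^c}(\lambda-\tilde\lambda_i)\,d\mu_i(\lambda)$ with $\mu_i(I)=\|P^{(H)}(I) V_M U_m b_i\|^2 = \operatorname{Tr}[U_m b_i b_i^\dagger U_m^\dagger V_M^\dagger P^{(H)}(I) V_M]$, the positive semidefinite trace inequality $\operatorname{Tr}[AB]\leq\|A\|\operatorname{Tr}[B]$, together with $\|U_m b_i b_i^\dagger U_m^\dagger\|=1$ and $P_\mathcal{E}^\perp P^{(H)}(I)=P^{(H)}(I)$ for $I\subset[a,b]^c$, yields the measure-domination $\mu_i \leq \varepsilon$ on $[a,b]^c$. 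Because $\lambda-\tilde\lambda_i$ is sign-constant on each of $(-\infty,a)$ (negative) and $(b,\infty)$ (positive) whenever $\tilde\lambda_i\in[a,b]$, this converts the $\mu_i$-integral into one-sided control by the corresponding $\varepsilon$-integrals, producing the numerators of the theorem. The perturbation term is pinched by the Rayleigh bound $\lambda_m(\delta A_M^m - \tilde\lambda_i \delta B_M^m) \leq b_i^\dagger(\delta A_M^m - \tilde\lambda_i \delta B_M^m) b_i \leq \lambda_1(\delta A_M^m - \tilde\lambda_i \delta B_M^m)$. For the denominator, a direct computation gives
\begin{align*}
\|w_i\|^2 = b_i^\dagger B_M^m b_i - b_i^\dagger U_m^\dagger(N_M^\dagger N_M + \delta B_M) U_m b_i \geq \lambda_m(B_M) - \|\mathcal{N}^{(B)}_M\|,
\end{align*}
which is strictly positive by the well-conditioning hypothesis, and matches the denominator appearing in the theorem after noting that $\lambda_m(B_M^m)=\lambda_m(B_M)$ because $U_m$ projects onto the top-$m$ eigenvectors of $B_M$.

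The main obstacle is the last step: converting the Rayleigh-quotient identity above---which on its own only places $\rho_i:=\langle w_i, H w_i\rangle/\|w_i\|^2$ in the coarse interval $[\lambda_m,\lambda_1]$---into the index-matched bound $\tilde\lambda_i-\lambda_i$, with the correct asymmetric appearance of $\lambda_m$ vs.\ $\lambda_1$ of the perturbation and of $(-\infty,a)$ vs.\ $(b,\infty)$ in the error integrals. I would handle this by a Courant--Fischer / inertia argument: apply the max-min characterization $\tilde\lambda_i=\max_{\dim S=i}\min_{b\in S} b^\dagger A_M^m b / b^\dagger B_M^m b$, and use the linear map $b\mapsto P_\mathcal{E} V_M U_m b$ to transport an optimizing $i$-dimensional subspace into $\mathcal{E}$, where the analogous max-min characterization of $\lambda_i$ applies. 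The near-$B_M^m$-orthogonality of the $b_k$ (controlled by $\|\mathcal{N}^{(B)}_M\|$) ensures this transport does not degenerate, and the asymmetry in the bound then emerges naturally, because picking the optimizing subspace in one direction forces the worst complementary direction to appear when passing to the other max-min value. Dividing the central identity by $\|w_i\|^2$ and combining with this inertia comparison yields the stated estimate; everything else is algebraic bookkeeping plus the trace-based measure domination.
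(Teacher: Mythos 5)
The paper itself does not prove Theorem~\ref{thm:subspace_master}; it imports it from Ref.~\cite{stroschein2025approximationframeworksubspacebasedmethods}, so your proposal can only be judged against what any such proof must contain. Your first two stages are correct and essentially forced: the cross terms in $V_M^\dagger H V_M$ do vanish because $[H,P_\mathcal{E}]=0$; the identity obtained from $b_i^\dagger(A_M^m-\tilde\lambda_i B_M^m)b_i=0$ is right; the domination $\mu_i\le\varepsilon$ on $[a,b]^c$ via $\operatorname{Tr}[AB]\le\|A\|\operatorname{Tr}[B]$ is valid; and the denominator estimate $\|P_\mathcal{E}V_MU_mb\|^2\ge(\lambda_m(B_M)-\|\mathcal{N}^{(B)}_M\|)\|b\|^2$, together with $\lambda_m(B_M^m)=\lambda_m(B_M)$, is exactly what the theorem's denominator requires.

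The genuine gap is the step you yourself call the main obstacle: it is only gestured at, and the justification you give is off target in two ways. First, what prevents the transport $\Phi\colon b\mapsto P_\mathcal{E}V_MU_mb$ from degenerating is not any ``near-$B_M^m$-orthogonality of the $b_k$''; it is precisely your own denominator bound, which makes $\Phi$ injective, and since $\dim\mathcal{E}=m$ it is then a bijection, so $i$-dimensional subspaces transport in both directions with no further condition. Second, ``the asymmetry then emerges naturally'' is where the theorem actually lives, and a concrete construction is needed to make $\tilde\lambda_i$ (rather than a transported Rayleigh quotient) appear as the spectral shift: for the upper bound take a unit $b^\ast\in\operatorname{span}(b_1,\dots,b_i)$ minimizing $r(\Phi b):=\langle\Phi b,H\Phi b\rangle/\|\Phi b\|^2$, so that Courant--Fischer in $\mathcal{E}$ gives $\lambda_i\ge r(\Phi b^\ast)$ while on that span $b^{\ast\dagger}A_M^m b^\ast\ge\tilde\lambda_i\,b^{\ast\dagger}B_M^m b^\ast$; expanding $(\tilde\lambda_i-r(\Phi b^\ast))\,\|\Phi b^\ast\|^2$ and discarding that sign-definite term leaves exactly $\langle N_MU_mb^\ast,(H-\tilde\lambda_i)N_MU_mb^\ast\rangle+b^{\ast\dagger}(\delta A_M^m-\tilde\lambda_i\delta B_M^m)b^\ast$, to which your stage-two estimates apply and which produces the $\lambda_1(\cdot)$ pinch; the mirrored choice in $\operatorname{span}(b_i,\dots,b_m)$ with the maximum gives the lower bound with $\lambda_m(\cdot)$. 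Without this (or an equivalent) construction the index-matched, asymmetric inequality is not yet proved --- your central identity with the eigenvector $b_i$ alone only controls $\tilde\lambda_i$ against the Rayleigh quotient of $w_i$, which sits somewhere in $[\lambda_m,\lambda_1]$. Finally, your tail-sign argument needs $a\le\tilde\lambda_i\le b$, which you flag parenthetically but never establish; it does not follow from the stated hypotheses (e.g.\ for large $\delta A_M$), so it must be argued or explicitly assumed before the one-sided $\varepsilon$-domination is legitimate.
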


\section{Prolate spheroidal wave functions and essential dimensionality}
\label{sec:prolates}
In this section we provide a review of PFT that aims to promote the development of approximation routines with optimal accuracy guarantees. 
A core result of PFT is the $2WT$ theorem on the \emph{essential dimension} of the space of functions of bandwidth $W$ and approximate duration $2T$.
The $2WT$ theorem delineates an intriguing path to fundamental approximation guarantees rooted in spectral and dimensional optimality.

While our exposition builds on this theory, we make some adaptations and introduce extensions. We present a form of the $2WT$ theorem that highlights the eigenvalues of integral operator of time-and-band-limiting $\mathcal{B}_W \mathcal{D}_T$ as fundamental accuracy bounds and improves on a constant pre-factor with respect to the original theorem in Ref.\cite{ProIII}. The statement on the essential dimensionality of the $2WT$ theorem is tied to a sharp transition region the spectrum of 
$\mathcal{B}_W \mathcal{D}_T$. The transition region is precisely located through a sequence of inequalities proven by Landau in Ref.\cite{landau_density_1993}. 
We refer to such a sequence of inequalities as \emph{spectral signature} and note that the spectral signature of the integral operator has a dual in the spectrum of differential operator which commutes with $\mathcal{B}_W \mathcal{D}_T$.

We consider PFT to be an essential corner stone of approximation theory which can guide the formulation of optimal approximation methods. To facilitate the applicability of PFT we introduce novel concentration inequalities, on the derivatives of prolates as well as a bound
on their maximal magnitude outside of the concentration inequalities (Section \ref{sec:newIneq}, below).
 The new inequalities enable sharp accuracy guarantees for PFD (Section \ref{sec:PFD}) and a reformulation of the prolate sampling formula \cite{walter_sampling_2003} that highlights its relation to the $2WT$ theorem (Appendix \ref{sec:prolate_sampling}, below). 
 The proof also relies on the differential equation which commutes with the integral operator.

\subsection{Prolates as optimal filtering system and the \texorpdfstring{$2WT$}{2WT}\,Theorem}

We motivate the significance of prolates and their properties by sketching their derivation as optimal filter systems.
An error measure that integrates over the energy regions outside of $[-W,W]$ describes the 
accuracy of a subspace protocol, as seen in Theorem \ref{thm:subspace_master}. For filter diagonalization, this suggests that an optimal filtering system is a sequence of time-limited functions which is maximally confined inside the frequency band $B_W$. We are led to the following optimization problem
\begin{align}
    \sup_{f \in \mathscr{D}_T} \frac{ \| \mathcal{B}_{W} f\|^2 }{\| f\|^2 }= \sup_{f \in \mathscr{L}^2(\mathbb{R})} \frac{ \| \mathcal{B}_{W} \mathcal{D}_{T} f\|^2 }{\| \mathcal{D}_{T} f\|^2 } \label{eq:optimal_concen},
\end{align}
where $\mathcal{B}_{W}$ is the projection operator onto the space of band-limited functions $\mathscr{B}_W$ and 
$\mathcal{D}_{T}$ onto the space of time-limited functions $\mathscr{D}_T$. 
Functions in $\mathscr{B}_W$ have Fourier transforms with support in $[-W,W]$. 
Refs.\ \cite{ProI, ProII} show how optimization problems such as Eq.\ \eqref{eq:optimal_concen} lead to the discovery of prolate spheroidal wave functions as eigenfunctions to the integral operator $\mathcal{B}_{W} \mathcal{D}_{T}$,
\begin{align}
    \mathcal{B}_{W} \mathcal{D}_{T} f(\tau) & = \int_{-T}^T  \frac{\sin W (\tau - t )}{\pi (\tau -t)} f(t) dt.  \label{eq:PSWF_int}
\end{align}
The remarkable properties of prolates have been studied in a seminal 
series of papers \cite{ProI, ProII, ProIII, FUCHS1964317, SlepianAsymp, OnBandwidth,  landau_eigenvalue_1980,landau_density_1993}. 
Theorem \ref{thm:Prlt_simple_summary} summarizes a selection of key properties adopted from Ref.\cite{ProI}:

\begin{theorem}
\label{thm:Prlt_simple_summary}
    There exists a sequence of functions $\{\prlt_n \}_{n=0}^\infty$ in $\Ls_{\infty}^2$ with the following properties:
    \begin{enumerate}
        \item For all $t\in \mathbb{C}$ 
        \begin{equation}
            \mathcal{B}_{W} \mathcal{D}_{T} \prlt_n(t) = \gamma_n \prlt_n(t), \label{eq:PSWF_int_eig}
        \end{equation}
        where $1> \gamma_0 > \gamma_1 > \cdots $ and $\gamma_n$ converges to $0$ for $n\to \infty$.
        \item The system $\{\prlt_n \}_{n=0}^\infty$ is complete and orthogonal in $\mathscr{B}_{W}$ and $\Ls^2_{[-T,T]}$.
        \item The energy of $\prlt_n$ in the time interval $[-T,T]$ is given by $\gamma_n$. In particular, 
         \begin{align}
        \int_{-\infty}^{\infty} \prlt_n(t) \prlt_m(t) dt = \delta_{nm} \qquad \text{and}  \qquad \int_{-T}^{T} \prlt_n(t) \prlt_m(t) dt = \gamma_n \delta_{nm}.  \label{eq:double_ortho}
    \end{align} 
    \item The functions $\prlt_n(t)$ are real-valued for $t \in \mathbb{R}$, even for $n$ even and odd for $n$ odd. 
    \item For all $t \in \mathbb{C}$
    \begin{align}
        \int_{-T}^{T} e^{i \frac{ \bw \tau t}{T} } \prlt_n(t) dt  = \mu_n \sqrt{\frac{T}{\bw}} \prlt_n(\tau),  \label{eq:FiniteFT_op}
    \end{align}
    where $\mu_n\in \mathbb{C}$ and $\frac{|\mu_n|^2}{2\pi} = \gamma_n$.
    \end{enumerate}
\end{theorem}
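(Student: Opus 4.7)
The plan is to realize $\{\prlt_n\}$ as the singular system of a single explicit compact operator and then transfer the resulting spectral data to the two function spaces $\Ls^2(\mathbb{R})$ and $\Ls^2[-T,T]$ in which the theorem lives. Concretely, I would introduce the finite Fourier transform $G:\Ls^2[-T,T]\to \Ls^2[-\bw,\bw]$ given by $Gf(\omega)=\int_{-T}^T e^{-i\omega t}f(t)\,dt$, use the Dirichlet-kernel identity $\sin \bw(\tau-t)/[\pi(\tau-t)]=(2\pi)^{-1}\int_{-\bw}^\bw e^{i\omega(\tau-t)}\,d\omega$ to factor $\BL_{\bw}\mathcal{D}_T = (2\pi)^{-1} G^{\ast} G$, and note that $G$ is Hilbert--Schmidt since its kernel is bounded on the compact rectangle $[-T,T]\times[-\bw,\bw]$. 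Hence $G^{\ast} G$ is a positive compact self-adjoint operator on $\Ls^2[-T,T]$, and its spectral decomposition supplies eigenfunctions $\prlt_n$ with eigenvalues $2\pi\gamma_n\geq 0$ that can only accumulate at $0$.

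From this single construction I will then read off the five listed properties in turn. Properties 1--2 on $\Ls^2[-T,T]$ are immediate from the spectral theorem; strict positivity $\gamma_n>0$ follows from $\langle f,G^{\ast} Gf\rangle=\|Gf\|^2$ together with the observation that a nontrivial time-limited $f$ has an entire Fourier transform which cannot vanish on $[-\bw,\bw]$, and the same analyticity argument gives triviality of the kernel and hence completeness. To obtain the statements on $\BLs_{\bw}$ I extend each eigenfunction to all of $\mathbb{C}$ by the right-hand side of \eqref{eq:PSWF_int_eig}; this extension is an entire function of exponential type $\bw$, hence in $\BLs_{\bw}$, and it coincides with the image $G\prlt_n$ of an eigenfunction of the companion operator $GG^{\ast}$ on $\Ls^2[-\bw,\bw]$, which carries the same nonzero eigenvalues and supplies the orthogonality and completeness in $\BLs_{\bw}$. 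Property 3 then follows from the chain $\int_{-T}^T\prlt_n\prlt_m = \langle \mathcal{D}_T\prlt_n,\prlt_m\rangle = \langle \prlt_n,\BL_{\bw}\mathcal{D}_T\prlt_m\rangle = \gamma_n\delta_{nm}$, using $\BL_{\bw}\prlt_n=\prlt_n$. Property 4 is a parity argument: the sinc kernel is invariant under $(t,\tau)\mapsto(-t,-\tau)$, so the integral operator commutes with the involution $f(t)\mapsto f(-t)$, and, once simplicity of eigenvalues is known, each $\prlt_n$ is forced into a definite parity class with the alternation fixed by the oscillation pattern of the commuting differential operator. Property 5 is the singular value decomposition of $G$ itself: $G$ sends the eigenfunction $\prlt_n$ of $G^{\ast} G$ at eigenvalue $2\pi\gamma_n$ to an eigenfunction of $GG^{\ast}$ at the same eigenvalue, and after rescaling $\omega\mapsto \bw\tau/T$ to bring the range of $G$ back onto $[-T,T]$ this reads as \eqref{eq:FiniteFT_op} with $|\mu_n|^2/(2\pi)=\gamma_n$; the phase of $\mu_n$ is not fixed by the argument.

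The hard part will be proving that the eigenvalues are \emph{strictly} decreasing, i.e.\ that every $\gamma_n$ is simple, since compactness alone guarantees only a discrete spectrum accumulating at zero. Here I would follow Slepian and Pollak and exhibit the Sturm--Liouville operator
\begin{equation*}
(\Ls_{\bw T} f)(t) = \frac{d}{dt}\!\left[(T^2-t^2)\,\frac{df}{dt}\right] - \bw^2 t^2\, f(t),
\end{equation*}
verify by a direct integration-by-parts computation that $\Ls_{\bw T}$ commutes with $\BL_{\bw}\mathcal{D}_T$ on a dense core of smooth functions vanishing at $\pm T$, and then invoke classical Sturm--Liouville theory to conclude that $\Ls_{\bw T}$ has a simple discrete real spectrum. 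Since commuting self-adjoint operators share eigenspaces, simplicity transfers to $\BL_{\bw}\mathcal{D}_T$ and produces the strict chain $\gamma_0>\gamma_1>\cdots$. The remaining subtlety is the strict upper bound $\gamma_0<1$: saturation would require $\mathcal{D}_T f = f$ for some nonzero $f\in\BLs_{\bw}$, i.e.\ a band-limited function supported in $[-T,T]$, which is again ruled out by the analyticity of functions in $\BLs_{\bw}$.
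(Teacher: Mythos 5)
The paper does not prove this theorem itself --- it is quoted from Slepian and Pollak (Ref.\ [ProI]) --- so the benchmark is the classical argument, and your outline (factoring $\BL_{\bw}\mathcal{D}_T$ through the finite Fourier transform $G$, the spectral theorem for the resulting compact positive operator, Paley--Wiener analyticity for injectivity, completeness and $\gamma_0<1$, and the commuting Sturm--Liouville operator) follows exactly that route; those parts are sound, including the double orthogonality in item 3 and the norm computation giving $|\mu_n|^2 = 2\pi\gamma_n$. However, there is a genuine gap precisely at the step you yourself single out as the hard one: strictness of the chain $1>\gamma_0>\gamma_1>\cdots$. The inference ``the commuting differential operator has simple spectrum, commuting self-adjoint operators share eigenspaces, hence simplicity transfers to $\BL_{\bw}\mathcal{D}_T$'' is not valid. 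Simplicity of the Sturm--Liouville spectrum only yields a distinguished common eigenbasis (the integral operator preserves each one-dimensional eigenspace of the differential operator); it in no way prevents the integral operator from taking the same value on two different members of that basis --- the identity operator commutes with any operator of simple spectrum and is maximally degenerate, so no such transfer principle exists. In the classical treatment the distinctness of the $\gamma_n$, and equally the fact that ordering by decreasing $\gamma_n$ coincides with the oscillation (Sturm--Liouville) ordering, require separate and substantially harder work, e.g.\ the explicit identification of the common eigenfunctions with the angular spheroidal functions together with the formula relating $\gamma_n$ to the radial values $R^{(1)}_{0n}(c,1)$, or perturbation-in-$c$ arguments that track each eigenvalue branch from its distinct small-$c$ behaviour (of order $c^{2n+1}$).

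This gap propagates into the later items as you derive them: the alternation ``even for $n$ even, odd for $n$ odd'' in item 4 presupposes that the index of decreasing $\gamma_n$ agrees with the oscillation index of the differential operator, and your identification in item 5 of the rescaled image $G\prlt_n$ with $\prlt_n$ itself relies on one-dimensionality of the $\gamma_n$-eigenspace (or, alternatively, on normality of the rescaled finite Fourier transform plus the same simplicity), so neither is secured by what precedes. A secondary technical point: the differential expression is singular at $\pm T$ (the coefficient $T^2-t^2$ vanishes there), so a core of smooth functions vanishing at $\pm T$ does not single out the self-adjoint realization whose eigenfunctions are the prolates; the boundary terms in the commutation computation vanish because of the factor $T^2-t^2$, not because of Dirichlet conditions, and ``classical Sturm--Liouville theory'' must be invoked for the extension with solutions bounded at the endpoints. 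With the simplicity step repaired along the classical lines, the remainder of your argument goes through.
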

The eigenvalues of $\mathcal{B}_{W} \mathcal{D}_{T}$ are functions 
of the product $WT$. To make this explicit we write $\gamma_n(c)$, where $c= WT$. The integral operator $\mathcal{B}_{W} \mathcal{D}_{T}$ is Hilbert-Schmidt in $\Ls_\infty^2$, and the supremum in \eqref{eq:optimal_concen} is attained by $\mathcal{D}_T \prlt_0$. 

The optimal concentration of prolates as well as their peculiar double completeness relation enabled the deep result that is the $2 \bw T $ theorem \cite{SlepianComment}. The theorem states that the space of functions of \emph{essential} bandwidth $W$ and \emph{essential} duration $2T$, is approximately $2WT/ \pi$-dimensional. The $2WT$ theorem was first proved in Ref.\ \cite{ProIII}. 
First, we give the formal definition of the space of band-limited functions, which is up to an error $\epsilon_T^2$ of duration $2T$:

\noindent \textit{
\label{def:2WTfunctions}
 Let $E(\epsilon_{T}) \subset \Ls_\infty^2$ be the set of functions of total energy $1$, bandwidth $W$, and 
  energy outside of $[-T,T]$ at most  $\epsilon_T^2$, that is,
\begin{align}
    \int_{|t|> T} |f(t)|^2 dt \leq \epsilon_{T}^2
\end{align}
for all $f \in E(\epsilon_T)$.}
The extended $2WT$ theorem, as we write it here, encompasses that the space $E(\epsilon_T)$ is optimally approximated by the prolate sequence. To make this statement precise, the concept of deflection is used:

\noindent \textit{
\noindent Let $\{\psi_n\}$ be a sequence of functions in $\Ls^2_\infty$ and $N\in \mathbb{N}$. We call
\begin{align}
    \delta(\{\psi_n\}, N) := \sup_{f \in E(\epsilon_T)} \inf_{a_n} 
    \displaystyle\int_{-T}^{T} \left| f(t) - \textstyle\sum\limits_{n=0}^{N-1} 
    a_n\, \psi_n(t) \right|^2 dt
\end{align}
the deflection of $E(\epsilon_T)$ from $\{\psi_n\}$ at dimension $N$.
}
We are now in a position to state the $2WT$ theorem: 
\begin{theorem}[$2WT$ theorem]
\label{thm:2WT}
Prolates optimally approximate the space of band-limited and time-concentrated functions $E(\epsilon_T)$ in the sense that $\{\prlt_n\} $ minimizes the deflection 
$\delta( \cdot, N) $ for all $N$. In particular, 
\begin{equation}
    \sup_{f \in E(\epsilon_T)} \inf_{a_n} \displaystyle\int_{-T}^{T} \left| f(t) - \textstyle\sum\limits_{n=0}^{N-1} a_n\, \prlt_n(t) \right|^2 dt 
    < \frac{\epsilon_T^2}{1 - \gamma_N(c)}.
    \label{eq:2WT_bound}
\end{equation}
For $N -1 = \lfloor  2 W T / \pi  \rfloor $ we have $(1-\gamma_N(c))^{-1} \leq 2$. 
\end{theorem}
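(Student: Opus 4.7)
The plan is to establish the theorem in three steps: first, prove the explicit bound \eqref{eq:2WT_bound} by expanding $f$ in the prolate basis; second, derive the optimality of $\{\prlt_n\}$ from a Courant--Fischer / $N$-width argument on the compact operator $\mathcal{B}_W\mathcal{D}_T$; and third, deduce the prefactor estimate at $N-1 = \lfloor 2WT/\pi\rfloor$ from the sharp eigenvalue transition of $\mathcal{B}_W\mathcal{D}_T$.

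For the explicit bound, I would use that $f \in E(\epsilon_T) \subset \mathscr{B}_W$ admits an expansion $f = \sum_{n=0}^\infty b_n \prlt_n$ by the completeness of the prolate system in $\mathscr{B}_W$ (Theorem~\ref{thm:Prlt_simple_summary}, item 2). The double orthogonality \eqref{eq:double_ortho} translates the constraints $\|f\|^2_{L^2(\mathbb{R})} = 1$ and $\|f\|^2_{L^2([-T,T])} \geq 1 - \epsilon_T^2$ into the scalar identities
\begin{align*}
    \sum_n |b_n|^2 = 1, \qquad \sum_n |b_n|^2 (1-\gamma_n) \leq \epsilon_T^2.
\end{align*}
Taking $a_n = b_n$ in the infimum, the tail error equals $\sum_{n\geq N} |b_n|^2 \gamma_n$. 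Because $\gamma_n$ is strictly decreasing and $x\mapsto x/(1-x)$ is monotone on $[0,1)$, one has $\gamma_n(1-\gamma_N) \leq \gamma_N(1-\gamma_n)$ for every $n \geq N$, so that
\begin{align*}
    \sum_{n\geq N} |b_n|^2 \gamma_n \;\leq\; \frac{\gamma_N(c)}{1-\gamma_N(c)} \sum_{n\geq N} |b_n|^2 (1-\gamma_n) \;\leq\; \frac{\gamma_N(c)\,\epsilon_T^2}{1-\gamma_N(c)} \;<\; \frac{\epsilon_T^2}{1-\gamma_N(c)},
\end{align*}
which supplies \eqref{eq:2WT_bound}.

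For optimality, I would pass to the compact, positive, self-adjoint operator $T = \iota \iota^*$ on $L^2([-T,T])$, where $\iota:\mathscr{B}_W \hookrightarrow L^2([-T,T])$ is the restriction map; the non-zero eigenvalues of $T$ coincide with those of $\iota^*\iota = \mathcal{B}_W\mathcal{D}_T$, namely the $\gamma_n$, and the eigenfunctions are the normalised restricted prolates. For an arbitrary candidate subspace $V_N = \mathrm{span}\{\psi_0,\dots,\psi_{N-1}\} \subset L^2([-T,T])$, Courant--Fischer furnishes a unit vector $h \in \mathrm{span}\{\prlt_0|_{[-T,T]},\dots,\prlt_N|_{[-T,T]}\}$ orthogonal to $V_N$ with $\langle T h, h\rangle \geq \gamma_N$. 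Lifting $h$ to its unique band-limited extension $g \in \mathscr{B}_W$ (injectivity of $\iota$ on band-limited functions follows from Paley--Wiener) and rescaling to lie in $E(\epsilon_T)$ produces a test function whose $L^2([-T,T])$-distance to $V_N$ reproduces the same lower bound as on the prolate span, so no alternative basis can outperform $\{\prlt_n\}$ in $N$-width.

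The prefactor estimate $(1-\gamma_N(c))^{-1} \leq 2$ at $N-1 = \lfloor 2WT/\pi \rfloor$ amounts to $\gamma_N(c) \leq 1/2$ at the critical index. This follows from the trace identity $\sum_n \gamma_n(c) = \mathrm{Tr}(\mathcal{B}_W\mathcal{D}_T) = 2c/\pi$ combined with Landau's spectral signature of $\mathcal{B}_W\mathcal{D}_T$ (appendix~\ref{sec:dual_signature}), which forces the number of eigenvalues exceeding $1/2$ to be at most $\lfloor 2c/\pi \rfloor$. The main obstacle is the optimality step: reconciling the $L^2(\mathbb{R})$ normalisation that defines $E(\epsilon_T)$ with the $L^2([-T,T])$ norm in which the deflection is measured requires carefully tracking how the double orthogonality of the prolates couples the two norms, which is precisely the non-trivial spectral content that distinguishes the $2WT$ theorem from a generic best-$N$-term approximation statement.
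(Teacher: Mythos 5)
Your treatment of the quantitative bound \eqref{eq:2WT_bound} is correct and self-contained: expanding $f \in E(\epsilon_T) \subset \mathscr{B}_W$ in the doubly orthogonal prolate basis, translating the constraints into $\sum_n |b_n|^2 = 1$ and $\sum_n |b_n|^2(1-\gamma_n) \leq \epsilon_T^2$, and using the monotonicity of $x \mapsto x/(1-x)$ gives $\sum_{n\geq N}|b_n|^2\gamma_n \leq \gamma_N(c)\,\epsilon_T^2/(1-\gamma_N(c))$, which is in fact slightly sharper than the stated bound. This is essentially the classical Landau--Pollak argument; the paper does not reprove it but defers to Ref.~\cite{ProIII}, its only new ingredient being the constant: for $N-1 = \lfloor 2WT/\pi \rfloor$ one has $N \geq \lceil 2WT/\pi \rceil$, so $\gamma_N(c) \leq 1/2$ by Landau's spectral signature \eqref{eq:gamma_signature} of the integral operator (Theorem~\ref{thm:prlt_spectrum}) --- which is exactly your step~3, except that the relevant signature is the one in Theorem~\ref{thm:prlt_spectrum}, not the dual signature of the differential operator in Section~\ref{sec:dual_signature}, and the trace identity by itself would only bound the number of eigenvalues exceeding $1/2$ by $4c/\pi$, not by the critical index; the signature alone suffices.

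The genuine gap is in the optimality claim (that $\{\prlt_n\}$ minimizes the deflection for all $N$). The step ``lifting $h$ to its band-limited extension $g$ and rescaling to lie in $E(\epsilon_T)$'' does not work: membership in $E(\epsilon_T)$ fixes the total energy to one and bounds the \emph{fraction} of energy outside $[-T,T]$ by $\epsilon_T^2$, and this fraction is scale-invariant, so rescaling cannot enforce it. A unit-norm $g \in \operatorname{span}\{\prlt_0,\dots,\prlt_N\}$ chosen orthogonal to $V_N$ on $[-T,T]$ is only guaranteed an in-interval energy fraction of at least $\gamma_N(c)$, i.e.\ an out-of-interval fraction as large as $1-\gamma_N(c)$; in the regime where the theorem is interesting ($N$ at or beyond the critical dimension, $\epsilon_T$ small, indeed $E(\epsilon_T)$ is nonempty only for $\epsilon_T^2 \geq 1-\gamma_0$) this far exceeds $\epsilon_T^2$, so your Courant--Fischer vector generally does not yield an admissible test function, and the lower bound on the deflection of an arbitrary system is not established. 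You acknowledge this as ``the main obstacle,'' but it is precisely the nontrivial content of the optimality statement: one must construct a competitor inside $E(\epsilon_T)$ (mixing a highly concentrated component with the tail component while still controlling its distance to $V_N$), which is what Theorem~1 of Ref.~\cite{ProIII} does. Note the paper does not reprove this part either --- it explicitly cites that reference --- so your proposal establishes the displayed bound and the constant $2$, but leaves the minimal-deflection assertion unproven.
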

The original version of the $2WT$ theorem was stated as 
    \begin{align}
        \sup_{f \in E(\epsilon_T)} \inf_{a_n} \displaystyle \int_{-T}^{T} \left| f - \textstyle\sum\limits_{n=0}^{\lfloor 2 WT / \pi \rfloor } a_n \prlt_n  \right|^2 dt <  C \epsilon_T^2
    \end{align}
and it was shown that $C< 12$. We prefer Eq.\ \eqref{eq:2WT_bound} as it underlines the significance of $1-\gamma_n(c)$ in the description of the optimal relation between Fourier volume $c=W T$, dimensionality $n$, and accuracy. In particular, we consider $1-\gamma_n(c)$ as fundamental approximation parameters of broader significance in the derivation of more extensive approximation routines. 
Moreover, Eq.\ \eqref{eq:2WT_bound} and the spectral signature in Eq.\ \eqref{eq:gamma_signature} immediately allow us to improve the bound on the constant to $C<2$. The optimality of prolates in approximating $E(\epsilon_T)$ was established in Theorem $1$ of Ref.\,\cite{ProIII}.

Motivated by the $2WT$ theorem we call $\tilde c = 2WT/\pi $ the \emph{essential dimensionality} of $W$-band-limited and $T$-concentrated functions. The $2WT$ theorem is a consequence of the eigenvalue distribution of the integral operator $\mathcal{B}_W \mathcal{D}_T$, which is summarized in Theorem \ref{thm:prlt_spectrum}:

\begin{theorem}[Spectrum of $\mathcal{B}_{W} \mathcal{D}_{T}$]
\label{thm:prlt_spectrum}
The integral operator $\mathcal{B}_{W} \mathcal{D}_{T}$ is Hilbert-Schmidt in $\Ls^2_\infty$ and
    \begin{align}
        \sum_{n=0}^\infty \gamma_n(c) = \frac{2\bw T}{ \pi}.
    \end{align}
    The sequence of eigenvalues crosses a threshold around the essential dimensionality $\tilde c = 2 W T/ \pi$:
    \begin{align}
        \gamma_{\lfloor \tilde c -1 \rfloor}(c) \geq 1/2 \geq \gamma_{\lceil \tilde c \rceil}(c) \label{eq:gamma_signature}. 
    \end{align}
    There exists a constant $A$ independent of $c$, such that for all $\gamma \in (0,1)$ the number of eigenvalues $\gamma_n(c)$ in between $\gamma$ and $1-\gamma$ is bounded as
    \begin{align}
        \#\left\{k \mid \gamma \leq \gamma_k(c) \leq 1-\gamma\right\} \leq \frac{A}{\gamma(1-\gamma)} \log \frac{2\bw T}{ \pi}.\label{eq:transition_region}
    \end{align}
    The asymptotic expansion of $1-\gamma_n(c)$ for $n$ fixed and large $c$ is 
    \begin{align}
 1-\gamma_n(c) =  \frac{4 \sqrt{\pi} 2^{3 n} c^{n+\frac{1}{2}} e^{-2 c}}{n!} (1 +\mathcal{O}(c^{-1})). \label{eq:gammaexpansion}
\end{align}
\end{theorem}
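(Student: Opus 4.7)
The plan is to handle the four assertions separately, since each draws on a different pocket of the Slepian--Pollak--Landau theory. The common setup is to pass from $\mathcal{B}_W \mathcal{D}_T$ to the positive self-adjoint companion $\mathcal{D}_T \mathcal{B}_W \mathcal{D}_T$ on $\mathscr{L}^2([-T,T])$, which shares the non-zero spectrum (with multiplicities) via the standard $A^*A$ versus $AA^*$ identity applied to $A=\mathcal{B}_W \mathcal{D}_T$, and then to exploit the explicit kernel $K(s,t)=\sin W(s-t)/(\pi(s-t))$ for concrete computations.

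For the first claim, the kernel $K$ is continuous on $[-T,T]^2$ with $K(t,t)=W/\pi$ and is square-integrable, so $\mathcal{D}_T\mathcal{B}_W \mathcal{D}_T$ is Hilbert--Schmidt, hence so is $\mathcal{B}_W \mathcal{D}_T$. Mercer's theorem then yields $\sum_{n=0}^{\infty} \gamma_n(c)=\int_{-T}^{T}K(t,t)\,dt=2WT/\pi$. For the signature in Eq.~\eqref{eq:gamma_signature}, the upper bound $\gamma_{\lceil\tilde c\rceil}\le 1/2$ follows directly from the trace identity by a pigeonhole argument: if strictly more than $\tilde c$ eigenvalues were at least $1/2$, the partial sum alone would already exceed $\tilde c$. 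The matching lower bound $\gamma_{\lfloor\tilde c-1\rfloor}\ge 1/2$ requires an explicit witness subspace with high time-concentration, and the cleanest route is to invoke Landau's density result \cite{landau_density_1993}, which provides the sharp counting inequality by comparing $\mathcal{B}_W \mathcal{D}_T \mathcal{B}_W$ against the Paley--Wiener sampling lattice of spacing $\pi/W$ on $[-T,T]$.

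The main obstacle is the logarithmic width estimate \eqref{eq:transition_region}. The plan, following Landau--Widom, is to pick a smooth bump $\phi\colon[0,1]\to[0,1]$ that majorizes $\chi_{[\gamma,1-\gamma]}$ and satisfies $\phi(x)\le c_\phi\, x(1-x)$ for some absolute constant $c_\phi$, and to use the functional-calculus identity
\begin{equation*}
    \#\{k : \gamma\le \gamma_k(c)\le 1-\gamma\}\;\le\;\frac{c_\phi}{\gamma(1-\gamma)}\,\operatorname{Tr}\!\bigl(K_1(I-K_1)\bigr),\qquad K_1=\mathcal{B}_W \mathcal{D}_T \mathcal{B}_W.
\end{equation*}
Since $K_1(I-K_1)$ is unitarily equivalent to $[\mathcal{B}_W,\mathcal{D}_T]^{\dagger}[\mathcal{B}_W,\mathcal{D}_T]$, its trace reduces to a double integral of the sinc kernel cut off at $\pm T$. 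Careful evaluation reveals a Dirichlet-type logarithm coming from the $1/(s-t)$ decay of the sinc kernel near the two endpoints, and this yields the desired factor $\log(2WT/\pi)$. This commutator trace estimate is the technically hardest step, and I would follow Landau's bookkeeping in Ref.~\cite{landau_eigenvalue_1980} rather than reinvent it.

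Finally, for the asymptotic expansion \eqref{eq:gammaexpansion} at fixed $n$ with $c\to\infty$, the strategy is to exploit the second-order differential operator of prolate spheroidal form which commutes with $\mathcal{B}_W \mathcal{D}_T$ and makes the $\prlt_n$ simultaneous eigenfunctions. Matched asymptotic (WKB) analysis of this ODE on $[-T,T]$ then exhibits $1-\gamma_n(c)$ as a tunneling amplitude across the classically forbidden regions near $\pm T$: the exponential $e^{-2c}$ emerges from the tunneling action, while the combinatorial prefactor $2^{3n}c^{n+1/2}/n!$ arises from expanding the connection formulas at the turning points. I would invoke the expansions of Fuchs~\cite{FUCHS1964317} and Slepian~\cite{SlepianAsymp} for this step rather than reproduce the singular perturbation calculation from scratch.
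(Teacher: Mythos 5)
This theorem is not proved in the paper at all: it is a compilation of literature results (the Hilbert--Schmidt property and the trace identity from the Slepian--Pollak series, the threshold crossing \eqref{eq:gamma_signature} from Landau \cite{landau_density_1993}, the transition-region count \eqref{eq:transition_region} adopted from \cite{ProIII, landau_density_1993}, and the asymptotics \eqref{eq:gammaexpansion} from \cite{FUCHS1964317, SlepianAsymp}). Most of your proposal is consistent with that: Mercer applied to the continuous kernel $\sin W(s-t)/(\pi(s-t))$ on $[-T,T]^2$ for the trace, a Chebyshev-type bound plus the Landau--Widom trace estimate $\operatorname{Tr}(K_1-K_1^2)=\mathcal{O}(\log(2WT/\pi))$ \cite{landau_eigenvalue_1980} for \eqref{eq:transition_region}, and Fuchs/Slepian for \eqref{eq:gammaexpansion}. (Two small remarks there: no smooth bump is needed, since $\chi_{[\gamma,1-\gamma]}(x)\le x(1-x)/(\gamma(1-\gamma))$ on $[0,1]$ already; and $K_1(I-K_1)$ is not unitarily equivalent to $[\mathcal{B}_W,\mathcal{D}_T]^{\dagger}[\mathcal{B}_W,\mathcal{D}_T]$ --- the latter has twice the trace --- though this does not affect the argument since only the trace asymptotics are used.)

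The genuine gap is your claim that the upper half of the spectral signature, $\gamma_{\lceil \tilde c\rceil}(c)\le 1/2$, ``follows directly from the trace identity by a pigeonhole argument.'' It does not. If the eigenvalues $\gamma_0,\dots,\gamma_{\lceil\tilde c\rceil}$ all exceeded $1/2$, the trace would only be forced to exceed $(\lceil\tilde c\rceil+1)/2$, which is perfectly compatible with $\sum_n\gamma_n(c)=\tilde c$ whenever $\tilde c\ge 1$; the trace alone merely caps the number of eigenvalues above $1/2$ at $2\tilde c$, a factor of two away from what is claimed. Even combining the trace with the second-moment bound $\sum_k\gamma_k(1-\gamma_k)=\mathcal{O}(\log c)$ only localizes the crossing of $1/2$ to within $\mathcal{O}(\log c)$ indices of $\tilde c$, not to the exact indices $\lfloor\tilde c-1\rfloor$ and $\lceil\tilde c\rceil$ asserted in \eqref{eq:gamma_signature}. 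Both inequalities in \eqref{eq:gamma_signature} are the nontrivial content of Landau's theorem in \cite{landau_density_1993} (which the paper simply cites, and which your own appeal to ``Landau's density result'' covers for the lower bound); you must invoke it for the upper bound as well, rather than present that half as an elementary consequence of $\sum_n\gamma_n(c)=2WT/\pi$.
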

Eq.\ \eqref{eq:gamma_signature} was established in Ref.\ \cite{landau_density_1993}. We refer to such a crossing in the eigenvalues as a \emph{spectral signature}. In Section \ref{sec:dual_signature} below, we 
point on a dual of Eq.\ \eqref{eq:gamma_signature} appearing in the spectrum of the prolate spheroidal wave equation. We adopted Eq.\ \eqref{eq:transition_region} from Refs.\,\cite{ProIII, landau_density_1993}. The asymptotic expansion of Eq.\ \eqref{eq:gammaexpansion} was derived in Refs.\,\cite{FUCHS1964317, SlepianAsymp}.

Theorem \ref{thm:prlt_spectrum} shows that the spectrum of $\mathcal{B}_{W} \mathcal{D}_{T}$ is highly clustered around two values: $1$ and $0$, with a logarithmically sharp transition region between.
Eq.\ \eqref{eq:gamma_signature} precisely locates the transition region, and it follows that the number of prolates, which are highly concentrated in the time interval, is approximately $\tilde c = 2WT/\pi $, up to a logarithmic correction.
Hence, it was the connection to the spectrum of $\mathcal{B}_W \mathcal{D}_T$ that allowed us to describe the essential dimension of the space of signals.
The significance of the eigenvalues $\gamma_n(c)$ as optimal approximation bounds is underlined by their definition through a Fourier space optimization problem, Eq.\ \eqref{eq:optimal_concen}.
In Theorem \ref{thm:2WT},  $1-\gamma_n(c)$ gives an optimal relation between
the duration-bandwidth product $c=WT$, the degrees of freedom used in approximation $n$, and the achieved accuracy.
Therefore, PFT delineates an intriguing path toward an approximation theory that
offers fundamental accuracy guarantees rooted in spectral and dimensional optimality. 
Inspired by the $2WT$ theorem, we aim to develop numerical routines with accuracy guarantees essentially
described by $1-\gamma_n(c)$.
For PFD, this endeavor requires additions to the set of known concentration properties on prolate functions, which we present in the next section, Section \ref{sec:newIneq}.

\subsection{New concentration identities}
\label{sec:newIneq}

To derive accuracy guarantees for larger information processing protocols using Prolate Fourier Theory, it was necessary to establish how the $\ell_2$-concentration of prolates extends to their derivatives, as well as to a supremum bound outside their concentration region. These new identities are presented in Theorem~\ref{thm:prlt_bound} and, in particular, yield inequalities of the form $\sim 1 - \gamma_n(c)$.
Consequently, the sharp spectral transition described in Theorem~\ref{thm:prlt_spectrum} carries over to these new identities. Thus, Theorem~\ref{thm:prlt_bound} provides additional tools for deriving fundamental approximation schemes that exhibit a sharp accuracy transition around a critical dimension $\tilde c = 2WT/\pi$.

However, the new concentration identities also rely on a commutation relation of the integral operator 
$\mathcal{B}_W \mathcal{D}_T$ with a second order differential operator. Specifically, prolates are also eigenfunctions to the \emph{generalized prolate spheroidal wave equation}:
\begin{align}
    (T^2 -t^2) \prlt_n''(t)&  - 2 t \prlt_n'(t) + \bw^2 (T^2 -t^2) \prlt_n(t)  = - \lambda_n \prlt_n(t), \label{eq:genPSWEq}
\end{align}
The eigenvalues of the differential operator are functions of $c= WT$ and have 
\begin{align*}
    -c^2 < \lambda_0(c) < \lambda_1(c) < \cdots < \cdots 
\end{align*}
were $\lambda_n(c) \to \infty$ for $n \to \infty$ and $c$ fixed.
The peculiar commutation relation has played a significant role in the discovery and development of Prolate Fourier Theory \cite{ProI, ProIV, ProV, OnBandwidth}. During the derivation of Theorem~\ref{thm:prlt_bound} we noticed, that the spectrum of the differential operator exhibits signature similar to the one of the integral operator eq.\eqref{eq:gamma_signature}. In Section \ref{sec:dual_signature} we further remark on the commutation relation and the more recently discovered dual property.

Theorem \ref{thm:prlt_bound} presents our new inequalities, which are proven in Ref.\cite[Chapter 2]{stroschein2024prolatespheroidalwavefunctions}: 
\begin{theorem}
    \label{thm:prlt_bound}
    The energy concentration of prolates transfers to their derivatives as 
    \begin{alignat}{4}
        &\| \prlt_n' \|_{>T}^2 && = (1- &&\gamma_n) && C_{\text{extra},n}^2  \label{eq:thm_derivConcen_extra}\\
        &\| \prlt_n' \|_{T}^2  && =  &&\gamma_n     && C_{\text{intra},n}^2 \label{eq:thm_derivConcen_intra}.
    \end{alignat}
    Prolates have bounds within and outside of the concentration region
    \begin{alignat}{3}
        \sup_{|t| \geq T}  \prlt_n(t)^2 &\leq (1- &&\gamma_n) &&C_{\text{extra},n}  \label{eq:thm_prop_bound_extra} \\
        \sup_{|t| \leq T}  \prlt_n(t)^2 &\leq &&\gamma_n && \tilde C_{\text{intra},n}  \label{eq:thm_prop_bound_intra},
    \end{alignat}
        Here,
    \begin{align}
        C_{\text{extra},n} &= \sqrt{\| \prlt_n' \|_{\infty}^2 - \frac{\lambda_n}{T}  \frac{\prlt_n(T)^2}{1-\gamma_n} }, \quad \text{ and } \quad C_{\text{intra},n} = \sqrt{\| \prlt_n' \|_{\infty}^2 +  \frac{\lambda_n }{T}  \frac{\prlt_n(T)^2 }{\gamma_n}},
    \end{align}
    and $ \tilde C_{\text{intra},n} = C_{\text{intra},n}  + \delta_{n0} \frac{\prlt_0(T)^2}{\gamma_0}$.
\end{theorem}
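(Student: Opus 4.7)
The plan is to derive the derivative concentration identities first and then obtain the pointwise bounds by combining them with a Cauchy--Schwarz estimate of $\prlt_n(t)^2$ written through the fundamental theorem of calculus. The key observation is that differentiating the integral eigenvalue equation $\gamma_n \prlt_n = \mathcal{B}_W \mathcal{D}_T \prlt_n$ produces only a simple boundary correction, which the prolate spheroidal wave equation then evaluates explicitly.

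First I would write the eigenvalue equation as $\gamma_n \prlt_n(t) = \int_{-T}^T K_W(t-s)\,\prlt_n(s)\,ds$ with $K_W(u) = \sin(Wu)/(\pi u)$, differentiate in $t$, exploit $\partial_t K_W(t-s) = -\partial_s K_W(t-s)$, and integrate by parts in $s$ to obtain the pointwise identity
\begin{equation*}
\gamma_n \prlt_n'(t) \;=\; (\mathcal{B}_W\mathcal{D}_T\prlt_n')(t) \;-\; \prlt_n(T)\bigl[K_W(t-T) - (-1)^n K_W(t+T)\bigr].
\end{equation*}
Next I would pair both sides with $\prlt_n'$ in $L^2(\mathbb{R})$. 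Since $\prlt_n'$ inherits the band-limitation of $\prlt_n$ (multiplication by $i\omega$ preserves the Fourier support), the term $\langle \prlt_n',\mathcal{B}_W\mathcal{D}_T\prlt_n'\rangle$ collapses to $\|\prlt_n'\|_T^2$, while the reproducing-kernel property of $K_W$ on $\mathscr{B}_W$ evaluates $\langle \prlt_n', K_W(\cdot - T)\rangle$ at the boundary value $\prlt_n'(T)$, and the parity relation $\prlt_n'(-T) = (-1)^{n+1}\prlt_n'(T)$ folds the two boundary contributions into $2\prlt_n(T)\prlt_n'(T)$. Finally, evaluating \eqref{eq:genPSWEq} at $t=T$ kills the prefactor $(T^2-t^2)$ and leaves $\prlt_n'(T) = \lambda_n \prlt_n(T)/(2T)$, so the correction equals $\lambda_n \prlt_n(T)^2 / T$. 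This yields \eqref{eq:thm_derivConcen_intra} directly, and \eqref{eq:thm_derivConcen_extra} follows by subtraction from $\|\prlt_n'\|_\infty^2$.

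For the pointwise bounds I would use $\prlt_n(t)^2 = -2\int_t^\infty \prlt_n\,\prlt_n'\,ds$ on $t>T$ (valid because $\prlt_n$ decays at infinity), apply Cauchy--Schwarz, and substitute the identities $\int_T^\infty \prlt_n^2 = (1-\gamma_n)/2$ together with the freshly established $\|\prlt_n'\|_{>T}^2 = (1-\gamma_n)C_{\text{extra},n}^2$; this produces exactly \eqref{eq:thm_prop_bound_extra}, and parity handles $t<-T$. Inside $[-T,T]$ I would anchor at an interior zero of $\prlt_n$ (which exists for $n\geq 1$ by the oscillation properties of prolates), run the same Cauchy--Schwarz estimate with $\int_{-T}^T \prlt_n^2 = \gamma_n$ and $\|\prlt_n'\|_T^2 = \gamma_n C_{\text{intra},n}^2$, and for $n=0$ anchor instead at $t=\pm T$, absorbing the residual $\prlt_0(T)^2$ term into the $\delta_{n0}\,\prlt_0(T)^2/\gamma_0$ correction that defines $\tilde C_{\text{intra},n}$.

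The most delicate step is the derivation of the operator identity in Step 1: the differentiation under the integral sign, the swap $\partial_t \to -\partial_s$, and the integration by parts must be justified (all legitimate since $\prlt_n$ is entire and $K_W$ is smooth on $\mathbb{R}$), and the subsequent use of the reproducing-kernel property requires the band-limitation of $\prlt_n'$. Once that identity is in hand, the remaining manipulations are algebraic, and the ODE conveniently contributes the factor $\lambda_n/T$ at the boundary. The pointwise bound inside $[-T,T]$ is the secondary obstacle, because a naive Cauchy--Schwarz from an arbitrary anchor point loses constants; the localization around the interior zero (or around $T$ for the exceptional index $n=0$) is what produces the sharp constant recorded in $\tilde C_{\text{intra},n}$.
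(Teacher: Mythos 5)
Your derivation of the two derivative-concentration identities is correct and, as far as one can tell, in the spirit the paper indicates (the paper itself only cites Chapter~2 of Ref.~\cite{stroschein2024prolatespheroidalwavefunctions} and remarks that the proof hinges on the interplay of the integral operator with the commuting differential operator). Differentiating $\gamma_n\prlt_n=\mathcal{B}_W\mathcal{D}_T\prlt_n$, integrating by parts, pairing with the band-limited function $\prlt_n'$ via the reproducing kernel, and using the parity of $\prlt_n'$ gives $\gamma_n\|\prlt_n'\|_\infty^2=\|\prlt_n'\|_T^2-2\prlt_n(T)\prlt_n'(T)$, and evaluating Eq.~\eqref{eq:genPSWEq} at $t=T$ gives $\prlt_n'(T)=\lambda_n\prlt_n(T)/(2T)$, which is exactly \eqref{eq:thm_derivConcen_intra}; \eqref{eq:thm_derivConcen_extra} then follows by subtraction. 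All the regularity issues you flag are harmless since $\prlt_n$ is entire and the projections $\mathcal{B}_W,\mathcal{D}_T$ are bounded and self-adjoint.

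The gap is in the constants of the pointwise bounds as you have written them. If you apply Cauchy--Schwarz with the full norms you quote, namely $\int_{-T}^{T}\prlt_n^2=\gamma_n$ and $\|\prlt_n'\|_T^2=\gamma_nC_{\text{intra},n}^2$ (respectively $\|\prlt_n'\|_{>T}^2$ outside), you obtain $2\gamma_nC_{\text{intra},n}$ inside and a factor $\sqrt{2}$ too much outside --- not the stated \eqref{eq:thm_prop_bound_intra} and \eqref{eq:thm_prop_bound_extra}. The saving does not come from anchoring at a zero per se but from restricting every integral to a half-line or half-interval and invoking parity: since $\prlt_n^2$ and $(\prlt_n')^2$ are even, $\int_T^\infty\prlt_n^2=(1-\gamma_n)/2$ and $\int_T^\infty(\prlt_n')^2=(1-\gamma_n)C_{\text{extra},n}^2/2$, which gives exactly $(1-\gamma_n)C_{\text{extra},n}$ for $t\ge T$; inside, for $n\ge1$ the symmetry of the $n$ interior zeros guarantees a zero $z\in[0,T)$ (at $0$ for odd $n$), so for $t\in[0,T]$ the segment joining $t$ to $z$ lies in $[0,T]$ and Cauchy--Schwarz over $[0,T]$ with the halved energies $\gamma_n/2$ and $\gamma_nC_{\text{intra},n}^2/2$ yields $\gamma_nC_{\text{intra},n}$; for $n=0$ anchoring at $t=T$ over $[0,T]$ gives $\prlt_0(T)^2+\gamma_0C_{\text{intra},0}=\gamma_0\tilde C_{\text{intra},0}$, exactly the $\delta_{n0}$ correction. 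With this parity bookkeeping made explicit your argument reproduces the theorem; without it, the claimed inequalities are not actually established.
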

The prefactors satisfy the following inequalities:
\begin{itemize}
    \item For $n \leq  \lfloor 2WT/\pi \rfloor -1$  \begin{align} C_{\text{extra},n} \leq \left(C_n - \frac{\lambda_n}{2T}\right) \quad \text{ and } \quad C_{\text{intra},n} < \bw,  \label{eq:lambda_neg_C_estimate}
    \end{align}
    \item For $n \geq \lceil2WT/\pi \rceil$ \begin{align} C_{\text{extra},n} < \bw \quad \text{ and } \quad C_{\text{intra},n} \leq \left(C_n + \frac{\lambda_n}{2T}\right),
     \end{align}
\end{itemize}
where $C_n = \sqrt{\|\prlt_n'\|_{\infty}^2 + \frac{\lambda_n^2}{4T^2}}$ and we have the bound $\|\prlt_n'\|_\infty < \bw$.
$C_{\text{extra},n}$ is a function of $W$ and $T$. Sometimes we make this explicit by writing $C_{\text{extra},n} \equiv C_{\text{extra},n}(W,T)$. 
For $n \leq  \lfloor 2WT/\pi \rfloor -1$, we can estimate
\begin{align}
    C_{\text{extra},n} & < \bw \left( \sqrt{ 1 + \frac{c^2}{4}} + \frac{c}{2} \right)\\
    & = \bw \left( c +\frac{1}{c} + \mathcal{O} \left(\frac{1}{c^3} \right) \right).
\end{align}
In particular, $C_{\text{extra},n}$ grows much more slowly with $c$ compared to the exponential decay of $1-\gamma_n(c)$ described in  Eq.\ \eqref{eq:gammaexpansion}. Therefore, the bounds $(1-\gamma_n)C_{\text{extra},n}$  and $(1-\gamma_n)C_{\text{extra},n}^2 $ inherit from $1-\gamma_n$ the the characterization as an approximation parameter exhibiting a sharp transition as $n$ approaches $2WT/\pi$.

\subsection{A dual in the spectrum of an integral and a differential operator}
\label{sec:dual_signature}

A surprising commutation relation between an integral operator and a differential operator played a significant role in the development of PFT, its generalizations, and its numerical applications \cite{ProI, ProIV, ProV}. Slepian famously described this commutation relation as a "lucky accident" \cite{SlepianComment}, and it later inspired the development of bispectral theory \cite{Grunbaum2004ThePS, grunbaum2022matrixbispectralitynoncommutativealgebras, casper2021algebrascommutingdifferentialoperators, casper2024matrixvalueddiscretecontinuousfunctions}.
The bispectrality of the prolate spheroidal wave functions has indeed led to unexpected developments in recent years \cite{UVSpectrum}.
However, the geometry behind the commutation relation still appears mysterious. On this note, we point to a dualism in the spectrum of the integral operator and differential operator, which seems to have been unnoticed.

The spectral signature due to Landau is of conceptual depth for PFT:
\begin{align}
    \gamma_{\lfloor \tilde c \rfloor - 1}(c) \geq 1/2 \geq \gamma_{\lceil \tilde c \rceil}(c) 
\end{align}
Here, we notice that the spectrum of the prolate spheroidal integral operator exhibits a similar signature:
\begin{align}
    \lambda_{\lfloor \tilde c \rfloor -1} ( c) \leq -1 \quad \text{ and } \quad 0 \leq \lambda_{\lceil \tilde c \rceil} (c)   \label{eq:diff_signatur_and}
\end{align}
A similar statement has, in fact, already been proved in Ref. \cite[Theorem 3.1]{osipov_analysis_2013}:
\begin{align}
    \lambda_{\lfloor \tilde c \rfloor -1} ( c) \leq  0 \leq \lambda_{\lceil \tilde c \rceil} (c)  \label{eq:osipov}
\end{align}
However, the duality between this spectral signature and that of the integral operator appears to have remained unnoticed in Ref.~\cite{osipov_analysis_2013}.
The proof in Ref.\ \cite{osipov_analysis_2013} is based on the Pr\"ufer transformation and rather technical. We include here a simpler proof for the slightly stronger statement given in the lower inequality of Eq.\ \eqref{eq:diff_signatur_and}.

For the sake of simplicity, we set $T=1 ; W=c$ and consider
the angular part of the prolate spheroidal wave equation as a differential operator:
\begin{align}
    H_c = - \frac{d}{d x}\left[\left(1-x^2\right) \frac{d}{d x}\right]-c^2\left(1-x^2\right)
\end{align}
For $\varphi \in Q(H_c)$, where  $Q(H_c)$ denotes the form domain, we have
\begin{align}
    (\varphi, H_c \varphi) = \int_{-1}^1 \left(1-x^2\right) \left( |\varphi'(x)|^2 - c^2|\varphi(x)|^2 \right) dx.
\end{align}
The form domain of $H_c$ is given by $C^1[-1,1]$.
We prove the left inequality of Eq.\ \eqref{eq:diff_signatur_and} through the variational principle \cite[Theorem XIII.2, page 76]{reed_iv_1978}: 
\begin{lemma}
\label{lem:varPrin}
Let $H$ be a self-adjoint operator on a Hilbert space, bounded from below and with a purely discrete spectrum.
Assume its eigenvalues are indexed in increasing order:
\begin{align}
    \lambda_0 \leq \lambda_1 \leq \lambda_2 \leq \cdots 
\end{align}
Then, its $n$-th eigenvalue satisfies
    \begin{align*}
        \lambda_n = \inf_{S_{n+1} \subset  Q(H)} \sup_{\psi \in S_n } \frac{(\psi, H \psi)}{(\psi, \psi)}
    \end{align*}
    where $Q(H)$ is the form domain of $H$, and the infimum is taken over all $(n+1)$‑dimensional subspaces of $Q(H)$.
\end{lemma}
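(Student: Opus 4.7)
The plan is to prove the min-max principle by invoking the spectral theorem to reduce everything to eigenvector coordinates, and then by establishing the two inequalities separately via dimension-counting. Because $H$ is self-adjoint, bounded below, and has purely discrete spectrum, there exists an orthonormal basis $\{e_k\}_{k\geq 0}$ of the ambient Hilbert space with $H e_k = \lambda_k e_k$, and the form domain $Q(H)$ can be characterised as those $\psi = \sum_k c_k e_k$ with $\sum_k (1+\lambda_k^+)\,|c_k|^2 < \infty$ (equivalently, $\psi \in D(|H|^{1/2})$). In particular, $e_0,\dots,e_n \in Q(H)$, so it is legitimate to build test subspaces from eigenvectors.

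First I would show the upper bound $\lambda_n \geq \inf_{S_{n+1}} \sup_{\psi \in S_{n+1}} (\psi, H\psi)/(\psi,\psi)$ by exhibiting a single competitor. Take $S_{n+1}^\star = \mathrm{span}\{e_0,\dots,e_n\} \subset Q(H)$. For $\psi = \sum_{k=0}^n c_k e_k$ in this subspace, the quadratic form evaluates as $(\psi, H\psi) = \sum_{k=0}^n \lambda_k |c_k|^2 \leq \lambda_n \sum_{k=0}^n |c_k|^2 = \lambda_n (\psi,\psi)$, with equality when $\psi = e_n$. Hence $\sup_{\psi \in S_{n+1}^\star} (\psi, H\psi)/(\psi,\psi) = \lambda_n$, and the infimum over all $(n+1)$-dimensional subspaces is at most $\lambda_n$.

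For the matching lower bound, I would use a linear-algebra dimension argument. Let $S_{n+1} \subset Q(H)$ be any $(n+1)$-dimensional subspace, and let $M_n := \mathrm{span}\{e_0,\dots,e_{n-1}\}^{\perp} = \overline{\mathrm{span}\{e_k : k \geq n\}}$, a closed subspace of codimension $n$. Then $\dim(S_{n+1} \cap M_n) \geq (n+1) - n = 1$, so there exists a nonzero $\psi \in S_{n+1}$ with expansion $\psi = \sum_{k \geq n} c_k e_k$. For such $\psi$, $(\psi, H\psi) = \sum_{k \geq n} \lambda_k |c_k|^2 \geq \lambda_n (\psi,\psi)$, so $\sup_{\psi \in S_{n+1}} (\psi, H\psi)/(\psi,\psi) \geq \lambda_n$. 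Taking the infimum over $S_{n+1}$ preserves this inequality, giving the matching bound. Combining the two estimates yields the claimed identity.

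The only real subtlety — and hence the main obstacle — is justifying that $\psi \mapsto (\psi, H\psi)$ makes sense on all of $S_{n+1}$ and that the expansion $\psi = \sum c_k e_k$ can be multiplied term-by-term by $\lambda_k$ inside the quadratic form. This is where the assumption $S_{n+1} \subset Q(H)$ is essential: on the form domain, the spectral theorem gives $(\psi, H\psi) = \int \lambda\, d(\psi, P^{(H)}(\lambda)\psi) = \sum_k \lambda_k |c_k|^2$ as an absolutely convergent sum (for $H$ bounded below, the negative part is a finite sum). Everything else is bookkeeping in the eigenbasis, and discreteness of the spectrum removes the technicalities that arise in the general Courant–Fischer statement.
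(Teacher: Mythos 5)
Your proof is correct. Note, however, that the paper does not actually prove this lemma: it is quoted as the standard variational (min--max) principle with a citation to Reed and Simon, Vol.~IV, Theorem~XIII.2, and is then only \emph{used} to establish the inequality $\lambda_{\lfloor 2c/\pi\rfloor-1}(c)\leq -1$ via an explicit trial subspace of compactly supported cosine/sine pieces. So your contribution is a self-contained proof of the cited tool rather than an alternative to an argument in the paper. Your route is the classical Courant--Fischer argument specialized to the purely discrete, semibounded case: expand in the complete eigenbasis furnished by the spectral theorem, get the upper bound from the competitor $\mathrm{span}\{e_0,\dots,e_n\}$, and get the lower bound by intersecting an arbitrary $(n+1)$-dimensional $S_{n+1}\subset Q(H)$ with the closed codimension-$n$ subspace $\overline{\mathrm{span}\{e_k: k\geq n\}}$. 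The dimension count, the identification $Q(H)=D(|H|^{1/2})$, and the representation of the form as the absolutely convergent sum $\sum_k\lambda_k|c_k|^2$ (only finitely many negative terms, since the spectrum is discrete and bounded below) are exactly the points that need care, and you address them. Compared with the Reed--Simon statement, which is formulated to cover operators with essential spectrum (the min--max values may saturate at the bottom of the essential spectrum), your argument is more elementary but less general, which is entirely adequate here since the lemma assumes purely discrete spectrum. Two cosmetic points: the supremum should be taken over nonzero $\psi\in S_{n+1}$ (the paper's ``$\psi\in S_n$'' is a typo you implicitly corrected), and in the lower bound you should state explicitly that the chosen $\psi$ lies in $Q(H)$ because it lies in $S_{n+1}$, so the form evaluation is legitimate --- which you do.
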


\begin{proof}[\textbf{Proof of $\lambda_{\lfloor \tilde c \rfloor -1}(c) \leq -1$}]
Assume $\lfloor  \frac{2 c}{\pi}\rfloor$ is odd. 
For $k$ odd we have
\begin{align*}
    \int_{k\pi/ 2c}^{(k+2)\pi / 2 c} (1-x^2)(\sin^2(cx) - \cos^2(cx)) dx = -\frac{\pi}{2 c^3}.
\end{align*}
Define 
\begin{align*}
    \varphi_k(x) &= 
    \begin{cases} 
        \sqrt{\frac{2c }{\pi}} \cos(c x) \quad & \text{for } \quad x \in \left[\frac{k\pi}{2c}, \frac{(k+2)\pi}{2c}\right], \\
        0 & \text{else}.
    \end{cases}
\end{align*}
The index $k$ runs through all odd integers. There are $2\lfloor  \frac{c}{\pi} -\frac{1}{2} \rfloor  +1$ such functions $\varphi_k$ with support contained in $[-1,1]$.
For  $\lfloor  \frac{2 c}{\pi}\rfloor$ odd holds $\lfloor  \frac{2 c}{\pi}\rfloor = 2\lfloor  \frac{c}{\pi} -\frac{1}{2} \rfloor  +1 $. 
By construction, $\{\varphi_k\}$ forms an orthonormal system in $\mathscr{L}^2[-1,1]$.
Let $\psi = \sum_{k} a_k \varphi_k$ with $\sum_k |a_k|^2 = 1$. Evidently $\psi \in C^1[-1,1] = Q(H_c)$.
In particular, 
\begin{align}
    (\Psi, H_c \Psi) &= \sum_k |a_k|^2 \frac{2 c^3}{\pi} \int_{k\pi/ 2c}^{(k+2)\pi / 2 c} (1-x^2)(\sin^2(cx) - \cos^2(cx)) dx = - 1.
\end{align}
By Lemma \ref{lem:varPrin} we have therefore shown that 
\begin{align}
    \lambda_{\lfloor  \frac{2 c}{\pi}\rfloor -1} (c) \leq -1 .   \label{eq:lower_bound}
\end{align}
For the case where $\lfloor  \frac{2 c}{\pi}\rfloor$ is even we have $\lfloor  \frac{2 c}{\pi}\rfloor = 2 \lfloor  \frac{ c}{\pi}\rfloor $. For $k$ even 
\begin{align}
    \int_{k\pi/ 2c}^{(k+2)\pi / 2 c} (1-x^2)(\cos^2(cx) - \sin^2(cx)) dx = -\frac{\pi}{2 c^3},
\end{align}
and we can construct analogously a subspace spanned by sine functions supported on mutually disjoint intervals. Hence, inequality \eqref{eq:lower_bound} holds for all $c > 0$.
\end{proof}

\section{Prolate sampling formula}
\label{sec:prolate_sampling}

The $2WT$ theorem is a milestone in mathematical electrical engineering, as it provides a rigorous foundation for the long-standing \emph{engineering folk theorem}, which originated in the early development of communication theory.
This folk theorem on the dimensionality of signals emerged from sampling principles such as the Whittaker–Shannon interpolation formula \cite{Nyquist1928, NoiseCommunication, SlepianComment}.

The Whittaker–Shannon formula reconstructs a signal of bandwidth $W$ from uniform samples taken at rate $W/\pi$. In particular, it assumes $2WT/\pi$ samples within $[-T,T]$. This has led to the intuition that $2WT/\pi$ independent measurements should suffice to accurately reconstruct a signal of bandwidth and approximate duration $2T$. However, the Whittaker-Shannon interpolation formula does not succeed in proving this statement, as it converges very slowly and also requires a lot of samples outside of the interval of interest.

Only the $2WT/\pi$ theorem ultimately established that approximately $2WT/\pi$ degrees of freedom suffice to accurately represent a signal.
However, strictly speaking, the $2WT$-Theorem does not fully reflect the original context of the engineering folk theorem, which is rooted in representations using equidistant samples. In practice, such equidistant sampling remains particularly relevant due to its simplicity and implementational convenience. 
We highlight that PFT also enables us to give a sampling-based formulation of the engineering folk theorem.
This result is formalized in Theorem~\ref{thm:SamplingEngFolk}:

\begin{theorem}
    \label{thm:SamplingEngFolk}
    Consider a band limited function $f \in \BL_\bw$ in the time domain $[-T,T]$. Let $\{\prlt_n\}$ be the sequence of $\bw T $-prolates 
    and $c= T\bw/\pi$. We denote the truncated sampling series $f_{N}(t)$ defined as,
    \begin{align}
        f_{N}(t) = \frac{\pi}{\bw} \sum_{n=0}^{N-1}  \sum_{k= -\lfloor T\bw/ \pi\rfloor }^{\lfloor T\bw/ \pi \rfloor} f\left(\frac{k \pi}{\bw}\right) \prlt_n\left(\frac{k \pi}{\bw}\right) \prlt_n(t). \label{eq:trunc_prlt_sampling}
    \end{align}
    The sampling series $f_{N}(t)$ approximates $f$ within the interval $[-T,T]$ with a precision guarantee of,
    \begin{align}
        \|f -f_{N}\|_{T}^2 \leq &   \left( \|f \|_{>T}^2+ 2 \frac{\pi}{\bw}\|f \|_{>T}\|f'\|_{>T} \right) \sum_{n=0}^{N-1} \gamma_n(1-\gamma_n) C_{n} \\
        & +  \frac{ \|f\|_{>T}^2 }{1 - \gamma_N} 
    \end{align}
    Here, $C_n = (1+2 \frac{\pi}{\bw}C_{\text{extra},n})$ with $C_{\text{extra},n}$ as in Theorem \ref{thm:prlt_bound}.
\end{theorem}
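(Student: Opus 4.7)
The plan is to split $f-f_N$ on $[-T,T]$ into a sampling-quadrature error in the first $N$ prolate coefficients plus a subspace-truncation tail from dropping all prolates of index $\ge N$, and to bound each piece separately using the concentration estimates of Theorem~\ref{thm:prlt_bound} together with the 2WT Theorem~\ref{thm:2WT}. Writing the $L^2(\mathbb{R})$ prolate expansion $f=\sum_n a_n \prlt_n$ with $a_n=\int f\prlt_n\,dt$, and denoting the truncated-sample coefficients $\tilde a_n := (\pi/\bw)\sum_{|k|\le K} f(k\pi/\bw)\prlt_n(k\pi/\bw)$ with $K=\lfloor T\bw/\pi\rfloor$, the double orthogonality $\int_{-T}^T \prlt_n\prlt_m\,dt = \gamma_n\delta_{nm}$ together with the disjointness of the index sets $\{0,\dots,N-1\}$ and $\{N,N+1,\dots\}$ yields
\begin{align*}
\|f-f_N\|_T^2 \;=\; \sum_{n=0}^{N-1}\gamma_n\,|a_n-\tilde a_n|^2 \;+\; \sum_{n\ge N}\gamma_n\,|a_n|^2.
\end{align*}

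For the tail term, I would combine $\|f\|^2=\sum_n|a_n|^2$ with $\|f\|_T^2=\sum_n\gamma_n|a_n|^2$ to obtain the Parseval-type identity $\|f\|_{>T}^2=\sum_n(1-\gamma_n)|a_n|^2$. Because $\gamma_n$ is decreasing and $\gamma_n\le 1$, this immediately gives $\sum_{n\ge N}\gamma_n|a_n|^2\le \|f\|_{>T}^2/(1-\gamma_N)$, matching the last summand of the claimed bound. This step is essentially the content of Theorem~\ref{thm:2WT} specialized to the expansion coefficients $a_n$.

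For the sampling-error term, I would invoke the bandlimited Parseval identity $\int fg\,dt=(\pi/\bw)\sum_k f(k\pi/\bw)g(k\pi/\bw)$ (valid since $f,\prlt_n\in\BLs_\bw$) to rewrite $a_n-\tilde a_n=(\pi/\bw)\sum_{|k|>K}f_k p_{n,k}$ and apply Cauchy-Schwarz to factorize $|a_n-\tilde a_n|^2$ into a tail of $\{f_k^2\}$ times a tail of $\{p_{n,k}^2\}$. To convert each discrete tail into an $L^2$-integral on $\{|t|\ge T\}$ with constants that match the statement, the key ingredient is the local identity $g(s)^2=g(t)^2+2\int_t^s g g'\,du$: averaging $t$ over a disjoint length-$\pi/\bw$ interval $I_k\ni k\pi/\bw$ contained in $\{|t|\ge T\}$ and using Cauchy-Schwarz produces
\begin{align*}
\tfrac{\pi}{\bw}\sum_{|k|>K}|g(k\pi/\bw)|^2 \;\le\; \|g\|_{>T}^2 \;+\; 2\tfrac{\pi}{\bw}\,\|g\|_{>T}\|g'\|_{>T}.
\end{align*}
Applied to $g=f$ this yields the first factor $\|f\|_{>T}^2+2(\pi/\bw)\|f\|_{>T}\|f'\|_{>T}$ of the theorem; applied to $g=\prlt_n$, together with the concentration identities $\|\prlt_n\|_{>T}^2=1-\gamma_n$ and $\|\prlt_n'\|_{>T}^2=(1-\gamma_n)C_{\text{extra},n}^2$ from Theorem~\ref{thm:prlt_bound}, the second factor collapses exactly to $(1-\gamma_n)(1+2(\pi/\bw)C_{\text{extra},n})=(1-\gamma_n)C_n$. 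Multiplying by $\gamma_n$, summing over $n<N$, and adding the tail bound completes the proof.

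The main obstacle is matching constants to obtain precisely the linear (rather than quadratic) dependence on $C_{\text{extra},n}$ inside $C_n$ and the absence of a residual $(\pi/\bw)^2\|f'\|_{>T}^2$ contribution. A naïve fundamental-theorem-of-calculus bound $|g(s)|\le |g(t)|+\sqrt{\pi/\bw}\,\|g'\|_{I_k}$ followed by squaring leaves an extra $(\pi/\bw)^2\|g'\|_{>T}^2$ term and upgrades the prolate factor to $(1-\gamma_n)(1+(\pi/\bw)C_{\text{extra},n})^2$, which does not match the theorem. Using instead the sharper identity $g(s)^2=g(t)^2+2\int_t^s gg'\,du$ retains only a single cross term $\|g\|_{I_k}\|g'\|_{I_k}$ per interval, and a second Cauchy-Schwarz across intervals produces exactly the advertised $\|g\|_{>T}\|g'\|_{>T}$ structure. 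A minor technical point is the choice of disjoint $I_k\subset\{|t|\ge T\}$ for all $|k|>K$, which can be arranged by a suitable $O(\pi/\bw)$ shift of the centering points (or absorbed into a slightly enlarged outer norm) without affecting the asymptotic form of the bound.
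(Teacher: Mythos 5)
Your proposal is correct, and it follows the route the paper indicates for this result (the detailed proof is deferred to Ref.~\cite[Chapter~3]{stroschein2024prolatespheroidalwavefunctions}): split $\|f-f_N\|_T^2$ via the double orthogonality into sampling errors of the first $N$ prolate coefficients plus the out-of-$\operatorname{PSWF}_N$ tail, bound the tail by $\|f\|_{>T}^2/(1-\gamma_N)$ as in the $2WT$ theorem, and control the coefficient errors through the bandlimited quadrature identity, Cauchy--Schwarz, and the concentration identities $\|\prlt_n\|_{>T}^2=1-\gamma_n$, $\|\prlt_n'\|_{>T}^2=(1-\gamma_n)C_{\text{extra},n}^2$ of Theorem~\ref{thm:prlt_bound}. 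Your use of the identity $g(s)^2=g(t)^2+2\int_t^s g\,g'\,du$ (rather than a squared fundamental-theorem bound) is exactly what reproduces the linear dependence on $C_{\text{extra},n}$ in $C_n=1+2\tfrac{\pi}{\bw}C_{\text{extra},n}$ and the cross-term $2\tfrac{\pi}{\bw}\|f\|_{>T}\|f'\|_{>T}$, so the constants match the statement.
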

Theorem~\ref{thm:SamplingEngFolk} establishes that $2WT/\pi$ equidistant sample points suffice to reconstruct a signal of bandwidth $W$ with high accuracy, provided the signal is 
sufficiently concentrated within the sampled interval. A proof is given in Ref.~\cite[Chapter 3]{stroschein2024prolatespheroidalwavefunctions}.

Theorem~\ref{thm:SamplingEngFolk} builds upon the prolate sampling formula that was introduced in Ref.\ \cite{walter_sampling_2003}. However, the truncation estimate of Theorem \ref{thm:SamplingEngFolk} is sharper than in Ref.\cite{walter_sampling_2003}, as it accesses the bounds of Theorem \ref{thm:prlt_bound}.

To formulate a sampling result more closely aligned with the original engineering folk theorem, we also made a modification in the assumptions. Ref.\ \cite{walter_sampling_2003} as well as Ref.\ \cite[Chapter 6]{hogan_duration_2012} consider functions $f$ lying in the subspace spanned by leading prolates, $\operatorname{PSWF}_N = \{ \prlt_0, \cdots , \prlt_{N-1}\}$.
Theorem \ref{thm:SamplingEngFolk} focuses instead on the space of band-limited and time-concentrated functions
$E(\epsilon_T)$, which we regard as a more natural model for real-world signals. A direct application of the $2WT$ theorem, as stated in Theorem \ref{thm:2WT}, allows us to bound 
the error in the sampling error due to contributions outside of $\operatorname{PSWF}_N$.

\end{appendix}

%

\end{document}